\newcites{app}{Additional References}
\def\fakemarginpar{%
  \ifhmode
    \@bsphack
    \@floatpenalty -\@Mii
  \else
    \@floatpenalty-\@Miii
  \fi
  \ifinner
    \@parmoderr
    \@floatpenalty\z@
  \else
    \@next\@currbox\@freelist{}{}%
    \@next\@marbox\@freelist{\global\count\@marbox\m@ne}%
       {\@floatpenalty\z@
        \@fltovf\def\@currbox{\@tempboxa}\def\@marbox{\@tempboxa}}%
  \fi
  \@ifnextchar [\@xmpar\@ympar}
\let\oldparagraph\paragraph
\renewcommand{\paragraph}[1]{\oldparagraph*{\textbf{#1}}}
\newcolumntype{L}{>{\raggedright\arraybackslash}X}
\newcolumntype{C}{>{\centering\arraybackslash}X}
\newcolumntype{R}{>{\raggedleft\arraybackslash}X}
\newcommand{\ignore}[1]{}
\newcommand{\Z}{\mathbb{Z}}
\newcommand{\N}{\mathbb{N}}
\newcommand{\RR}{\mathbb{R}}
\newcommand{\C}{\mathbb{C}}
\newcommand{\cgal}{\textsc{Cgal}\xspace}
\newcommand{\gmp}{\textsc{Gmp}\xspace}
\newcommand{\ntl}{\textsc{Ntl}\xspace}
\newcommand{\rs}{\textsc{Rs}\xspace}
\newcommand{\isolate}{\textsc{Isolate}\xspace}
\newcommand{\lgp}{\textsc{Lgp}\xspace}
\newcommand{\res}{\operatorname{res}}
\newcommand{\bdesc}{$\textsc{Bdc}$\xspace}
\newcommand{\bs}{\textsc{Bisolve}\xspace}
\newcommand{\bsproject}{\textsc{BiProject}\xspace}
\newcommand{\bsseparate}{\textsc{Separate}\xspace}
\newcommand{\bsvalidate}{\textsc{Validate}\xspace}
\newcommand{\ca}{\textsc{GeoTop}\xspace}
\newcommand{\slowca}{\textsc{GeoTop-BS}\xspace}
\newcommand{\fastca}{\textsc{Top-NT}\xspace}
\newcommand{\caproject}{\textsc{Project}\xspace}
\newcommand{\calift}{\textsc{Lift}\xspace}
\newcommand{\fastlift}{\textsc{Lift-NT}\xspace}
\newcommand{\slowlift}{\textsc{Lift-BS}\xspace}
\newcommand{\caconnect}{\textsc{Connect}\xspace}
\newcommand{\kernelnt}{\textsc{\ca{}AK\_2}\xspace}
\newcommand{\etal}{et~al.\xspace}
\numberwithin{equation}{section}
\numberwithin{figure}{section}
\providecommand{\DeclareCaptionType}[1]{\relax}
\newtheorem{theorem}{Theorem}
\newtheorem{lemma}{Lemma}
\newtheorem{remark}{Remark}
\newtheorem*{remark*}{Remark}
\begin{document}

\begin{frontmatter}

\title{Exact Symbolic-Numeric Computation of Planar Algebraic Curves}

\author{Eric Berberich\corref{cor1}}
\ead{eric@mpi-inf.mpg.de}

\author{Pavel Emeliyanenko\corref{}}
\ead{asm@mpi-inf.mpg.de}

\author{Alexander Kobel\corref{}}
\ead{akobel@mpi-inf.mpg.de}

\author{Michael Sagraloff\corref{cor2}}
\ead{msagralo@mpi-inf.mpg.de}

\address{Max-Planck-Institut f\"ur Informatik, Campus E1 4, D-66123 Saarbr\"ucken, Germany}

\cortext[cor1]{Principal corresponding author: Tel +49~681~9325~1012, Fax +49~681~9325~1099} 
\cortext[cor2]{Corresponding author: Tel +49~681~9325~1006, Fax +49~681~9325~1099}

\begin{abstract}
We present a novel \emph{certified and complete algorithm to compute 
arrangements of real planar algebraic curves}. It provides a 
geometric-topological analysis of the decomposition of the plane induced 
by a finite number of algebraic curves in terms of a cylindrical algebraic 
decomposition. From a high-level perspective, the overall method splits into two main subroutines, namely an algorithm denoted \bs to isolate the real solutions of a zero-dimensional bivariate system, and an algorithm denoted \ca to analyze a single algebraic curve.

Compared to existing approaches based on elimination techniques, we considerably improve the corresponding lifting steps in both subroutines. As a result, generic position of the input system is never assumed, and thus our algorithm never demands for any change of coordinates.
In addition, we significantly limit the types of involved exact operations, 
that is, we only use resultant and $\gcd$ computations 
as purely symbolic operations. The latter results are achieved by combining techniques from different fields such as (modular) symbolic computation, numerical analysis and algebraic geometry.

We have implemented our algorithms as prototypical contributions to the 
C++-project \cgal. They exploit graphics hardware 
to expedite the symbolic computations. 
We have also compared our implementation with the current 
reference implementations, that is, 
\lgp and Maple's \isolate for polynomial system solving, and 
\cgal's bivariate algebraic kernel for analyses and arrangement 
computations of algebraic curves.
For various series of challenging instances, our exhaustive  
experiments show that the new implementations outperform the existing ones.
\end{abstract}

\begin{keyword}
algebraic curves, arrangement, polynomial systems, numerical solver, hybrid methods, symbolic-numeric algorithms, exact computation
\end{keyword}

\end{frontmatter}


\section{Introduction}
\label{sec:introduction} 

Computing the topology of a planar algebraic curve 
\begin{align}
C=V(f)=\{(x,y)\in\mathbb{R}^2:f(x,y)=0\}\label{def:curve}
\end{align}
can be considered as one of the fundamental problems in real algebraic geometry with numerous applications in computational geometry, computer graphics and computer aided geometric design. Typically, the topology of~$C$ is given in terms of a planar graph $\mathcal{G}_{C}$ embedded in $\mathbb{R}^2$ that is isotopic to~$C$.\footnote{$\mathcal{G}_C$ is isotopic to $C$ if there
exists a continuous mapping $\phi:[0,1]\times C\mapsto \RR^2$ with
$\phi(0,C)=C$, $\phi(1,C)=\mathcal{G}_C$ and $\phi(t_0,.):C\mapsto
\phi(t_0,C)$ a homeomorphism for each $t_0\in[0,1]$.} For a geometric-topological analysis, we further require the vertices of $\mathcal{G}_{C}$ to be located on~$C$. In this paper, we study the more general problem of computing an arrangement of a finite set of algebraic curves, that is, the decomposition of the plane into cells of dimensions $0$, $1$ and $2$ induced by the given curves. The proposed algorithm is \emph{certified} and \emph{complete}, and the overall arrangement computation is exclusively carried out in the initial coordinate system. Efficiency of our approach is shown by implementing our algorithm based on the current reference implementation within \cgal\footnote{Computational Geometry Algorithms Library,
\url{www.cgal.org}; see also \url{http://exacus.mpi-inf.mpg.de/cgi-bin/xalci.cgi} for an online demo on arrangement computation.} (see also~\cite{eigenwilligk08,cgal:wfzh-a2-11b}) and comparing it to the most efficient implementations which are currently available.

From a high-level perspective, we follow the same approach as in~\cite{eigenwilligk08,cgal:wfzh-a2-11b}. That is, the arrangement computation is reduced to the geometric-topological analysis of single curves and of pairs of curves. The main contribution of this paper is to provide novel solutions for the basic subtasks needed by these analysis, that is, \emph{isolating the real solutions of a bivariate polynomial system} (\bs) and \emph{computing the topology of a single algebraic curve} (\ca).\\

\noindent\bs: For a given \emph{zero-dimensional} polynomial system $f(x,y)=g(x,y)=0$ (i.e.~there exist only finitely many solutions), with $f,g\in\Z[x,y]$, the algorithm computes disjoint boxes $B_1,\ldots,B_m\subset\RR^2$ for
\emph{all real
solutions}, where each box $B_i$ contains exactly one
solution (i.e.~$B_i$ is isolating). In addition, the boxes can be refined
to an arbitrary small size. \bs is a classical 
elimination method which follows the same basic idea as the \textsc{Grid} method from~\cite{det-asymptotic} for solving a bivariate polynomial system, or the \textsc{Insulate} method from~\cite{sw-exact} for computing the topology of a planar algebraic curve.\footnote{For the analysis of a planar curve $C=\{(x,y)\in\RR^2:f(x,y)=0$\}, it is crucial to find the solutions of $f=f_y=0$. The method in~\cite{sw-exact} uses several projection directions to find these solutions.} Namely, all of them consider several projection directions to derive a set of candidates of possible solutions and eventually identify those candidates which are actually solutions. 

More precisely, we separately eliminate the variables $x$ and $y$ by
means of resultant computations. Then, for each possible candidate (represented as a pair of projected solutions in $x$- and $y$-direction), we check whether it actually constitutes a solution of the given system or not. The
proposed method comes with a number of improvements compared to the aforementioned approaches and also to other existing elimination techniques~\cite{eigenwilligk08,abrw-zeros,khfta-solvingsystems,r-rur,r-rsicms-2010}. 
First, we considerably reduce the amount of purely symbolic computations, 
namely, our method only demands for resultant computation of bivariate polynomials and gcd computation
of univariate polynomials.
Second, our implementation profits from a novel
approach~\cite{emel-pasco-10,emel-ica3pp-10,emel-gcd-11} to compute
resultants and gcds exploiting the power of Graphics Processing Units (GPUs). Here, it is important to remark that, in comparison to the classical resultant computation on the
CPU, the GPU implementation is typically more than $100$-times faster. Our
experiments show that, for the huge variety of considered instances, the symbolic 
computations are no longer a ``global'' bottleneck of an elimination
approach.
Third, the proposed method never uses any kind of a coordinate 
transformation, even for non-generic input.\footnote{The 
system $f=g=0$ is non-generic if there exist two solutions sharing a 
common coordinate.} The latter fact is due to a novel inclusion predicate which 
combines information from the resultant computation and a homotopy 
argument to prove that a certain candidate box is isolating for a solution. 
Since we never apply any change of coordinates, our method
particularly profits in the case where $f$ and
$g$ are sparse, or where we are only interested in solutions
within a given ``local'' box. Finally, we integrated a series of additional
filtering techniques which allow us to considerably speed up the computation 
for the majority of instances.\\ 

\ca: There exist a number of certified and complete approaches to determine the topology of an algebraic curve; we refer the reader to~\cite{cheng.lazard.ea:on,eigenwilligkw07,gn-efficient,kerber-phd,LuisPhD2010} for recent work and further references. At present, only the method from~\cite{eigenwilligkw07} has been extended to arrangement computations of arbitrary algebraic curves~\cite{eigenwilligk08}. Common to all of these approaches is that, in essence, they consider the following three phases:\vspace{0.2cm}
\begin{compactenum}
\item \emph{Projection}: Elimination techniques (e.g.~resultants) are used
to project the \emph{$x$-critical points} (i.e.~points $p$ on the (complex) curve $C=\{(x,y)\in\C^2:f(x,y)=0\}$ with $f_y(p)=0$) of the curve into one dimension. The so obtained projections are called \emph{$x$-critical values}.
\item \emph{Lifting}: For all real $x$-critical values $\alpha$ (as well as for real values in between), we compute the \emph{fiber}, that is, all intersection points of $C$ with a corresponding vertical line $x=\alpha$. 
\item \emph{Connection} (in the analysis of a single curve): The so obtained points are connected by straight line edges in an appropriate manner.\vspace{0.2cm}
\end{compactenum}
In general, the lifting step at an $x$-critical value $\alpha$ has turned out to be the most time-consuming part because it amounts to determining the real roots of a non square-free univariate polynomial $f_{\alpha}(y):=f(\alpha,y)\in\mathbb{R}[y]$ with algebraic coefficients. In all existing approaches, the high computational cost for computing the roots of $f_{\alpha}$ is mainly due to a more comprehensive algebraic machinery such as the computation of subresultants (in~\cite{eigenwilligk08,eigenwilligkw07,gn-efficient}), Gr\"obner basis or a rational univariate representation (in~\cite{cheng.lazard.ea:on}) in order to obtain additional information on the number of distinct real (or complex) roots of $f_{\alpha}$, or the multiplicities of the multiple roots of $f_{\alpha}$. In addition, all except the method from~\cite{cheng.lazard.ea:on} consider a shearing of the curve which guarantees that the sheared curve has no two $x$-critical points sharing the same $x$-coordinate. This, in turn, simplifies the lifting as well as the connection step but for the price of giving up sparseness of the initial input. It turns out that considering such an initial coordinate transformation typically yields larger bitsizes of the coefficients and considerably increased running times; see also~\cite{LuisPhD2010} for extensive experiments. 

For \ca, we achieved several improvements in the lifting step. Namely, as in the algorithm \bs, we managed to reduce the amount of purely symbolic computations, that is, we only use resultants and $\gcd$s, where both computations are outsourced again to graphics hardware. 
Furthermore, based on a result from Teissier~\cite{Gwozdziewicz00formulaefor,Teissier} which relates the intersection multiplicities of the curves $f$, $f_x$ and $f_y$, and the multiplicity of a root of $f_{\alpha}$, we derive additional information about the number $n_{\alpha}$ of distinct complex roots of $f_{\alpha}$. In fact, we compute an upper bound $n_{\alpha}^+$ which matches $n_{\alpha}$ except in the case where the curve $C$ is in a very special geometric location. In the lifting phase, we then combine the information about the number of distinct roots of $f_{\alpha}$ with a certified numerical complex root solver~\cite{Kobel11} to isolate the roots of $f_{\alpha}$. The latter symbolic-numeric step applies as an efficient filter denoted \fastlift that is effective in 
almost all cases. In case of a rare failure (due to a special geometric configuration), we fall back to a complete method
\slowlift which is based on \bs. In addition, we also provide a simple test based on a single modular computation only to detect (in advance) special configurations, where \fastlift may fail.
Considering a generic coordinate transformation, it can be further proven that \fastlift generally succeeds. We remark that the latter result is more of theoretical interest since our experiments hint to the fact that combining \fastlift and \slowlift typically yields better running times than \fastlift on its own using an additional shearing.\\

\paragraph{Experiments} We implemented \ca in a topic branch of  
\cgal. Our implementation uses the combinatorial framework of the existing bivariate algebraic kernel (\textsc{Ak\_2} for short) which is based on the algorithms from~\cite{eigenwilligk08,eigenwilligkw07}.
Intensive benchmarks~\cite{eigenwilligkw07,LuisPhD2010} have shown that \textsc{Ak\_2} can be considered as the current reference implementation. In our experiments, we run \textsc{Ak\_2} against our new implementation on numerous challenging benchmark instances;
we also outsourced all resultant and gcd computations within \textsc{Ak\_2} to the GPU which allows a better comparison of both implementations.
Our experiments show that \ca outperforms \textsc{Ak\_2} for all instances. More 
precisely, our method is, on average, twice as fast for easy instances such as non-singular curves in generic position, whereas, for hard instances, 
we typically improve by large factors between $5$ and $50$. The latter is mainly due to the new symbolic-numeric filter \fastlift, the exclusive use of resultant and $\gcd$ computations as the only symbolic operations, and the abdication of shearing. Computing arrangements mainly benefit from the improved curve-analyses, the improved bivariate solver (see below), and from avoiding
subresultants and coordinate transformations for harder instances.

We also compared the bivariate solver \bs with two currently state-of-the-art 
implementations, that is, \isolate (based on \rs by Fabrice 
Rouillier with ideas from~\cite{r-rur}) and 
\lgp by Xiao-Shan~Gao~\etal~\cite{LGP-09}, both interfaced in Maple~14.
Again, our experiments show that our method is efficient as it 
outperforms both contestants for most instances. More 
precisely, it is comparable for all considered instances and
typically between $5$ and $10$-times faster.

\medskip

From our experiments, we conclude that the considerable gain in performance 
of \bs and \ca is due to the following reasons: Since our algorithms only use resultant and $\gcd$ computations as purely symbolic operations they beat by design other approaches that use more involved algebraic techniques.
As both symbolic computations are outsourced to the GPU, we even see 
tremendously reduced cost, eliminating a (previously) typical bottleneck. 
Moreover, our filters apply to many
input systems and, thus, allow a more adaptive treatment of algebraic curves. 
Our initial decision to avoid any coordinate transformation has turned out to be favorable, in particular, for sparse input and for computing arrangements.
In summary, from our experiments, we conclude that instances which have so far been considered to be difficult, such as singular curves or curves in non-generic position, can be handled at least as fast as seemingly easy instances such as randomly chosen, non-singular curves of the same input size.

We would like to remark that preliminary versions of this work have already been presented at 
ALENEX~2011~\cite{bes-bisolve-2011} and SNC~2011~\cite{beks:snc:2011}. A recent result~\cite{es-bisolvecomplexity-11} on the complexity of \bs further shows that it is also very efficient in theory, that is, the bound on its worst case bit complexity is by several magnitudes lower than the best bound known so far for this problem. 
In comparison to the above mentioned conference papers, this journal version comes along with a series of improvements: First, we consider a new filter for \bs which is 
based on a certified numerical complex root solver. It allows us to certify 
a box to be isolating for a solution $(\alpha,\beta)\in\RR^2$ in a generic situation, 
where no further solution with the same $x$-coordinate exists. Second, the test within \ca to decide in advance whether \fastlift applies, and the proof that \fastlift applies to any curve in a generic position have not been presented before. The latter two results yield a novel complete and certified method \fastca (i.e.~\ca with \fastlift only, where \slowlift is disabled) to compute the \emph{topology} of an algebraic curve.\\

\paragraph{Outline} The bivariate solver \bs is 
discussed in Section~\ref{sec:bs}. In Section~\ref{sec:ca}, we introduce \ca to analyze a single algebraic curve. The latter section particularly features two parts, that is, the presentation of a complete method \slowlift in Section~\ref{sssec:ca:alg:lift:slowlift} that is based on \bs, and the presentation of the symbolic-numeric method \fastlift in Section~\ref{sssec:ca:lift:fastlift}. \fastlift uses a numerical solver whose details are given in \ref{asec:numerical}. \bs and \ca 
are finally utilized in Section~\ref{sec:arr} in order to enable the 
computation of arrangements of algebraic curves.
The presented algorithms allow speedups, among other things, due to the use of graphics 
hardware for symbolic operations as described in
Section~\ref{sec:speedups}. Our algorithms are prototypically implemented
in the \cgal\ project. Section~\ref{sec:implex} gives necessary details
and also features many experiments that show the performance
of the new approach. We conclude in Section~\ref{sec:conclusion} and outline
further directions of research.\newpage

\section{\bs: Solving a Bivariate System}
\label{sec:bs}

The \emph{input} of our algorithm is the following polynomial system

\begin{equation}
f(x,y)=\sum_{i,j\in\N:i+j\le m} f_{ij}x^iy^j=0\text{\hspace{0.1cm} and\hspace{0.1cm} }g(x,y)=\sum_{i,j\in\N:i+j\le n} g_{ij}x^iy^j=0,\label{system}
\end{equation}
where $f$, $g\in\Z[x,y]$ are polynomials of total degrees $m$ and $n$, 
respectively. It is assumed that $f$ and 
$g$ have no common factors; otherwise, $f$ and $g$ have to be decomposed into common and non-common factors first, and then the finite-dimensional solution set has to be merged with the one-dimensional part defined by the common factor (not part of our algorithm). 
Hence, the set
$V_{\C}:=\{(x,y)\in\C^2|f(x,y)=g(x,y)=0\}$ of
(complex) solutions of~(\ref{system})
is zero-dimensional and consists, by B\'{e}zout's theorem, of at most
$m\cdot n$ distinct elements.

Our algorithm \emph{outputs} disjoint boxes
$B_k\subset\RR^2$ such that the union of all $B_k$ contains all \emph{real}
solutions
$$
V_{\RR}:=\{(x,y)\in\RR^2|f(x,y)=g(x,y)=0\}=V_{\C}\cap\RR^{2}
$$
of (\ref{system}) and each $B_k$ is \emph{isolating}, that is, it contains
exactly one solution.\\

\paragraph{Notation} We also write
\begin{equation*}
f(x,y)=\sum_{i=0}^{m_x} f_{i}^{(x)}(y) x^i=\sum_{i=0}^{m_y} f_{i}^{(y)}(x)
y^i\text{\hspace{0.1cm} and\hspace{0.1cm} }
g(x,y)=\sum_{i=0}^{n_x} g_{i}^{(x)}(y)x^i=\sum_{i=0}^{n_y}
g_{i}^{(y)}(x)y^i,
\end{equation*}
where $f_{i}^{(y)}$, $g_{i}^{(y)}\in\Z[x]$, $f_{i}^{(x)}$, 
$g_{i}^{(x)}\in\Z[y]$ and $m_x$, $n_x$ and $m_y$, 
$n_y$ denote the degrees of $f$ and $g$ considered as polynomials in $x$
and
$y$, respectively. For an interval 
$I=(a,b)\subset\RR$, $m_{I}:=(a+b)/2$ denotes the \emph{center} and 
$r_{I}:=(b-a)/2$ the \emph{radius} of $I$. For an arbitrary 
$m\in\C$ and $r\in\RR^{+}$, $\Delta_{r}(m)$ denotes the disc with 
center $m$ and radius $r$.

\paragraph{Resultants}
Our method is based on well known elimination techniques. 
We consider the projections
\begin{eqnarray*} 
V^{(x)}_{\C}&:=&\{x\in\C|\exists y\in\C\text{ with }f(x,y)=g(x,y)=0\},\\
V^{(y)}_{\C}&:=&\{y\in\C|\exists x\in\C\text{ with }f(x,y)=g(x,y)=0\}
\end{eqnarray*}
of all complex solutions $V_{\C}$ onto the $x$- and $y$-coordinate.
Resultant computation is a well studied tool to obtain an algebraic
description of these projection sets, that is, polynomials whose roots are
exactly the projections of the solution set $V_{\C}$. The resultant
$R^{(y)}=\res(f,g;y)\in\Z[x]$ of $f$ and $g$ with respect to the
variable $y$ is the determinant of the $(m_y+n_y)\times(m_y+n_y)$
\emph{Sylvester matrix}:
\[
S^{(y)}(f,g):=\left[
\begin{array}{ccccccc}
f_{m_y}^{(y)} & f_{m_y-1}^{(y)} & \ldots & f_0^{(y)} & 0\ \ldots & 0 \\
\vdots & \ddots & \ddots & & \ddots &  \vdots \\
0 & \ldots\ \ 0 & f_{m_y}^{(y)} & f_{m_y-1}^{(y)} & \ldots & f_0^{(y)} \\
g_{n_y}^{(y)} & g_{n_y-1}^{(y)} & \ldots & g_0^{(y)} & 0\  \ldots & 0 \\
\vdots & \ddots & \ddots & & \ddots &  \vdots \\
0 & \ldots\ \ 0 &  g_{n_y}^{(y)} & g_{n_y-1}^{(y)} & \ldots & g_0^{(y)}
\end{array}\right]
\]

\noindent From the definition, it follows that $R^{(y)}(x)$ is a polynomial in $x$ of degree
less than or equal to $m\cdot n$. The resultant
$R^{(x)}=\res(f,g;x)\in\Z[y]$ of $f$ and $g$ with respect to $x$ is
defined in completely analogous manner by considering $f$ and $g$ as
polynomials in $x$ instead of $y$.
As mentioned above, the resultant polynomials have the following important
property (see~\cite{bpr-algorithms} for a proof):
\begin{theorem}\label{thm:resultants}
The roots of $R^{(y)}(x)$ are exactly the projections of the solutions of (\ref{system}) onto the $x$-coordinate and the roots of the greatest common divisor $h^{(y)}(x):=\gcd(f_{m_y}(x),g_{n_y}(x))$ of the leading coefficients of $f$ and~$g$. 
More precisely, 
\begin{align*}
\{x\in\C|R^{(y)}(x)=0\}=V^{(x)}_{\C}\cup \{x\in\C|h^{(y)}(x)=0\}
\end{align*}
For $R^{(x)}(y)$, a corresponding result holds:
\begin{align*}
\{y\in\C|R^{(x)}(y)=0\}=V^{(y)}_{\C}\cup \{y\in\C|h^{(x)}(y)=0\},
\end{align*}
where $h^{(x)}(y):=\gcd(f_{m_x}(y),g_{n_x}(y))$.
The multiplicity of a root $\alpha$ of $R^{(y)}$ ($R^{(x)}$) is the sum\footnote{For a root $\alpha$ of $h^{(y)}(x)$ (or $h^{(x)}(y)$), the intersection multiplicity of $f$ and $g$ at the ``infinite point'' $(\alpha,\infty)$ (or $(\infty,\alpha)$) has also been taken into account. For simplicity, we decided not to consider the more general projective setting.} of
the intersection multiplicities\footnote{The multiplicity of a solution $(x_0,y_0)$ of ($\ref{system}$) is defined as the dimension of the localization of 
$\C[x,y]/(f,g)$ at $(x_0,y_0)$ considered as $\C$-vector space  (cf.~\cite[p.148]{bpr-algorithms})} of all solutions of
(\ref{system}) with $x$-coordinate ($y$-coordinate) $\alpha$.
\end{theorem}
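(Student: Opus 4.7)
The plan is to split the statement into the set-theoretic equalities and the multiplicity assertion, and to prove only the $R^{(y)}$ version, since the $R^{(x)}$ version follows by the obvious $x \leftrightarrow y$ symmetry. For the forward inclusion of the first equality, I would argue directly on the specialized Sylvester matrix $M(\alpha) := S^{(y)}(f,g)|_{x=\alpha}$. If $\alpha$ is a common root of $f_{m_y}^{(y)}$ and $g_{n_y}^{(y)}$, i.e., $h^{(y)}(\alpha) = 0$, then the block structure of $S^{(y)}(f,g)$ shows that the entire first column of $M(\alpha)$ vanishes, hence $R^{(y)}(\alpha) = \det M(\alpha) = 0$. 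If instead $(\alpha,\beta) \in V_\C$, then the Vandermonde vector $v = (\beta^{m_y+n_y-1}, \ldots, \beta, 1)^T$ lies in the kernel of $M(\alpha)$, since the entries of $M(\alpha)\,v$ read off the values $\beta^k f(\alpha,\beta)$ and $\beta^j g(\alpha,\beta)$, all of which vanish; hence again $R^{(y)}(\alpha) = 0$.

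For the reverse inclusion, assume $R^{(y)}(\alpha) = 0$ but $h^{(y)}(\alpha) \neq 0$. Singularity of $M(\alpha)$ yields polynomials $u(y), v(y) \in \C[y]$, not both zero, with $\deg u < n_y$ and $\deg v < m_y$, satisfying $u(y)\,f(\alpha,y) + v(y)\,g(\alpha,y) = 0$ identically. Since at least one of the leading coefficients of $f(\alpha,y), g(\alpha,y)$ is nonzero, a direct degree comparison forces $f(\alpha,y)$ and $g(\alpha,y)$ to share a nontrivial common factor in $\C[y]$, hence a common root $\beta$, placing $\alpha \in V_\C^{(x)}$. This completes the set-theoretic claim for $R^{(y)}$, and the statement for $R^{(x)}$ is obtained by the symmetric argument.

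The multiplicity statement is the principal obstacle. My plan is to factor $f(x,y)$ in the algebraic closure of the field $\C((x-\alpha))$ of formal Laurent series into Puiseux branches $y = \varphi_i(x-\alpha)$ and then invoke the Poisson product formula $\res(p,q;y) = \mathrm{lc}_y(p)^{\deg_y q} \prod_{p(y)=0} q(y)$, applied with $p = f$, $q = g$ over this Puiseux closure. The order of vanishing at $x = \alpha$ of each factor $g(x, \varphi_i(x-\alpha))$ equals the intersection order of $f$ and $g$ along the branch $\varphi_i$, and summing over the branches above each solution $(\alpha,\beta) \in V_\C$ reproduces, via the classical identification of the local contact sum with $\dim_\C \C[[x-\alpha, y-\beta]]/(f,g)$ (see e.g.\ \cite{bpr-algorithms}), the intersection multiplicity at $(\alpha,\beta)$. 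The branches escaping to $\beta = \infty$ account precisely for the $\mathrm{lc}_y(f)^{\deg_y g}$ prefactor, matching the ``infinite point'' contributions indexed by $h^{(y)}(\alpha) = 0$. The most delicate step will be the careful bookkeeping of degree drops of $f(\alpha,y)$ and $g(\alpha,y)$ to ensure that the leading-coefficient prefactors in the Poisson formula cleanly absorb into the infinite-point contributions; I would handle this via a homogenization argument in the projective setup that separates finite and infinite contributions from the outset.
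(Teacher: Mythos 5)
The paper does not prove Theorem~\ref{thm:resultants} at all: it states the result as a known property of resultants and defers entirely to the reference \cite{bpr-algorithms}. So there is no ``paper proof'' to match against; what you have written is an actual proof outline where the paper has only a citation. Your treatment of the set-theoretic equality is complete and correct: the vanishing first column when $h^{(y)}(\alpha)=0$, the Vandermonde kernel vector for $(\alpha,\beta)\in V_{\C}$, and the row-dependence/degree-comparison argument for the converse are exactly the standard Sylvester-matrix proof (one should just note explicitly that $h^{(y)}(\alpha)\neq 0$ guarantees at least one of $f(\alpha,y)$, $g(\alpha,y)$ attains its full $y$-degree, which is what makes the degree comparison bite). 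For the multiplicity assertion your Puiseux/Poisson route is a legitimate and standard one, but as written it is a plan rather than a proof: it rests on two nontrivial classical inputs --- the Poisson product formula over the Puiseux field, and the identification of $\operatorname{Int}(f,g,(\alpha,\beta))$ with the sum of contact orders $\operatorname{ord}_{\alpha} g(x,\varphi_i(x))$ over the Puiseux branches of $f$ centred at $(\alpha,\beta)$, counted with multiplicity --- and you yourself flag the bookkeeping of the branches escaping to $\beta=\infty$ against the $\operatorname{lc}_y(f)^{\deg_y g}$ prefactor as unresolved. Since the multiplicity claim (including the infinite-point footnote) is precisely the part of the theorem that the \bsvalidate{} and \calift{} arguments later rely on, either carrying out that bookkeeping in the projective setting or citing it, as the paper does, is needed before the argument is complete.
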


\paragraph{Overview of the Algorithm}

We start with the following high level description of the proposed
algorithm which decomposes into three subroutines:
In the first phase (\bsproject, see Section~\ref{ssec:bs:biproject}), 
we project the complex solutions
$V_{\C}$ of (\ref{system})
onto the $x$- and onto the $y$-axis. More precisely, we compute the restrictions $
V^{(x)}_{\RR}:=V^{(x)}_{\C}\cap \RR$ and $V^{(y)}_{\RR}:=V^{(y)}_{\C}\cap \RR
$
of the complex projection sets $V^{(x)}_{\C}$ and $V^{(y)}_{\C}$ to
the real axes and isolating intervals for their elements.
Obviously, the real solutions $V_{\RR}$ are contained in the cross product $\mathcal{C}:=V^{(x)}_{\RR}\times 
V^{(y)}_{\RR}\subset\RR^2$. 
In the second phase (\bsseparate, see Section~\ref{ssec:bs:separate}), we compute isolating discs
which "well separate" the projected solutions from each other. The latter step prepares the third phase (\bsvalidate, see Section~\ref{ssec:bs:validate}) in which 
candidates of $\mathcal{C}$ are either discarded or certified 
to be a solution of (\ref{system}). 
Our \emph{main theoretical contribution} is the introduction of a 
novel predicate to ensure that a certain candidate 
$(\alpha,\beta)\in\mathcal{C}\cap V_{\RR}$ actually fulfills 
$f(\alpha,\beta)=g(\alpha,\beta)=0$ (cf.~Theorem~\ref{thm:inclusion}).
For candidates $(\alpha,\beta)\in\mathcal{C}\backslash V_{\RR}$,
interval arithmetic suffices to exclude $(\alpha,\beta)$ as
a solution of (\ref{system}).

We remark that,
in order to increase the efficiency of our implementation,
we also introduce additional filtering techniques to eliminate
many of the candidates in $\mathcal{C}$.
However, for the sake of clarity, we refrain from integrating our
filtering techniques into the following description
of the three subroutines.
Section~\ref{ssec:speedups:gpu_res} briefly discusses a
highly parallel algorithm on the graphics hardware
to accelerate computations of the resultants the $\gcd$s needed in the first step,
while the filtering techniques for \bsvalidate are covered in 
Section~\ref{ssec:speedups:bsfilters}.\\

\subsection{\bsproject}
\label{ssec:bs:biproject}

We compute the resultant
$R:=R^{(y)}=\res(f,g;y)
\in\Z[x]$ and a square-free
factorization of~$R$. More precisely, we determine square-free and pairwise 
coprime factors $r_{i}\in\Z[x]$, $i=1,\ldots,\deg(R)$, such that
$R(x)=\prod_{i=1}^{\deg
(R)}\left(r_{i}(x)\right)^{i}$. We remark that, for some
$i\in\{1,\ldots,\deg(R)\}$, $r_{i}(x)=1$.
Yun's algorithm~\cite[Alg.~14.21]{gathen} constructs such a square-free factorization by
essentially computing greatest common divisors of $R$ and its higher
derivatives in an iterative way.
Next, we isolate the real roots $\alpha_{i,j}$, $j=1,\ldots,\ell_{i}$, of
the polynomials $r_{i}$. That is, we determine disjoint isolating intervals
$I(\alpha_{i,j})\subset\RR$ such that each interval
$I(\alpha_{i,j})$ contains exactly one root (namely, $\alpha_{i,j}$) of
$r_{i}$, and the union
of all $I(\alpha_{i,j})$, $j=1,\ldots,\ell_{i}$, covers all real roots of
$r_{i}$.
For the real root isolation, we consider the Descartes method~\cite{vca,RS} as a suited
algorithm.
From the square-free factorization we know that $\alpha_{i,j},
j=1,\ldots,\ell_{i}$, is a root of $R$ with multiplicity~$i$.\\ 

\subsection{\bsseparate}
\label{ssec:bs:separate}

We separate the real roots
of $R=R^{(y)}$ from all other (complex) roots of $R$, an operation which is
crucial for the final validation. More precisely, let
$\alpha=\alpha_{i_{0},j_{0}}$ be the $j_{0}$-th real root of the polynomial
$r_{i_{0}}$, where $i_{0}\in\{1,\ldots,\deg(R)\}$ and
$j_{0}\in\{1,\ldots,\ell_{i_{0}}\}$ are arbitrary indices. We refine the
corresponding isolating interval $I=(a,b):=I(\alpha)$ such that the disc
$\Delta_{8r_{I}}(m_I)$ does not contain any root of $R$
except~$\alpha$.
For the refinement of $I$, we use quadratic interval 
refinement (QIR for short)~\cite{abbott-qir-06,qir-kerber-11} which constitutes a highly efficient 
method because of its simple tests and the fact that it eventually achieves
quadratic convergence.

In order to test whether the disc $\Delta_{8r_{I}}(m_I)$ isolates $\alpha$
from all other roots of $R$, we consider an approach which was also used in~\cite{sy-ceval}. It is based on the following test:

\begin{equation*}
    T^p_K(m,r):|p(m)| - K \sum_{k\ge 1}
        \left|\frac{p^{(k)}(m)}{k!} \right|r^k>0,
\end{equation*}
where $p\in\RR[x]$ denotes an arbitrary polynomial and $m$, $r$, $K$
arbitrary real values. Then, the following theorem holds:\footnote{For a similar result, the reader may also consider~\cite{skh-ccri-2009}, where a corresponding test based on interval arithmetic only has been introduced.}

\begin{theorem}
\label{thm:test}
Consider a disk $\Delta=\Delta_m(r)\subset\C$ with center $m$ and radius
$r$.
\begin{enumerate}
\item If $T_{K}^p(m,r)$ holds for some $K\geq 1$, then the closure
$\overline{\Delta}$ of $\Delta$ contains no root of $p$.
\item   If $T^{p'}_{K}(m,r)$ holds for a $K\geq \sqrt{2}$, then
$\overline{\Delta}$ contains at most one root of $p$.
\end{enumerate}
\end{theorem}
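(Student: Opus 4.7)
The plan is to prove the two parts separately, each time by Taylor--expanding around the center $m$ and applying the triangle inequality to the tail.

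For Part~1, I would pick an arbitrary $z\in\overline{\Delta}$, write $w=z-m$ with $|w|\le r$, and use the Taylor expansion
\[
  p(z)=p(m)+\sum_{k\ge 1}\frac{p^{(k)}(m)}{k!}\,w^{k}
\]
together with $|w|^{k}\le r^{k}$ to obtain
\[
  |p(z)|\;\ge\;|p(m)|-\sum_{k\ge 1}\left|\tfrac{p^{(k)}(m)}{k!}\right|r^{k}.
\]
Because $K\ge 1$, the hypothesis $T_{K}^{p}(m,r)$ yields $|p(m)|>K\sum_{k\ge 1}|p^{(k)}(m)/k!|\,r^{k}\ge\sum_{k\ge 1}|p^{(k)}(m)/k!|\,r^{k}$, so $|p(z)|>0$ for every $z\in\overline{\Delta}$. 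This part is routine and I do not expect difficulties.

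For Part~2, I would argue by contradiction: assume $p$ has two roots (counted with multiplicity) $z_{1},z_{2}\in\overline{\Delta}$ and split into cases. If $z_{1}=z_{2}$ then $p'$ vanishes at that point, and an application of Part~1 to $p'$ — which is already legitimate because $\sqrt{2}\ge 1$ — produces the desired contradiction with $T_{K}^{p'}(m,r)$. If $z_{1}\neq z_{2}$, I would subtract the two Taylor expansions
\[
  0=p(z_{2})-p(z_{1})=\sum_{k\ge 1}\frac{p^{(k)}(m)}{k!}\bigl[(z_{2}-m)^{k}-(z_{1}-m)^{k}\bigr],
\]
isolate the $k=1$ term, divide by $z_{2}-z_{1}$, and estimate the difference quotients $\big|\frac{(z_{2}-m)^{k}-(z_{1}-m)^{k}}{z_{2}-z_{1}}\big|$ in terms of $r$ by using that the straight segment from $z_{1}-m$ to $z_{2}-m$ lies inside $\overline{\Delta}_{r}(0)$. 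After re-indexing $j=k-1$ and observing that $p^{(j+1)}=(p')^{(j)}$, this yields an upper bound on $|p'(m)|$ of the form $C\cdot\sum_{j\ge 1}\bigl|(p')^{(j)}(m)/j!\bigr|r^{j}$ that directly contradicts $T_{K}^{p'}(m,r)$.

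The main obstacle, and the only delicate point, is pinning down the constant $C$ sharply enough to match the hypothesis $K\ge\sqrt{2}$. A naive bound $|(z_{2}-m)^{k}-(z_{1}-m)^{k}|\le k\,r^{k-1}|z_{2}-z_{1}|$ already gives $C=1$, which is stronger than needed; however, to obtain precisely the $\sqrt{2}$ threshold one should exploit the additional geometric information that \emph{both} $z_{1}$ and $z_{2}$ lie in $\overline{\Delta}_{r}(m)$ (so $|z_{2}-z_{1}|\le 2r$ and the midpoint is in the disc as well) and split the analysis according to whether $|z_{2}-z_{1}|$ is large or small relative to $r$. I would then combine this distinct-roots estimate with the coincident-root case treated above to finish the proof.
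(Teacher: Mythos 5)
Your proof is correct, and Part~2 follows a genuinely different route from the paper. The paper's argument is geometric: it first shows that $T^{p'}_{K}(m,r)$ with $K\ge\sqrt{2}$ forces the argument of $p'(z)$ to stay within $\arcsin(1/K)\le\pi/4$ of the argument of $p'(m)$ for all $z\in\overline{\Delta}$, and then derives a contradiction from two roots $a\neq b$ by applying the real Mean Value Theorem to $\operatorname{Re}p$ and $\operatorname{Im}p$ along the segment $[a,b]$ and invoking the Cauchy--Riemann equations to produce two points where $p'$ has arguments differing by $\pi/2$. This is exactly where the threshold $\sqrt{2}$ enters. Your divided-difference argument is purely algebraic and, importantly, your final worry is unfounded: the naive estimate $\bigl|\frac{(z_2-m)^k-(z_1-m)^k}{z_2-z_1}\bigr|\le k\,r^{k-1}$ already yields $|p'(m)|\le\sum_{j\ge 1}\bigl|(p')^{(j)}(m)/j!\bigr|\,r^{j}$, which contradicts $T^{p'}_{K}(m,r)$ for \emph{every} $K\ge 1$. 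There is nothing to ``match'' --- the theorem only asserts the conclusion for $K\ge\sqrt{2}$, and proving it under the weaker hypothesis $K\ge 1$ is strictly stronger, so the proposed case split on $|z_2-z_1|$ versus $r$ is unnecessary. Your approach thus buys a sharper constant and avoids the paper's appeal to the two-variable MVT and Cauchy--Riemann, at the cost of none of the geometric intuition; the coincident-root case is handled identically in both proofs via Part~1 applied to $p'$.
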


\begin{proof}
(1) follows from a straight-forward computation: For each
$z\in\overline{\Delta}$, we have $$p(z)=p(m+(z-m))=p(m)+\sum_{k\ge 1}
        \frac{p^{(k)}(m)}{k!}(z-m)^k,$$ and thus
        $$\frac{|p(z)|}{|p(m)|}\geq 1-\frac{1}{|p(m)|}\cdot\sum_{k\ge 1}
        \frac{|p^{(k)}(m)|}{k!}|z-m|^k>\left(1-\frac{1}{K}\right)$$
since $|z-m|\le r$ and $T_{K}^p(m,r)$ holds. In particular, for $K\geq 1$,
the above inequality implies $|p(z)|>0$ and, thus, $p$ has no root in
$\overline{\Delta}$.

It remains to show (2): If $T_K^{p'}(m,r)$ holds, then, for any point
$z\in\overline{\Delta}$, the derivative $p'(z)$ differs from $p'(m)$ by a
complex number of absolute value less than $|p'(m)|/K$. Consider the
triangle spanned by the points $0$, $p'(m)$ and $p'(z)$, and let $\alpha$
and $\beta$ denote the angles at the points $0$ and $p'(z)$, respectively.
From the Sine Theorem, it follows that $$|\sin\alpha|=|p'(m)-p'(z)|\cdot
\frac{|\sin \gamma|}{|p'(m)|}< \frac{1}{K}.$$ Thus, the arguments of
$p'(m)$ and $p'(z)$ differ by less than $\operatorname*{arcsin}(1/K)$ which
is smaller than or equal to $\pi/4$ for $K\geq\sqrt{2}$.
Assume that there exist two roots $a, b\in\overline{\Delta}$ of $p$. Since
$a=b$ implies $p'(a)=0$, which is not possible as $T^{p'}_1(m,r)$ holds, we
can assume that $a\neq b$.
We split $p$ into its real and imaginary part, that is, we consider
$p(x+iy)=u(x,y)+iv(x,y)$ where $u,v:\RR^2\rightarrow\RR$ are two bivariate
polynomials. Then, $p(a)=p(b)=0$ and so $u(a)=v(a)=u(b)=v(b)=0$.
But $u(a)=u(b)=0$ implies, due to the Mean Value Theorem in several real
variables, that there exists a $\phi\in [a,b]$ such that 
  $$\nabla u(\phi)\perp (b-a).$$
Similarly, $v(a)=v(b)=0$ implies that there exists a $\xi\in [a,b]$
such that $\nabla v(\xi)\perp (b-a).$ But $\nabla
v(\xi)=(v_x(\xi),v_y(\xi))=(-u_y(\xi), u_x(\xi))$, thus,
it follows that $\nabla u(\xi) ~\|~ (b-a).$
Therefore, $\nabla u(\psi)$ and $\nabla u(\xi)$ must be perpendicular.
Since $p'=u_x+iv_x=u_x-iu_y$, the arguments of $p'(\psi)$ and $p'(\xi)$
must differ by $\pi/2$. This contradicts our above result that both differ
from the argument of $p'(m)$ by less than $\pi/4$, thus, (2) follows.
\end{proof}

Theorem~\ref{thm:test} now directly applies to the above scenario, where $p=r_{i_0}$ and $r=8r_I$. More
precisely, $I$ is refined until $T^{(r_{i_{0}})'}_{3/2}(m_I,8r_I)$ and
$T^{r_{i}}_1(m_I,8r_I)$ holds for all $i\neq i_0$.
If the latter two conditions are fulfilled, $\Delta_{8r_I}(m_I)$ isolates $\alpha$ from all
other roots of $R$. In this situation, we obtain a lower bound $L(\alpha)$ for $|R(z)|$ on the boundary of
$\Delta(\alpha):=\Delta_{2r_{I}}(m_{I})$:

\begin{lemma}\label{lem:lowerbound}
Let $I$ be an interval which contains a root $\alpha$ of $r_{i_0}$. If $T^{(r_{i_{0}})'}_{3/2}(m_I,8r_I)$ and
$T^{r_{i}}_1(m_I,8r_I)$ holds for all $i\neq i_0$, then the disc $\Delta(\alpha)=\Delta_{2r_{I}}(m_{I})$ isolates $\alpha$ from all
other (complex) roots of $R$ and, for any $z$ on the boundary $\partial
\Delta(\alpha)$ of $\Delta(\alpha)$, it holds that $$|R(z)|>L(\alpha):= 2^{-i_0-\deg(R)}|R(m_I-2r_I)|.$$
\end{lemma}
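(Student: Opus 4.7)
The plan is to split the argument into an isolation step and a quantitative lower-bound step. The isolation step recycles Theorem~\ref{thm:test} in the obvious way: for each $i\neq i_0$ the hypothesis $T^{r_i}_1(m_I,8r_I)$ together with part~(1) of Theorem~\ref{thm:test} rules out any root of $r_i$ in the closure of $\Delta_{8r_I}(m_I)$, whereas $3/2\geq\sqrt2$ permits us to feed $T^{(r_{i_0})'}_{3/2}(m_I,8r_I)$ into part~(2) and conclude that $\overline{\Delta_{8r_I}(m_I)}$ contains at most one root of $r_{i_0}$. Since $\alpha\in I\subset\Delta_{8r_I}(m_I)$, that lone root is $\alpha$. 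From $R=\prod_i r_i^i$ it then follows that $\alpha$ is the only root of $R$ in $\overline{\Delta_{8r_I}(m_I)}$, and in particular in the smaller disc $\overline{\Delta(\alpha)}=\overline{\Delta_{2r_I}(m_I)}$.

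For the quantitative bound I would factor $R$ over $\C$ as $R(x)=c\prod_{\beta}(x-\beta)^{m(\beta)}$, where $\beta$ ranges over the distinct complex roots, and compare $|R(z)|$ for an arbitrary $z\in\partial\Delta(\alpha)$ with the value at the specific boundary point $z_0:=m_I-2r_I$. The leading coefficient cancels in the ratio, so
\[
\frac{|R(z)|}{|R(z_0)|}=\left(\frac{|z-\alpha|}{|z_0-\alpha|}\right)^{i_0}\prod_{\beta\neq\alpha}\left(\frac{|z-\beta|}{|z_0-\beta|}\right)^{m(\beta)},
\]
and it suffices to bound each factor individually.

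For the $\alpha$-factor I would use $|\alpha-m_I|<r_I$ together with $|z-m_I|=2r_I$: the triangle inequality gives $|z-\alpha|\geq r_I$ and $|z_0-\alpha|\leq 3r_I$, so the factor is at least $(1/3)^{i_0}$, which exceeds $2^{-2i_0}$. For each other distinct root $\beta$, the isolation step forces $|\beta-m_I|>8r_I$; writing $s:=|\beta-m_I|/r_I>8$, the same triangle inequalities give $|z-\beta|>(s-2)r_I$ and $|z_0-\beta|\leq(s+2)r_I$, so the corresponding factor is strictly larger than $(s-2)/(s+2)>3/5>1/2$, hence strictly larger than $2^{-m(\beta)}$. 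Multiplying and using $\sum_{\beta\neq\alpha}m(\beta)=\deg R-i_0$ yields
\[
\frac{|R(z)|}{|R(z_0)|}>2^{-2i_0}\cdot 2^{-(\deg R-i_0)}=2^{-i_0-\deg R},
\]
which is exactly the claimed inequality.

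The only delicate point is the choice of the reference point $z_0=m_I-2r_I$: it must be close enough to $\alpha$ (within $3r_I$) so that the $\alpha$-factor does not blow up in the denominator, yet still on $\partial\Delta(\alpha)$ and a factor $4$ away from every other root of $R$, so that a uniform per-root geometric factor of $3/5$ is available. Once this is set up, the crude estimates $1/3>1/4$ and $3/5>1/2$ are what produce the clean power of $2$ in the definition of $L(\alpha)$.
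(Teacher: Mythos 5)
Your proposal is correct and follows essentially the same route as the paper: the isolation claim is obtained by applying Theorem~\ref{thm:test} to $\Delta_{8r_I}(m_I)$, and the lower bound comes from comparing $|R(z)|$ with $|R(m_I-2r_I)|$ root by root, with the bounds $|z-\alpha|/|z_0-\alpha|>1/3>1/4$ for the $i_0$-fold root $\alpha$ and $(d-2r_I)/(d+2r_I)>3/5>1/2$ for every other root at distance $d>8r_I$. The multiplicative bookkeeping $4^{-i_0}\cdot 2^{-(\deg R-i_0)}=2^{-i_0-\deg R}$ is exactly the paper's computation.
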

 
\begin{proof}
$\Delta(\alpha)$ is isolating as already $\Delta_{8r_{I}}(m_{I})$ is
isolating. Then, let $\beta\neq \alpha$ be an arbitrary root of $R$ and
$d:=|\beta-m_I|>8r_I$ the distance between $\beta$ and $m_I$. Then, for any
point $z\in\partial\Delta(\alpha)$, it holds that
\[\frac{|z-\beta|}{|(m_I-2r_I)-\beta|} >\frac{d-2r_I}{d+2r_I}=1-\frac{4r_I}{
d+2r_I}>\frac{1}{2}\hspace{0.2cm}\text{ and }\hspace{0.2cm}\frac{|z-\alpha|}{|(m_I-2r_I)-\alpha|} >\frac{r_I}{3r_I}>\frac{1}{4}.\]
Hence, it follows that
\begin{eqnarray*}
  \frac{|R(z)|}{|R(m_I-2r_I)|} & > & \left(\frac{|z-\alpha|}{|(m_I-2r_I)-\alpha|}\right)^{i_{0}} \cdot\prod_{\beta\neq\alpha:\;R(\beta)=0}\frac{|z-\beta|}{|(m_I-2r_I)-\beta|} > 4^{-i_{0}} 2^{-\deg(R)+i_{0}},
\end{eqnarray*}
where each root $\beta$ occurs as many times in the product as its
multiplicity as a root of $R$.
\end{proof}

We compute $L(\alpha)=2^{-i_{0}-\deg(R)}|R(m_I-2r_I)|$
and store the interval $I(\alpha)$, the disc $\Delta(\alpha)$, and the lower
bound $L(\alpha)$ for $|R(z)|$ on the boundary $\partial\Delta(\alpha)$ of
$\Delta(\alpha)$.\\

Proceeding in exactly the same manner for each real
root $\alpha$ of $R^{(y)}$, we get an isolating interval $I(\alpha)$,
an isolating disc $\Delta(\alpha)=\Delta_{2r_{I}}(m_{I})$, and a lower bound
$L(\alpha)$ for $|R^{(y)}|$ on $\partial\Delta(\alpha)$.
For the resultant polynomial $R^{(x)}=\res(f,g;x)$, \bsproject and
\bsseparate are processed in exactly the same manner: We compute
$R^{(x)}$ and a corresponding square-free factorization. Then, for each
real root $\beta$ of $R^{(x)}$, we compute a corresponding isolating
interval $I(\beta)$, a disc $\Delta(\beta)$ and a lower bound $L(\beta)$
for $|R^{(x)}|$ on $\partial\Delta(\beta)$.\\

\subsection{\bsvalidate}
\label{ssec:bs:validate}

We start with the following theorem:
\begin{theorem}\label{thm:boxproperties}
Let $\alpha$ and $\beta$ be arbitrary real roots of $R^{(y)}$ and
$R^{(x)}$, respectively. Then,
\begin{enumerate}
\item the polydisc $\Delta(\alpha,\beta):=\Delta(\alpha)
\times\Delta(\beta)\subset\C^2$ contains at most one solution of
(\ref{system}). If $\Delta(\alpha,\beta)$ contains a 
solution of (\ref{system}), then this solution is real valued and equals
$(\alpha,\beta)$.
\item For an arbitrary point $(z_1,z_2)\in\C^2$ on the boundary of
$\Delta(\alpha,\beta)$, it holds that
\begin{align*}
|R^{(y)}(z_1)|>L(\alpha)\text{ if }z_1\in\partial\Delta(\alpha)\text{, and }
|R^{(x)}(z_2)|>L(\beta) \text{ if }z_2\in\partial\Delta(\beta).
\end{align*}
\end{enumerate}
\end{theorem}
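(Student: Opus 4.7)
The plan is to reduce both statements to properties that have already been established for the univariate resultants $R^{(y)}$ and $R^{(x)}$ in \bsproject and \bsseparate, and then to combine them via Theorem~\ref{thm:resultants}, which relates the roots of the resultants to the projections of the bivariate solution set.

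For part (1), I would pick an arbitrary complex solution $(x_0,y_0) \in \Delta(\alpha,\beta) \cap V_{\C}$ and track its two coordinates separately. By definition of $V_{\C}^{(x)}$, we have $x_0 \in V_{\C}^{(x)}$, and Theorem~\ref{thm:resultants} then yields $R^{(y)}(x_0)=0$. However, Lemma~\ref{lem:lowerbound} (applied during \bsseparate) guarantees that the disc $\Delta(\alpha)=\Delta_{2r_I}(m_I)$ isolates $\alpha$ from all other complex roots of $R^{(y)}$. Since $x_0 \in \Delta(\alpha)$ by assumption, we must have $x_0 = \alpha$; in particular $x_0 \in \RR$. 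An entirely symmetric argument, applied to $R^{(x)}$ and $\Delta(\beta)$, forces $y_0 = \beta \in \RR$. This simultaneously proves uniqueness and realness of any solution contained in the polydisc.

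For part (2), I would first observe that any point $(z_1,z_2)$ on the boundary of the polydisc $\Delta(\alpha,\beta) = \Delta(\alpha) \times \Delta(\beta)$ must satisfy $z_1 \in \partial\Delta(\alpha)$ or $z_2 \in \partial\Delta(\beta)$. In either case, the desired inequality is nothing but the bound from Lemma~\ref{lem:lowerbound} applied to the corresponding univariate resultant. Since the lower bounds $L(\alpha)$ and $L(\beta)$ were explicitly chosen to satisfy $|R^{(y)}(z_1)|>L(\alpha)$ on $\partial\Delta(\alpha)$ and $|R^{(x)}(z_2)|>L(\beta)$ on $\partial\Delta(\beta)$ during \bsseparate, the claim follows immediately.

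I do not expect any serious obstacle: the theorem is essentially a repackaging of the one-dimensional isolation and lower-bound data produced in \bsseparate into a joint statement about the polydisc in $\C^2$. The only point that warrants care is to remember that Theorem~\ref{thm:resultants} describes the roots of $R^{(y)}$ as the union $V_{\C}^{(x)} \cup \{x : h^{(y)}(x)=0\}$; the extra roots arising from $h^{(y)}$ do not cause trouble because we only need the direction $x_0 \in V_{\C}^{(x)} \Rightarrow R^{(y)}(x_0)=0$, which is the easier inclusion of the statement.
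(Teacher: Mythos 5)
Your proof is correct and rests on the same foundation as the paper's: everything reduces to the fact that $\Delta(\alpha)$ and $\Delta(\beta)$ were constructed in \bsseparate to isolate $\alpha$ and $\beta$ from all other complex roots of $R^{(y)}$ and $R^{(x)}$, plus the lower bound of Lemma~\ref{lem:lowerbound} for part~(2). The only difference is in how part~(1) is organized. The paper argues uniqueness by contradiction (two distinct solutions would put two roots of one resultant into one isolating disc) and then deduces realness separately, by noting that a non-real solution would force its complex conjugate into the polydisc as a second solution. You instead show directly, via Theorem~\ref{thm:resultants} and the isolation of the discs, that any solution $(x_0,y_0)$ in the polydisc must satisfy $x_0=\alpha$ and $y_0=\beta$; this single step yields uniqueness, realness, and the clause ``equals $(\alpha,\beta)$'' all at once, and is in fact slightly more complete than the paper's write-up, which leaves that last clause implicit. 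Your closing remark about the extra roots of $R^{(y)}$ coming from $h^{(y)}$ is also exactly the right observation: only the inclusion $V^{(x)}_{\C}\subseteq\{x:R^{(y)}(x)=0\}$ is needed.
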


\begin{proof}
(1) is an easy consequence from the construction of the discs
$\Delta(\alpha)$ and $\Delta(\beta)$. Namely, if $\Delta(\alpha,\beta)$
contains two distinct solutions of (\ref{system}), then they would differ
in at least one coordinate. Thus, one of the discs $\Delta(\alpha)$ or
$\Delta(\beta)$ would contain two roots of $R^{(y)}$ or $R^{(x)}$. Since
both discs are isolating for a root of the corresponding resultant
polynomial, it follows that $\Delta(\alpha,\beta)$ contains at most one
solution. In the case, where $\Delta(\alpha,\beta)$ contains a solution of
(\ref{system}), this solution must be real since, otherwise,
$\Delta(\alpha,\beta)$ would also contain a corresponding complex conjugate
solution ($f$ and $g$ have real valued coefficients).
(2) follows directly from the definition of $\Delta(\alpha,\beta)$, the
definition of $L(\alpha)$, $L(\beta)$ and Lemma~\ref{lem:lowerbound}.
\end{proof}

We denote $B(\alpha,\beta)=I(\alpha)\times I(\beta)\subset\RR^2$ a
\emph {candidate box} for a real solution of (\ref{system}), where $\alpha$
and $\beta$ are real roots of $R^{(y)}$ and $R^{(x)}$, respectively. Due to
Theorem~\ref{thm:boxproperties}, the corresponding ``container polydisc''
$\Delta(\alpha,\beta)\subset \C^2$ either contains no solution of
(\ref{system}), or $(\alpha,\beta)$ is the only solution contained in
$\Delta(\alpha,\beta)$. Hence, for each candidate pair
$(\alpha,\beta)\in\mathcal{C}$, it suffices to show that either
$(\alpha,\beta)$ is no solution of (\ref{system}), or the corresponding
polydisc $\Delta(\alpha,\beta)$ contains at least one solution.
In the following steps, we fix the polydiscs $\Delta(\alpha,\beta)$, whereas
the boxes $B(\alpha,\beta)$ are further refined (by further refining the
isolating intervals $I(\alpha)$ and $I(\beta)$). We further introduce
exclusion and inclusion predicates such that, for sufficiently small
$B(\alpha,\beta)$, either $(\alpha,\beta)$ can be discarded or certified as
a solution of (\ref{system}).\\

\label{pref:interval_exclusion}
In order to \emph{exclude} a candidate box, we use simple interval arithmetic.
More precisely, we evaluate $\Box f(B(\alpha,\beta))$ and $\Box
g(B(\alpha,\beta))$, where $\Box f$ and $\Box g$ constitute box functions
for $f$ and $g$, respectively: If either $\Box
f(B(\alpha,\beta))$ or $\Box g(B(\alpha,\beta))$ does not contain zero,
then $(\alpha,\beta)$ cannot be a solution of (\ref{system}). Vice versa,
if $(\alpha,\beta)$ is not a solution and $B(\alpha,\beta)$ becomes
sufficiently small, then either $0\notin\Box f(B(\alpha,\beta))$ or
$0\notin\Box g(B(\alpha,\beta))$, and thus our exclusion predicate applies.

It remains to provide an \emph{inclusion predicate}, that is, a method 
that approves that a certain candidate $(\alpha,\beta)\in\mathcal{C}$ is
actually a solution of (\ref{system}). We first rewrite the resultant polynomial $R^{(y)}$ as
\begin{align*}
R^{(y)}(x)=u^{(y)}(x,y)\cdot f(x,y)+v^{(y)}(x,y)\cdot g(x,y),
\end{align*}
where $u^{(y)}$, $v^{(y)}\in\Z[x,y]$ are cofactor polynomials which can be
expressed as determinants of corresponding ``Sylvester-like'' matrices:
\[\scriptsize
U^{(y)}=\left|
\begin{array}{ccccccc}
f_{m_y}^{(y)} & f_{m_y-1,y}^{(y)} & \ldots & f_{0}^{(y)} & 0\ \ldots &
y^{n_y-1} \\
\vdots & \ddots & \ddots & & \ddots &  \vdots \\
0 & \ldots\ \ 0 & f_{m_y}^{(y)} & f_{m_y-1}^{(y)} & \ldots & 1 \\
g_{n_y}^{(y)} & g_{n_y-1}^{(y)} & \ldots & g_{0}^{(y)} & 0\  \ldots & 0 \\
\vdots & \ddots & \ddots & & \ddots &  \vdots \\
0 & \ldots\ \ 0 &  g_{n_y}^{(y)} & g_{n_y-1}^{(y)} & \ldots & 0
\end{array}\right|,\hspace{0.25cm}
V^{(y)} =\left|
\begin{array}{ccccccc}
f_{m_y}^{(y)} & f_{m_y-1}^{(y)} & \ldots & f_{0}^{(y)} & 0\ \ldots & 0 \\
\vdots & \ddots & \ddots & & \ddots &  \vdots \\
0 & \ldots\ \ 0 & f_{m_y}^{(y)} & f_{m_y-1}^{(y)} & \ldots & 0 \\
g_{n_y}^{(y)} & g_{n_y-1}^{(y)} & \ldots & g_{0}^{(y)} & 0\  \ldots &
y^{m_y-1} \\
\vdots & \ddots & \ddots & & \ddots &  \vdots \\
0 & \ldots\ \ 0 &  g_{n_y}^{(y)} & g_{n_y-1}^{(y)} & \ldots & 1
\end{array}\right|
\]
The matrices $U^{(y)}$ and
$V^{(y)}$ are obtained from $S^{(y)}(f,g)$ by replacing the last
column with vectors $(y^{n_y-1}\dots 1\ 0 \dots 0)^T$ and $(0 \dots 0\
y^{m_y-1}\dots 1)^T$ of appropriate size, respectively~\cite[p.~287]{algs-92}. 
Both matrices have size
$(n_y+m_y)\times(n_y+m_y)$ and univariate polynomials in
$x$ (the first $n_{y}+m_{y}-1$ columns), or powers of $y$ (only the last
column), or zeros as entries. We now aim for upper bounds for $|u^{(y)}|$ and
$|v^{(y)}|$ on the polydisc $\Delta(\alpha,\beta)$. The polynomials $u^{(y)}$ 
and $v^{(y)}$ have huge coefficients and their computation, either via a 
signed remainder sequence or via determinant evaluation, is very costly. 
Hence, we directly derive 
such upper bounds from the corresponding matrix representations 
\textbf{without computing} $u^{(y)}$ and
$v^{(y)}$: Due to 
Hadamard's
bound, $|u^{(y)}|$ is smaller than the product of the $2$-norms of
the column vectors of $U^{(y)}$. The absolute value of each of the entries
of $U^{(y)}$ can be easily upper bounded by using interval arithmetic on a
box in $\C^2$ that contains the polydisc $\Delta(\alpha,\beta)$. Hence, we get
an upper bound on the $2-$norm of each column vector and, thus, an upper
bound $U(\alpha,\beta,u^{(y)})$ for $|u^{(y)}|$ on $\Delta(\alpha,\beta)$
by multiplying the bounds for the column vectors. In the same manner, we also
derive an upper bound $U(\alpha,\beta,v^{(y)})$ for $|v^{(y)}|$ on
$\Delta(\alpha,\beta)$. With respect to our second projection direction, we
write $R^{(x)}=u^{(x)}\cdot f+v^{(x)}\cdot g$
with corresponding polynomials $u^{(x)}$, $v^{(x)}\in\Z[x,y]$. In exactly
the same manner as done for $R^{(y)}$, we compute corresponding upper
bounds $U(\alpha,\beta,u^{(x)})$ and $U(\alpha,\beta,v^{(x)})$ for
$|u^{(x)}|$ and $|v^{(x)}|$ on $\Delta(\alpha,\beta)$, respectively. 

\begin{theorem}\label{thm:inclusion}
If there exists an $(x_0,y_0)\in\Delta(\alpha,\beta)$ with
\begin{equation}\begin{aligned}
U(\alpha,\beta,u^{(y)}) \cdot |f(x_0,y_0)| + U(\alpha,\beta,v^{(y)}) \cdot |g(x_0,y_0)|< L(\alpha)\label{ineqA}
\end{aligned}
\end{equation}
and
\begin{equation}
\begin{aligned}
U(\alpha,\beta,u^{(x)}) \cdot |f(x_0,y_0)|+U(\alpha,\beta,v^{(x)}) \cdot |g(x_0,y_0)|<  L(\beta)\label{ineqB},
\end{aligned}
\end{equation}
then $\Delta(\alpha,\beta)$ contains a solution of (\ref{system}), and thus $f(\alpha,\beta)=0$.
\end{theorem}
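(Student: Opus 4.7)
The plan is to run a continuity/homotopy argument on the map $H = (f,g) \colon \overline{\Delta(\alpha,\beta)} \to \C^2$ so as to import a zero that is manifestly present at one end of the homotopy into the interior of $\Delta(\alpha,\beta)$ at the other end. The key is that both hypothesis inequalities enforce a strict separation between the combined ``weighted absolute value'' of $(f,g)$ at the witness point $(x_0,y_0)$ and the lower bound on the same combination over each face of the boundary of the polydisc. Once this boundary separation is in place, a Rouch\'e-type argument yields a zero of $(f,g)$ in $\Delta(\alpha,\beta)$, which by Theorem~\ref{thm:boxproperties}(1) must coincide with $(\alpha,\beta)$.

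More precisely, I would first exploit the cofactor identities $R^{(y)}(x) = u^{(y)}(x,y)\,f(x,y) + v^{(y)}(x,y)\,g(x,y)$ and the analogous identity for $R^{(x)}$, together with the bounds $U(\alpha,\beta,u^{(y)})$, $U(\alpha,\beta,v^{(y)})$, $U(\alpha,\beta,u^{(x)})$, $U(\alpha,\beta,v^{(x)})$. On the face $\partial\Delta(\alpha) \times \overline{\Delta(\beta)}$, Theorem~\ref{thm:boxproperties}(2) and the triangle inequality give
\begin{equation*}
L(\alpha) < |R^{(y)}(x_1)| \le U(\alpha,\beta,u^{(y)})\,|f(x_1,y_1)| + U(\alpha,\beta,v^{(y)})\,|g(x_1,y_1)|
\end{equation*}
for every $(x_1,y_1)$ on this face, and the symmetric inequality with $L(\beta)$ holds on $\overline{\Delta(\alpha)} \times \partial\Delta(\beta)$. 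These are the two boundary estimates that carry the proof.

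I would then set up the homotopy $H_t(x,y) := (f(x,y), g(x,y)) - (1-t)\bigl(f(x_0,y_0), g(x_0,y_0)\bigr)$ for $t \in [0,1]$. At $t=0$ the map has an obvious zero at $(x_0,y_0) \in \Delta(\alpha,\beta)$. Suppose, for contradiction, that $H_t$ has a zero $(x_1,y_1)$ on the boundary of the polydisc for some $t \in [0,1]$. Without loss of generality $(x_1,y_1) \in \partial\Delta(\alpha) \times \overline{\Delta(\beta)}$; then $f(x_1,y_1) = (1-t)\,f(x_0,y_0)$ and $g(x_1,y_1) = (1-t)\,g(x_0,y_0)$, so
\begin{equation*}
U(\alpha,\beta,u^{(y)})\,|f(x_1,y_1)| + U(\alpha,\beta,v^{(y)})\,|g(x_1,y_1)| = (1-t)\bigl(U(\alpha,\beta,u^{(y)})\,|f(x_0,y_0)| + U(\alpha,\beta,v^{(y)})\,|g(x_0,y_0)|\bigr) < L(\alpha),
\end{equation*}
by the hypothesis~(\ref{ineqA}). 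This contradicts the boundary estimate above; the symmetric argument using~(\ref{ineqB}) excludes zeros on the other face. Hence $H_t$ has no zeros on $\partial\Delta(\alpha,\beta)$ for any $t \in [0,1]$.

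The final step — and the only delicate one — is to conclude that $H_1 = (f,g)$ inherits a zero in $\Delta(\alpha,\beta)$ from the zero of $H_0$ at $(x_0,y_0)$. Since $H_t$ is a continuous family of holomorphic maps from $\overline{\Delta(\alpha,\beta)}$ to $\C^2$ whose restriction to the distinguished boundary of the polydisc never vanishes, the total multiplicity of zeros in the interior is invariant under the homotopy. This is the standard multivariate Rouch\'e principle, provable e.g.\ via the integral formula for the number of common zeros in a polydisc or via mapping degree on each two-dimensional slice. Consequently $H_1$ has at least one zero in $\Delta(\alpha,\beta)$, and Theorem~\ref{thm:boxproperties}(1) forces this zero to be $(\alpha,\beta)$ itself, which in particular yields $f(\alpha,\beta) = 0$. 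The principal obstacle is formulating and invoking the multivariate Rouch\'e step cleanly; everything else reduces to chaining the two boundary estimates with the hypotheses.
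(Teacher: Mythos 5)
Your proof is correct and follows essentially the same route as the paper: the identical homotopy $H_t=(f,g)-(1-t)(f(x_0,y_0),g(x_0,y_0))$, the same boundary estimates obtained by combining the cofactor identity for $R^{(y)}$ (resp.\ $R^{(x)}$) with the lower bounds $L(\alpha)$, $L(\beta)$ from Theorem~\ref{thm:boxproperties}, and the same appeal to Theorem~\ref{thm:boxproperties}(1) to identify the resulting zero with $(\alpha,\beta)$. The only difference is cosmetic: the paper asserts the existence of a continuous solution path $\Gamma(t)$ and shows it cannot cross the boundary, whereas you justify the transport of the zero via homotopy invariance of the zero count (multivariate Rouch\'e), which is, if anything, a more careful rigorization of the same step.
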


\begin{proof}
The proof uses a homotopy argument. Namely, we consider the parameterized
system
\begin{equation}
\begin{aligned}
f(x,y)-(1-t)\cdot f(x_0,y_0)  = g(x,y)-(1-t)\cdot g(x_0,y_0)  = 0,\label{systemt}
\end{aligned}
\end{equation}
where $t$ is an arbitrary real value in $[0,1]$. For $t=1$, (\ref{systemt})
is equivalent to our initial system (\ref{system}). For $t=0$,
(\ref{systemt}) has a solution in $\Delta(\alpha,\beta)$, namely,
$(x_0,y_0)$. The complex solutions of (\ref{systemt}) continuously depend
on the parameter $t$. Hence, there exists a ``solution path''
$\Gamma:[0,1]\mapsto\C^2$ which connects $\Gamma(0)=(x_0,y_0)$ with a
solution $\Gamma(1)\in\C^2$ of (\ref{system}). We show that $\Gamma(t)$
does not leave the polydisc $\Delta(\alpha,\beta)$ and, thus,
(\ref{system}) has a solution in $\Delta(\alpha,\beta)$: Assume that the path
$\Gamma(t)$ leaves the polydisc, then there exists a $t'\in[0,1]$ with
$(x',y')=\Gamma(t')\in\partial\Delta(\alpha,\beta)$. We assume that
$x'\in\partial\Delta(\alpha)$ (the case $y'\in\partial\Delta(\beta)$ is
treated in analogous manner). Since $(x',y')$ is a solution of
(\ref{systemt}) for $t=t'$, we must have $|f(x',y')|\le|f(x_0,y_0)|$ and
$|g(x',y')|\le|g(x_0,y_0)|$. Hence, it follows that
\begin{align*}
|R^{(y)}(x')| &= 
    |u^{(y)}(x',y') f(x',y')+v^{(y)}(x',y')g(x',y')|\\
    &\le |u^{(y)}(x',y')|\cdot |f(x',y')| + |v^{(y)}(x',y')|\cdot |g(x',y')|\\
&\le U(\alpha,\beta,u^{(y)})\cdot|f(x_0,y_0)|+ U(\alpha,\beta,v^{(y)})\cdot |g(x_0,y_0)|<L(\alpha).
\end{align*}
This contradicts the fact that $|R^{(y)}(x')|$ is lower bounded by
$L(\alpha)$. It follows that $\Delta(\alpha,\beta)$ contains a solution of
(\ref{system}) and, according to Theorem~\ref{thm:boxproperties}, this
solution must be $(\alpha,\beta)$.
\end{proof}

Theorem~\ref{thm:inclusion} now directly applies as an inclusion predicate.
Namely, in each refinement step of $B(\alpha,\beta)$, we choose an arbitrary
$(x_0,y_0)\in B(\alpha,\beta)$ (e.g.~the center
$(m_{I(\alpha)},m_{I(\beta)})$ of the candidate box $B(\alpha,\beta)$) and
check whether both inequalities (\ref{ineqA}) and (\ref{ineqB}) are
fulfilled. If $(\alpha,\beta)$ is a solution of (\ref{system}), then both
inequalities eventually hold and, thus, we have shown that $(\alpha,\beta)$
is a solution.\\

We want to remark that the upper bounds $U(\alpha,\beta,u^{(y)})$, 
$U(\alpha,\beta,v^{(y)})$, $U(\alpha,\beta,u^{(x)})$ and 
$U(\alpha,\beta,v^{(y)})$ are far from being optimal. Nevertheless, our 
inclusion predicate is still efficient since we can approximate the potential 
solution $(\alpha,\beta)$ with quadratic convergence due to the QIR method. Hence, the 
values $f(x_0,y_0)$ and $g(x_0,y_0)$ become very small after a few iterations. 
In order to improve the above upper bounds, we propose to consider more 
sophisticated methods from numerical analysis and matrix perturbation 
theory~\cite{IR-perturbation,rump-verifiedbounds}.
Finally, we would like to emphasize that our method applies particularly well 
to the situation where we are only interested in the solutions of (\ref{system}) 
within a given box $\mathcal{B}=[A,B]\times[C,D]\subset\RR^{2}$. 
Though $R^{(y)}$ ($R^{(x)}$) capture all (real and complex) projections of the solutions of the system, we only have to search for the real ones contained 
within the interval $[A,B]$ ($[C,D]$). Then, 
only candidate boxes within $\mathcal{B}$ have to be 
considered in \bsseparate and \bsvalidate. Hence, since the computation 
of the resultants is relatively cheap due to our fast implementation on the 
GPU (see Section~\ref{ssec:speedups:gpu_res}), 
our method is particularly well suited to search for local solutions.

\section{\ca: Analysing an Algebraic Curve}
\label{sec:ca}

\label{ssec:ca:algorithm} 

The input of \ca is a planar algebraic curve
$C$ as defined in (\ref{def:curve}), where $f\in\Z[x,y]$ is a \emph{square-free}, bivariate polynomial with
integer coefficients. If $f$ is considered as polynomial in $y$ with
coefficients $f_i(x)\in\Z[x]$, its coefficients typically share a \emph{trivial} content $h:=\gcd(f_0,f_1,\ldots)$, that is, $h\in\Z$.
A non-trivial content $h\in\Z[x]\backslash\Z$ defines vertical lines at the real roots of $h$. Our algorithm handles this situation by dividing out $h$ first and finally merging the vertical lines defined by $h=0$ and the analysis of the curve $C':=V(f/h)$ at the end of the algorithm; see \cite{kerber-phd} for details. Hence, throughout the following considerations, we can assume that $h$ is trivial, thus $C$ contains no vertical line.

The algorithm returns a planar graph $\mathcal{G}_C$
that is isotopic to $C$,
where the set $V$ of all vertices of $\mathcal{G}_C$ is located on $C$. From a high-level
perspective our algorithm follows a classical cylindrical algebraic decomposition approach
consisting of three phases that we overview next:

\paragraph{Overview of the Algorithm} In the first phase (\caproject, see Section~\ref{ssec:ca:alg:project}), we project all
\emph{$x$-critical points} $(\alpha,\beta)\in C$ (i.e.~$f(\alpha,\beta)=f_y(\alpha,\beta)=0$) onto the $x$-axis by means of a
resultant computation and root isolation for the elimination polynomial.
The set of $x$-critical points comprises exactly the points where $C$ has a
vertical tangent or is singular. It is well known (e.g.~see~\cite[Theorem 2.2.10]{kerber-phd} for a short proof) that, for any two
consecutive real $x$-critical values $\alpha$ and $\alpha'$, $C$ is
\emph{delineable} over $I=(\alpha,\alpha')$, that is, $C|_{I\times\RR}$
decomposes into a certain number $m_I$ of disjoint function graphs
$C_{I,1},\ldots,C_{I,m_I}$. In the second phase (\calift, see Section~\ref{ssec:ca:alg:lift}), we first isolate the roots of the (square-free)
\emph{intermediate polynomial} $f(q_I,y)\in\mathbb{Q}[y]$, where $q_I$
constitutes an arbitrary chosen but fixed rational value in $I$. This computation yields the number $m_I$ ($=$ number of real roots of $f(q_I,y)$) of arcs above $I$ and corresponding representatives
$(q_I,y_{I,i})\in C_{I,i}$ on each arc. We further compute all points on $C$ that are
located above an $x$-critical value $\alpha$, that is, we determine the
real roots $y_{\alpha,1},\ldots,y_{\alpha,m_{\alpha}}$ of each (non
square-free) \emph{fiber polynomial} $f(\alpha,y)\in\RR[y]$. For this task, we propose two different novel methods, and we show that both of them can be combined in a way to improve the overall efficiency.
From the latter computations we obtain the vertex set $V$ of
$\mathcal{G}_C$ as the union of all points $(q_I,y_{I,i})$ and
$(\alpha,y_{\alpha,i})$. In the third and final phase (\caconnect, see Section~\ref{ssec:ca:alg:connect}),
which concludes the geometric-topological analysis, we determine which of the above
vertices are connected via an arc of $C$. For each connected pair
$(v_1,v_2)\in V$, we insert a line segment connecting $v_1$ and $v_2$. It
is then straight-forward to prove that $\mathcal{G}_C$ is isotopic
to $C$; see also~\cite[Theorem 6.4.4]{kerber-phd}. We remark that we never consider any kind of coordinate
transformation, even in the case where $C$ contains two or more $x$-critical
points sharing the same $x$-coordinate.\\

\begin{figure*}[t]
  \centering
  \includegraphics[width=\linewidth]{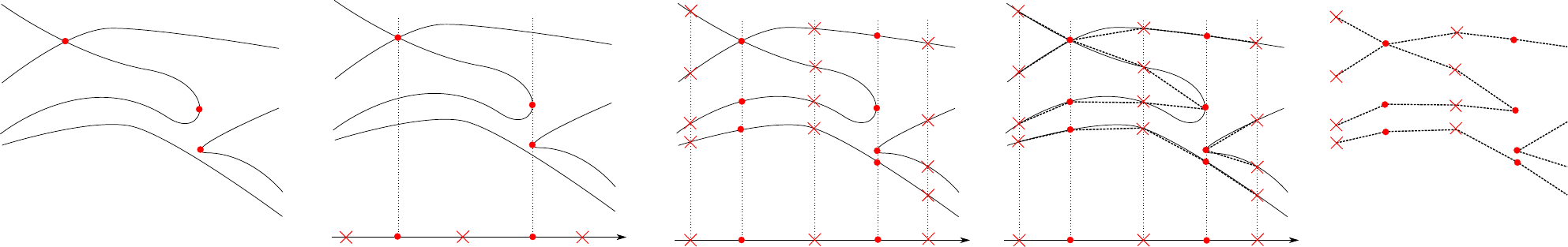}
\caption{\label{fig:cad}The figure on the left shows a curve $C$ with two
$x$-extremal points and one singular point (red dots). In the
\emph{projection phase}, these points are projected onto the $x$-axis and rational points separating the $x$-critical values are
inserted (red crosses). In the \emph{lifting phase}, the fibers at the
critical values (red dots) and at the points in between (red
crosses) are computed. In the \emph{connection phase}, each pair of points connected by an arc of $C$ is determined, and a corresponding line
segment is inserted. Finally, we obtain a graph that is
isotopic to $C$.\protect\\[-2em]\strut}
\end{figure*}

\subsection{\caproject}
\label{ssec:ca:alg:project}

We follow a similar approach as in \bsproject, that is, we compute the resultant $R(x):=\res(f,f_y;y) \in\Z[x]$
and a square-free factorization of $R$. In other words, we first determine
square-free and pairwise coprime factors\footnote{Either by square-free factorization, or
full factorization} $r_{i}\in\Z[x]$,
$i=1,\ldots,\deg(R)$, such that $R(x)=\prod_{i=1}^{\deg
(R)}\left(r_{i}(x)\right)^{i}$, and then isolate the real roots
$\alpha_{i,j}$, $j=1,\ldots,\ell_{i}$, of the polynomials $r_{i}$ which in
turn are $i$-fold roots of $R$. The so-obtained isolating intervals have rational endpoints, and we denote $I(\alpha_{i,j})\subset\RR$ the interval which contains $\alpha_{i,j}$ but no other root of $r_{i}$. Similar as in \bs, we further refine the intervals $I(\alpha_{i,j})$, $i=1,\ldots,\deg(R)$ and $j=1,\ldots,\ell_i$, such that all of them are pairwise disjoint. Then, for each pair $\alpha$
and $\alpha'$ of consecutive roots of $R$
defining an open interval $I = (\alpha,\alpha')$, we choose a separating
rational value $q_I$ in between the corresponding isolating intervals.

\subsection{\calift}
\label{ssec:ca:alg:lift}

Isolating the roots of the intermediate polynomials $f(q_I,y)$ is straight-forward because each $f(q_I,y)$ is a square-free polynomial with rational
coefficients, and thus the Descartes method directly applies. 

Determining the
roots of $f_{\alpha}(y):=f(\alpha,y)\in\mathbb{R}[y]$ at an $x$-critical value $\alpha$ is
considerably more complicated because $f_{\alpha}$ has multiple roots and, in general,
irrational coefficients.
One of the main contributions of this paper is to provide novel methods to compute the fiber at an $x$-critical value $x=\alpha$. More precisely, we first present a \emph{complete and certified} method \slowlift which is based on \bs (taken from Section~\ref{sec:bs}). It applies to any input curve (without assuming generic position) and any corresponding $x$-critical value; see Section~\ref{sssec:ca:alg:lift:slowlift}.
In Section~\ref{sssec:ca:lift:fastlift}, we further present a \emph{certified} symbolic-numeric method denoted \fastlift. Compared to \slowlift, it shows better efficiency in practice, but it may fail for a few fibers if the input curve is in a special geometric situation.
We further provide a method in order to easily check in advance whether \fastlift will succeed, and we also prove that this can always be achieved by means of a random coordinate transformation.
As already mentioned in the introduction, we aim to avoid such a transformation for efficiency reasons. Hence, we propose to combine both lifting methods in way such that \fastlift runs by default, and, only in case of its failure, we fall back to \slowlift.

\subsubsection{\slowlift{} --- a complete method for fiber computation}
\label{sssec:ca:alg:lift:slowlift}

\slowlift is based on the algorithm \bs to isolate the real solutions of a system of two bivariate polynomials $f,g\in\mathbb{Z}[x,y]$. Recall that \bs returns a set
of disjoint boxes $B_1,\ldots,B_m\subset\RR^2$ such that each box $B_i$
contains exactly one real solution $\xi=(x_0,y_0)$ of $f(x,y)=g(x,y)=0$,
and the union of all $B_i$ covers all solutions. Furthermore, for each
solution $\xi$, \bs provides square-free polynomials
$p,q\in\Z[x]$ with $p(x_0)=q(y_0)=0$ and corresponding isolating (and
refineable) intervals $I(x_0)$ and $I(y_0)$ for $x_0$ and $y_0$,
respectively. Comparing $\xi$ with another point $\tilde{\xi}=(x_1,y_1)\in\RR^2$
given by a similar representation is rather straight-forward. Namely, let
$\tilde{p},\tilde{q}\in\Z[x]$ be corresponding defining square-free
polynomials and $I(x_1)$ and $I(y_1)$ isolating intervals for $x_1$ and
$y_1$, respectively, then we can compare the $x$- and $y$-coordinates of the points
$\xi$ and $\tilde{\xi}$ via $\gcd$-computation of the defining univariate
polynomials and sign evaluation at the endpoints of the isolating intervals
(see~\cite[Algorithm 10.44]{bpr-algorithms} for more details).\\ 

In order to compute the fiber at a specific real $x$-critical value $\alpha$ of $C$, we
proceed as follows: We first use \bs to determine all
solutions $p_i=(\alpha,\beta_i)$, $i=1,\ldots,l$, of the system $f=f_y=0$
with $x$-coordinate $\alpha$. Then, for each $p_i$, we compute 
\begin{align*}
k_i:=\min\{k:f_{y^k}(\alpha,\beta_i)=\frac{\partial^k f}{\partial
y^k}(\alpha,\beta_i)\neq 0\}\ge 2.
\end{align*}
The latter computation is done by iteratively calling \bs for
$f_y=f_{y^2}=0$, $f_{y^2}=f_{y^3}=0$,  and so on, and, finally, by restricting and
sorting the solutions along
the vertical line $x=\alpha$. We eventually obtain disjoint intervals
$I_1,\ldots,I_l$ and corresponding multiplicities $k_1,\ldots,k_l$ such
that $\beta_j$ is a $k_j$-fold root of $f_{\alpha}$ which is contained in
$I_j$. The intervals $I_j$ already separate the roots $\beta_j$ from any
other multiple root of $f_{\alpha}$, however, $I_j$ might still contain
ordinary roots of $f_{\alpha}$. Hence, we further refine
each $I_j$ until we can guarantee via interval arithmetic that
$\frac{\partial^{k_{j}} f}{\partial y^{k_j}}(\alpha,y)$ does not vanish on $I_j$.
If the latter condition is fulfilled, then $I_j$ cannot contain any root of
$f_{\alpha}$ except $\beta_j$ due to the Mean Value Theorem. Thus, after refining $I_j$, we can guarantee that $I_j$ is isolating. It remains to isolate the ordinary roots of $f_{\alpha}$:

We consider the so-called
\emph{Bitstream Descartes} isolator~\cite{ek+-descartes} (\bdesc for
short) which constitutes a variant of the Descartes method working
on polynomials with interval coefficients. This method can be used to get
arbitrary good approximations of the real roots of a polynomial with
``bitstream'' coefficients, that is, coefficients that can be approximated
to arbitrary precision. \bdesc starts from an interval guaranteed to
contain all real roots of a polynomial and proceeds with interval
subdivisions giving rise to a \emph{subdivision tree}. Accordingly, the
approximation precision for the coefficients is increased in each step of
the algorithm. Each leaf of the tree is associated with an interval $I$ and
stores a lower bound $l(I)$ and an upper bound $u(I)$ for the number of
real roots of $f_{\alpha}$ within this interval based on Descartes' Rule of Signs. Hence, $u(I)=0$ implies that $I$ contains no root and thus can be discarded. If $l(I)=u(I)=1$, then $I$ is an \emph{isolating interval} for a simple root. Intervals with $u(I)>1$ are further subdivided. We remark that,
after a number of iterations, \bdesc isolates all simple roots of a bitstream polynomial, and intervals not containing any root are eventually discarded. For a multiple root $\xi$, \bdesc determines an interval $I$ which approximates $\xi$ to an arbitrary good precision but never certifies such an interval $I$ to be isolating.

Now, in order to isolate the ordinary roots of $f_{\alpha}$, we modify \bdesc in the following way: We discard an interval $I$ if one of following three cases applies:
\begin{inparaenum}[i)] \item $u(I)=0$, or \item $I$ is completely
contained in one of the intervals $I_j$, or \item $I$ contains an interval $I_j$
and $u(I)\le k_j$. \end{inparaenum}
Namely, in each of these situations, $I$ cannot contain
an ordinary root of $f_{\alpha}$. An interval $I$ is stored as isolating
for an ordinary root of $f_{\alpha}$ if $l(I)=u(I)=1$, and $I$ intersects
no interval $I_j$. All intervals which do not fulfill one of the above
conditions are further subdivided. In a last step, we sort the intervals $I_j$ (isolating the multiple roots) and the newly obtained isolating intervals 
for the ordinary roots along the vertical line.

We remark that, in our implementation, \bs applied in \slowlift reuses 
the resultant $\res(f,f_y;y)$ which has already been computed in the projection phase of the algorithm.
Furthermore, it is a \emph{local approach} in the sense that
its cost is almost proportional to the number of $x$-critical fibers that have to be considered.
This will turn out to be beneficial in the overall approach, where most
fibers can successfully be treated by \fastlift; see Section~\ref{sssec:ca:lift:algo}.\\

\subsubsection{\fastlift --- a symbolic-numeric approach for fiber computation}
\label{sssec:ca:lift:fastlift}

Many of the existing algorithms to isolate the roots of $f_{\alpha}(y)=f(\alpha,y)$ are based on the computation of additional (combinatorial) information about $f_{\alpha}$ such as the degree $k=k_{\alpha}$ of $\gcd(f_{\alpha},f_{\alpha}')$, or the number $m=m_{\alpha}$ of distinct real roots of $f_{\alpha}$; for instance, in~\cite{eigenwilligkw07}, the values $m$ and $k$ are determined by means of computing a subresultant sequence before using a variant of the \bdesc method (denoted $m$-$k$-Descartes) to eventually isolate the roots of $f_{\alpha}$. Unfortunately, the additional symbolic operations for computing the entire subresultant sequence have turned out to be very costly in practice. The following consideration will show that \emph{the number $n_{\alpha}$} ($=\deg(f_{\alpha})-k_{\alpha}$)\emph{ of distinct complex roots of $f_{\alpha}$} can be computed by means of resultant and gcd computations, and a \emph{single} modular subresultant computation only. In order to do so, we first compute an upper bound $n_{\alpha}^+$ for each $n_{\alpha}$, where $n_{\alpha}^+$ has the following property: 
\begin{align}\nonumber
&\text{If }C\text{ has no vertical asymptote at }x=\alpha,\text{ and each critical point }(\alpha,\beta)\text{ (i.e.~}f_x(\alpha,\beta)\\
&=f_y(\alpha,\beta)=0\text{) on the vertical line }x=\alpha\text{ is also located on }C\text{, then }n_{\alpha}=n_{\alpha}^+.\label{specialgeo} 
\end{align}
We will later see that the condition in (\ref{specialgeo}) is always fulfilled if $C$ is in a generic location. From our experiments, we report that, for almost all considered instances, the condition is fulfilled for all fibers. Only for a very few instances, we observed that $n_{\alpha}\neq n_{\alpha}^+$ for a small number of fibers. 
In order to check in advance whether $n_{\alpha}=n_{\alpha}^+$ for all $x$-critical values $\alpha$, we will later introduce an additional test that uses a single modular computation and a semi-continuity argument.\\

\paragraph{Computation of $\mathbf{n_{\alpha}^+}$}
The following result due to Teissier~\cite{Gwozdziewicz00formulaefor,Teissier} is crucial for our approach:
\begin{lemma}[Teissier]
For an $x$-critical point $p=(\alpha,\beta)$ of $C$, it holds that
\begin{align}
\operatorname{mult}(f(\alpha,y),\beta)=\operatorname{Int}(f,f_y,
p)-\operatorname{Int}(f_x,f_y,p)+1,\label{Teissier}
\end{align}
where $\operatorname{mult}(f(\alpha,y),\beta)$ denotes the multiplicity of
$\beta$ as a root of $f(\alpha,y)\in\RR[y]$, $\operatorname{Int}(f,f_y,p)$
the intersection multiplicity\footnote{The intersection multiplicity of
two curves $f=0$ and $g=0$ at a point $p$ is defined as the dimension of
the localization of $\C[x,y]/(f,g)$ at $p$, considered as a $\C$-vector
space.} of the curves implicitly defined by $f=0$ and $f_y=0$ at $p$, and
$\operatorname{Int}(f_x,f_y,p)$ the intersection multiplicity of $f_x=0$
and $f_y=0$ at $p$. 
\end{lemma}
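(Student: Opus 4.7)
The plan is to derive this formula as an instance of the L\^e--Greuel formula from singularity theory, applied to the isolated complete intersection singularity (ICIS) defined by $(f, x - \alpha)$ at $p$.

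After translating coordinates so that $p = (0, 0)$, I work in the analytic local ring $R := \mathcal{O}_p = \mathbb{C}\{x, y\}$. Three preliminary identifications are needed. First, the standing assumption that $C$ contains no vertical line gives $f(\alpha, y) \not\equiv 0$, so $(f, x - \alpha)$ cuts out a $0$-dimensional ICIS at $p$ with $R/(f, x - \alpha) \cong \mathbb{C}\{y\}/(f(\alpha, y))$ of $\mathbb{C}$-dimension $m := \operatorname{mult}(f(\alpha, y), \beta)$. Second, by definition $\mu(f) = \dim_{\mathbb{C}} R/(f_x, f_y) = \operatorname{Int}(f_x, f_y, p)$, the Milnor number of $f$ as a hypersurface germ. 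Third, the Milnor number of a $0$-dimensional ICIS is one less than the $\mathbb{C}$-dimension of its structure ring, since its Milnor fibre is a discrete set of that many points; in our case, $\mu(f, x - \alpha) = m - 1$.

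The L\^e--Greuel formula applied to $(g_1, g_2) := (f, x - \alpha)$ states that
\[
    \mu(g_1, g_2) + \mu(g_1) \;=\; \dim_{\mathbb{C}} R / \bigl(g_1,\; \det J(g_1, g_2)\bigr),
\]
where $J(g_1, g_2)$ is the Jacobian of $(g_1, g_2)$. The Jacobian determinant here is $\det\bigl(\begin{smallmatrix} f_x & f_y \\ 1 & 0 \end{smallmatrix}\bigr) = -f_y$, so the right-hand side equals $\dim_{\mathbb{C}} R/(f, f_y) = \operatorname{Int}(f, f_y, p)$. Substituting $\mu(g_1, g_2) = m - 1$ and $\mu(g_1) = \operatorname{Int}(f_x, f_y, p)$ and rearranging yields exactly Teissier's identity.

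The main obstacle in this approach is the invocation of the L\^e--Greuel formula itself, a deep theorem whose proof rests on analyzing the Milnor fibration of an ICIS and performing a careful Euler-characteristic computation on its vanishing cycles. A more elementary, fully algebraic alternative reduces the identity to a colength computation in the local ring $R$: by Weierstrass preparation one writes $f = u \cdot W$ with $u \in R^{\times}$ and $W = y^m + a_{m-1}(x) y^{m-1} + \cdots + a_0(x)$ a Weierstrass polynomial of degree $m$ in $y$, giving $(f, f_y) = (W, W_y)$ and identifying $\operatorname{Int}(f, f_y, p)$ with the $x$-order of the discriminant $\operatorname{Res}_y(W, W_y)$; a parallel analysis of $\operatorname{Int}(f_x, f_y, p)$ via the polar curve $V(f_y)$ then matches both sides through standard module-theoretic bookkeeping in the free $\mathbb{C}\{x\}$-module $\mathbb{C}\{x\}[y]/(W)$ of rank $m$.
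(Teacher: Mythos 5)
The paper does not actually prove this lemma: it is quoted as a known result of Teissier with citations to the literature, so there is no internal proof to compare against, and you are supplying an argument where the authors supply a reference. Your derivation via the L\^e--Greuel formula is the standard modern route to this identity, and the bookkeeping is correct: with $(g_1,g_2)=(f,x-\alpha)$ the maximal minor of the Jacobian is $-f_y$, so the right-hand side of L\^e--Greuel equals $\dim_{\C} R/(f,f_y)=\operatorname{Int}(f,f_y,p)$; the zero-dimensional ICIS $V(f,x-\alpha)$ has Milnor number $m-1$ where $m=\operatorname{mult}(f(\alpha,y),\beta)$; and $\mu(f)=\operatorname{Int}(f_x,f_y,p)$; rearranging gives exactly (\ref{Teissier}), and spot checks on $y^2-x^3$, $y^2-x^4$, $y^3-x^2$ confirm the signs. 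Two points deserve attention. First, for $\mu(f)$ to be finite --- equivalently, for $V(f)$ to be an isolated (or smooth) singularity at $p$ so that the hypersurface germ is an ICIS --- you need $f_x$ and $f_y$ to have no common component through $p$; this does follow from square-freeness of $f$ (a common branch of $f_x$ and $f_y$ through $p\in C$ would force $f$ to vanish identically on that branch, making it a component of $C$ on which $f_y$ vanishes), but you do not say so, and the paper's Remark immediately following the lemma exists precisely to handle this degeneracy. Second, your ``elementary alternative'' in the closing paragraph is only a gesture --- the phrase ``standard module-theoretic bookkeeping'' is doing all the work --- so the proof as written genuinely rests on the L\^e--Greuel theorem; since the paper itself treats the lemma as an external black box, importing one deep theorem to justify it is a reasonable trade, but you should cite L\^e--Greuel explicitly rather than present the reduction as self-contained.
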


\begin{remark}\label{remark:Teissier}
In the case, where $f_x$ and $f_y$ share a
common non-trivial factor $h=\gcd(f_x,f_y)\in\Z[x,y]\backslash\Z$, $h$ does not vanish
on any $x$-critical point $p$ of $C$, that is, the curves $h=0$ and $f=0$ only intersect at infinity. Namely, $h(p)=0$ for some $p\in\C^2$ would imply that
$\operatorname{Int}(f_x,f_y,p)=\infty$ and, thus,
$\operatorname{Int}(f,f_y,p)=\infty$ as well, a contradiction to our
assumption on $f$ to be square-free. Hence, we have
$\operatorname{Int}(f_x,f_y,p)=\operatorname{Int}(f_x^*,f_y^*,p)$ with
$f_x^*:=f_x/h$ and $f_y^*:=f_y/h$. Hence, the following more general formula (which
is equivalent to (\ref{Teissier}) for trivial $h$) applies: 
\begin{align}
\operatorname{mult}(f(\alpha,y),\beta)=\operatorname{Int}(f,f_y,
p)-\operatorname{Int}(f_x^*,f_y^*,p)+1.\label{Teissier2}
\end{align}
\end{remark}

We now turn to the computation of the upper bound $n_{\alpha}^+$. We distinguish the cases $\deg f_{\alpha}\neq \deg_y f$ and $\deg f_{\alpha}= \deg_y f$. In the first case, where $C$ has a vertical asymptote at $\alpha$, we define $n_{\alpha}^+:=\deg f_{\alpha}$ which is obviously an upper bound for $n_{\alpha}$.
In the case $\deg f_{\alpha}=\deg_y f$, the formula (\ref{Teissier2}) yields:
\begin{align}\nonumber
n_{\alpha}&=\#\{\text{distinct complex roots of }f_{\alpha}\}
= \deg_y f-\deg\gcd(f(\alpha,y),f_y(\alpha,y))\\ \nonumber
&= \deg_y f - \sum_{\substack{\beta\in\C:\\
    \makebox[2em][c]{\scriptsize $f(\alpha,\beta)=0$}}}(\operatorname{mult}(f(\alpha,y),
\beta)-1)\\ \nonumber
&= \deg_y f - \sum_{\substack{\beta\in\C:\\
    \makebox[2em][c]{\scriptsize $(\alpha,\beta)$ is $x$-critical}}}
\left(\operatorname{Int}(f,f_y,(\alpha,\beta))-\operatorname{
Int}(f_x^*,f_y^*,(\alpha,\beta))\right)\\
&= \deg_y f-\operatorname{mult}(R,\alpha)+ \sum_{\substack{\beta\in\C:\\
    \makebox[2.5em][c]{\scriptsize $(\alpha,\beta)$ is $x$-critical}}}
\operatorname{Int}(f_x^*,f_y^*,(\alpha,\beta))\label{eq1}\\
&\le  \deg_y f-\operatorname{mult}(R,\alpha)+\sum_{\beta\in\C}\operatorname{Int}(f_x^*,
f_y^*,(\alpha,\beta)) \label{ineq1}\\
&=\deg_y
f-\operatorname{mult}(R,\alpha)+\operatorname{mult}(Q,\alpha)=:n_{\alpha}^+\label{eq2}
\end{align}
where $R(x)=\res(f,f_y;y)$ and
$Q(x):=\res(f_x^*,f_y^*;y)$. The equality (\ref{eq1}) is due
to the fact that $f$ has no vertical asymptote at $\alpha$ and, thus, the
multiplicity $\operatorname{mult}(R,\alpha)$ equals the sum
$\sum_{\beta\in\C}\operatorname{Int}((f,f_y,(\alpha,\beta))$ of the
intersection multiplicities of $f$ and $f_y$ in the fiber at $\alpha$. (\ref{eq2}) follows by an analogous argument for the
intersection multiplicities of $f_x^*$ and $f_y^*$ along the vertical line
at $\alpha$. From the square-free factorization of $R$, the value
$\operatorname{mult}(R,\alpha)$ is already computed, and $\operatorname{mult}(Q,\alpha)$ can be
determined, for instance, by computing $Q$, its square-free factorization and checking
whether $\alpha$ is a root of one of the factors. 
The following theorem shows that, if the curve $C$ is in generic position, then $C$ has no vertical asymptote or a vertical line, and $f_x^*$ and $f_y^*$ do not intersect at
any point above $\alpha$ which is not located on $C$.\footnote{The reader may notice that generic position is used in a different context here. It is required that all intersection points of $f_x^*$ and $f_y^*$ above an $x$-critical value $\alpha$ are located on the curve $C$.} In the latter case, the inequality (\ref{ineq1})
becomes an equality, and thus $n_{\alpha}=n^+_{\alpha}$.\\

\begin{theorem}
\label{thm:genericposition}
For a generic $s\in\RR$ (i.e.~for all but finitely many), the sheared curve $$C_s:=\{(x,y)\in\RR^2:f(x+s\cdot y,y)=0\}$$
yields $n_{\alpha}^+=n_{\alpha}$ for all $x$-critical values $\alpha$ of $C_s$.
\end{theorem}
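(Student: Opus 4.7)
The plan is to verify that, for all but finitely many $s \in \RR$, both sufficient conditions stated in (\ref{specialgeo}) hold at every $x$-critical value $\alpha$ of $C_s$; the chain of (in)equalities culminating in (\ref{eq2}) then forces $n_\alpha^+ = n_\alpha$. For the absence-of-vertical-asymptote condition, I would expand $f_s(x,y) := f(x+sy,y)$ using the decomposition $f = \sum_{k \le m} f_k$ into homogeneous components. A direct computation shows that the coefficient of $y^m$ in $f_s$ equals the constant $f_m(s,1)$, which is a nonzero polynomial in $s$ of degree at most $m$. For $s$ outside its zero locus, the leading coefficient of $f_s$ in $y$ is a nonzero constant, so every fiber $f_s(\alpha,y)$ has full $y$-degree and $C_s$ admits no vertical asymptote.

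For the second condition, I exploit that the linear change of variables $(u,v) := (x+sy,y)$ identifies the critical locus of $f_s$ in sheared coordinates with the affine critical set $T := \{ p \in \C^2 : f_x(p) = f_y(p) = 0\} \setminus \{h = 0\}$ of $f$ in original coordinates, where $h := \gcd(f_x, f_y)$ is handled via Remark~\ref{remark:Teissier}. Partition $T = T_C \sqcup T_{\bar C}$ according to whether a critical point lies on $C$. The second condition of (\ref{specialgeo}) fails for some $\alpha$ iff there exist $p = (u,v) \in T_{\bar C}$ and $p' = (u',v') \in C$ with $f_y(p') + s f_x(p') = 0$ sharing a sheared abscissa, i.e.~$u - sv = u' - sv'$. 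A coincidence with $v = v'$ would force $u = u'$ and hence $p = p'$, ruled out by $f(p) \ne 0 = f(p')$; thus $v \ne v'$ and $s = (u-u')/(v-v')$ is uniquely determined by the pair. Substituting this value back into $f_y(p') + s f_x(p') = 0$ yields the single polynomial constraint $G_p(u',v') := (v-v')\, f_y(u',v') + (u-u')\, f_x(u',v') = 0$.

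The main obstacle is to show that, for each of the finitely many $p \in T_{\bar C}$, the intersection $C \cap \{G_p = 0\}$ is finite, because then the map $p' \mapsto (u-u')/(v-v')$ yields only finitely many bad $s$, and the union of these bad values over $T_{\bar C}$ together with the zeros of $f_m(s, 1)$ remains finite. Since $\deg G_p \le m = \deg f$, by B\'ezout it suffices to rule out $f \mid G_p$. The geometric reading of $G_p(p') = 0$ is that the tangent line to $C$ at a smooth point $p'$ passes through $p$; hence $f \mid G_p$ would mean that every tangent to $C$ passes through the fixed point $p$. A short argument via the projection of $C$ from $p$, using that this projection would have vanishing derivative at every smooth point, then forces each component of $C$ to lie on a single line through $p$ and hence $p \in C$, contradicting $p \in T_{\bar C}$. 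This closes the dimension count and completes the proof.
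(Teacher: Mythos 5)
Your argument is correct, but it is genuinely different from the paper's. The paper fixes a single critical point of $f$, translates it to the origin, and studies $f(sy,y)/y^m$ as a polynomial in the two variables $s$ and $y$: if it had a multiple root in $y$ for \emph{every} value of $s$, it would have a square factor in $\C[s,y]$, and resubstituting $s=x/y$ would produce a square factor of $f$ itself, contradicting square-freeness. That yields the stronger statement that, for generic $s$, each critical fiber of $C_s$ carries at most one multiple root, located at the sheared critical point. You instead verify exactly the sufficient condition (\ref{specialgeo}) by classifying the bad shear directions: $s$ fails only if $(s,1)$ is the direction joining a critical point $p$ of $f$ off $C$ to a point $p'\in C$ on the polar curve $G_p=0$ of $C$ with respect to $p$ (tangent through $p$), and finiteness of $C\cap V(G_p)$ follows because a component all of whose tangents pass through $p$ must be a line through $p$, forcing $p\in C$. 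Your route buys an explicit, effective bound on the number of bad real $s$ (roughly $\#T_{\bar C}\cdot(\deg f)^2$ plus the roots of $f_m(s,1)$) and a transparent geometric picture; the paper's route is shorter, purely algebraic, and additionally delivers the one-critical-point-per-fiber property, which your argument does not (and need not) establish. Two small touch-ups: B\'ezout requires excluding any \emph{common factor} of $f$ and $G_p$, not only $f\mid G_p$ --- your component-wise tangent argument already does this, and it also shows $G_p\not\equiv 0$ --- and the critical set relevant to $Q=\res(f_x^*,f_y^*;y)$ is $V(f_x^*,f_y^*)$, which may contain finitely many points of $V(h)$ missing from your $T$; by Remark~\ref{remark:Teissier} these are likewise off $C$ and are handled by the identical argument.
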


\begin{proof}
For a generic $s$, the leading coefficient of $f(x+sy,y)$ (considered as a polynomial in $y$) is a constant, hence we can assume that $C$ has no vertical asymptote and contains no vertical line. We can further assume that $f_x$ and $f_y$ do not share a common non-trivial factor $h$. Otherwise, we have to remove $h$ first; see also Remark~\ref{remark:Teissier}.
Let $g(x,y)=f(x+sy,y)\in\RR[x,y]$ denote the defining equation of the sheared curve $C_s$, then the critical points of $C_s$ are the common solutions of $$g_x(x,y)=f_x(x+sy,y)=0\quad\text{and}\quad g_y(x,y)=f_x(x+sy)\cdot s+f_y(x+sy,y)=0.$$
Hence, the critical points of $C_s$ are exactly the points $(\alpha',\beta')=(\alpha-s\beta,\beta)$, where $(\alpha,\beta)$ is a critical point of $C$.
We now consider a specific $(\alpha,\beta)$ and show that, for a generic $s$, the polynomial $g(\alpha',y)$ has either no multiple root or exactly one multiple root at $y=\beta'=\beta$, where $(\alpha',\beta')=(\alpha-s\beta,\beta)$ denotes the corresponding critical point of $C_s$. Then, the same holds for all critical values $(\alpha',\beta')$ in parallel because there are only finitely many critical $(\alpha,\beta)$ for $C$.
Hence, from the definition of $n_{\alpha'}^+$, it then follows that $n_{\alpha'}^+=n_{\alpha'}$ for all $x$-critical values $\alpha'$ of $C_s$.
W.l.o.g., we can assume that $(\alpha,\beta)=(0,0)$, and thus $(\alpha',\beta')=(0,0)$ for the corresponding critical point of $C_s$. Let $y^m$ be the highest power of $y$ that divides $g(0,y)=f(sy,y)$, and define $f^*(s,y):=f(sy,y)/y^m$. 
If there exists an $s_0\in\RR$ such that $f^*(s_0,y)$ has no multiple root, then we are done. Otherwise, for each $s$, $f^*(s,y)$ has a multiple root $y_0$ that is different from $0$. It follows that $f^*(s,y)$ is not square-free, that is, there exist polynomials $p_1,p_2\in\C[s,y]$ with 
\begin{align}
f^*(s,y)=\frac{f(sy,y)}{y^m}=p_1^2(s,y)\cdot p_2(s,y)\label{eq:factors}
\end{align}
We remark that, for each $s\in\C$, there exists a $y_s\in\C\backslash\{0\}$ such that $p_1(s,y_s)=0$. Hence, for $x_s:=s/y_s$, we have $p_1(x_s/y_s,y_s)=0$, and thus $p_1(x/y,y)$ cannot be a power of $y$. Now plugging $s=x/y$, with $y\neq 0$, into (\ref{eq:factors}) yields
\[
f(x,y)=y^m\cdot p_1^2(x/y,y)\cdot p_2(x/y,y)=y^m\cdot\left(\frac{\tilde{p}_1(x,y)}{y^{m_1}}\right)^2\cdot \frac{\tilde{p}_2(x,y)}{y^{m_2}}=y^{m-2m_1-m_2}\cdot \tilde{p}_1^2(x,y)\cdot \tilde{p}_2(x,y),
\]
where $\tilde{p}_1,\tilde{p}_2\in\C[x,y]$, and $m_1,m_2\in\N$. Since $f(x,y)$ is square-free, this is only possible if $\tilde{p}_1(x,y)$ is a power of $y$. This implies that $p_1(x/y,y)=\tilde{p}_1(x,y)/y^{m_1}$ is also a power of $y$, a contradiction.
\end{proof}

We remark that, in the context of computing the topology of a planar algebraic curve, Teissier's formula has already been used in~\cite{cheng.lazard.ea:on,LuisPhD2010}. There, the authors apply (\ref{Teissier}) in its simplified form (i.e.~$\operatorname{Int}(f_x,f_y,p)=0$) to compute $\operatorname*{mult}(\beta,f(\alpha,y))$ for a non-singular point $p=(\alpha,\beta)$. In contrast, we use the formula in its general form and sum up the information along the entire fiber which eventually leads to the upper bound $n_{\alpha}^+$ on the number of distinct complex roots of $f_{\alpha}$.\\

\newcommand{\LB}{\ensuremath{N^-}}
\newcommand{\MM}{\ensuremath{N}}
\newcommand{\UB}{\ensuremath{N^+}}

\newcommand{\ZZ}{\ensuremath{\mathbb{Z}}}
\newcommand{\ZZp}{\ensuremath{\ZZ_p}}
\newcommand{\Sres}{\ensuremath{\operatorname{Sres}}}
\newcommand{\sres}{\ensuremath{\operatorname{sres}}}
\newcommand{\coeff}{\ensuremath{\operatorname{coeff}}}
\newcommand{\lcoeff}{\ensuremath{\operatorname{lcoeff}}}
\newcommand{\mult}{\ensuremath{\operatorname{mult}}}

In the next step, we provide a method to check in advance whether the curve $C$ is in a generic position in the sense of Theorem~\ref{thm:genericposition}.  
Unfortunately, we see no cheap way to check generic position with respect to a \emph{specific} $x$-critical fiber $x=\alpha$, that is, whether $n_\alpha^+$ matches $n_\alpha$ for a specific $\alpha$.
However, we can derive a global test to decide whether the upper bound $n^+_{\alpha}$ matches $n_{\alpha}$ for \emph{all} fibers.
While the evaluation of the corresponding test with exact integer arithmetic 
is expensive, we can use the same argument to derive a conservative modular 
test which returns the same answer with very high probability. The test relies 
on the comparison of an \emph{upper bound $N^+$ for $\sum_{\alpha} 
n^+_{\alpha}$} (i.e.~$N^+\ge\sum_{\alpha} 
n^+_{\alpha}\ge N:=\sum_{\alpha}n_{\alpha}$) and a \emph{lower bound $N^-$ for $N$} (i.e.~$N^-\le N=\sum_{\alpha}n_{\alpha}$), where we sum over all (complex) $x$-critical 
values $\alpha$.
Then, $N^-=N^+$ implies that $n_{\alpha}=n_{\alpha}^+$ for all $\alpha$. We 
now turn to the computation of $N^-$ and $N^+$. Here, we assume that $f$ has 
no vertical asymptote and no vertical component (in particular, $\deg_y 
f(\alpha, y) = \deg_y f(x,y) =: n_y$ for all values $\alpha.$).

\paragraph{Computation of $\mathbf{\UB}$}
\begin{lemma}
The sum over all $n^+_\alpha$, $\alpha$ a complex $x$-critical value of $C$, yields:
  \begin{align*}
    (\deg_x R^* \cdot \deg_y f) - \deg_x R + \deg_x \gcd (R^\infty, Q),
  \end{align*}
  where $Q = \res (f_x^*, f^*_y; y)$, and $\gcd (R^\infty, Q)$ is defined as the product of all common factors of $R$ and $Q$ with multiplicity according to their occurrence in $Q.$
\end{lemma}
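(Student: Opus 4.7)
The plan is to simply unfold the definition of $n_\alpha^+$ from equation~(\ref{eq2}) and sum termwise, using the assumption that $f$ has no vertical asymptote (so we are always in the case $\deg f_\alpha = \deg_y f$ and formula (\ref{eq2}) applies uniformly). Recall that
\[
n_\alpha^+ = \deg_y f - \operatorname{mult}(R,\alpha) + \operatorname{mult}(Q,\alpha),
\]
where $R = \res(f,f_y;y)$ and $Q = \res(f_x^*,f_y^*;y)$. Since $C$ has no vertical asymptote, the leading coefficient of $f$ in $y$ is a nonzero constant, and Theorem~\ref{thm:resultants} tells us that the complex roots of $R$ coincide with the complex $x$-critical values of $C$. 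Hence the sum ranges over exactly the distinct roots of $R$.

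The first step is then to split the sum into three pieces. The constant term $\deg_y f$ contributes $\deg_y f$ once per distinct complex $x$-critical value. Since these are exactly the distinct roots of $R$, and since $R^*$ denotes the square-free part of $R$, their count equals $\deg_x R^*$. This accounts for the term $\deg_x R^* \cdot \deg_y f$.

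Next, the middle sum $\sum_\alpha \operatorname{mult}(R,\alpha)$ runs over all distinct complex roots of $R$, and by definition the sum of multiplicities of a univariate polynomial over all its complex roots equals its degree; thus this contributes exactly $\deg_x R$, explaining the $-\deg_x R$ term.

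Finally, for the sum $\sum_\alpha \operatorname{mult}(Q,\alpha)$, we observe that $\operatorname{mult}(Q,\alpha) = 0$ whenever $\alpha$ is a root of $R$ but not a root of $Q$, so only the common roots of $R$ and $Q$ contribute, and each such root is weighted by its full multiplicity in $Q$. By the definition of $\gcd(R^\infty, Q)$ given in the statement (common factors counted with multiplicity as they occur in $Q$), this sum equals $\deg_x \gcd(R^\infty, Q)$. Combining the three contributions yields the claimed formula.

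The main obstacle, if one wants to call it that, is conceptual rather than computational: one has to be careful with the two slightly unusual conventions in play, namely that the outer index runs over \emph{distinct} critical values (giving $\deg_x R^*$, not $\deg_x R$), while the internal multiplicities of $R$ and $Q$ at these values are counted with full multiplicity, and the notation $\gcd(R^\infty, Q)$ is tailored precisely to capture the asymmetric rule ``the $\alpha$'s are roots of $R$, but weighted by their multiplicity in $Q$.'' Once this bookkeeping is set up, no further argument is needed.
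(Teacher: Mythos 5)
Your proposal is correct and follows essentially the same argument as the paper: split the sum of $n_\alpha^+ = \deg_y f - \operatorname{mult}(R,\alpha) + \operatorname{mult}(Q,\alpha)$ over the $\deg_x R^*$ distinct critical values into three pieces, with the multiplicities of $R$ summing to $\deg_x R$ and the $Q$-multiplicities at common roots giving $\deg_x\gcd(R^\infty,Q)$. Your version is simply a more carefully spelled-out rendering of the paper's three-sentence proof, including the correct observation that the no-vertical-asymptote assumption makes formula~(\ref{eq2}) apply uniformly.
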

\begin{proof}
  For the first term, note that $\deg_x R^*$ is the number of distinct complex $x$-critical values for $f$ and, thus, the number of summands in $\sum_{\alpha}n_{\alpha}$.
  The sum over all multiplicities $\mult(R,\alpha)$ for the roots $\alpha$ of $R$ simply yields the degree of $R.$
  Finally, the summation over $\mult(Q,\alpha)$ amounts to removing the factors of $Q$ that do not share a root with $R.$
\end{proof}

We remark that the square-free part $R^*$ of the resultant $R$ is already computed in the projection phase of the curve analysis, and thus we already know $\deg_x R^*$. The additional computation of $Q$ and $\gcd(R^\infty, Q)$ can be performed over a modular prime field $\ZZp$ for some randomly chosen prime $p.$ Then, $\deg_x (\gcd (R^\infty\bmod p, Q\bmod p))\ge\deg_x \gcd(R^\infty, Q)$, and thus
\begin{align}
\UB:=(\deg_x R^* \cdot \deg_y f) - \deg_x R + \deg_x (\gcd (R^\infty\bmod p, Q\bmod p))\label{eq:up-bound}
\end{align}
constitutes an upper bound for $\sum_{\alpha} n_{\alpha}^+$. We remark that the result obtained by the modular computation matches $\sum_{\alpha}n_{\alpha}^+$ with very high probability. That is, up to the choice of finitely many ``unlucky'' primes, we have $\UB=\sum_{\alpha}n_{\alpha}^+$.\\

In the next step, we show how to compute a lower bound $\LB$ for $N$. In order to understand its construction, we first explain how to exactly compute $N$. We stress that our algorithm never performs this computation.

\paragraph{(Exact) Computation of $\mathbf{\MM}$}
Consider a decomposition of the square-free part $R^*$ of the resultant $R = \res (f, f_y; y)$:
\begin{align}\label{eq:res-decomposition-exact}
  R^* = R_1 R_2 \cdots R_s, \quad R_i \in \ZZ[x],
\end{align}
such that $R_i(\alpha) = 0$ if and only if $f(\alpha, y)$ has exactly $n_y-i$ distinct complex roots.
Note that all $R_i$ are square-free and pairwise coprime.
With $d_i := \deg R_i$ the degree of the factor $R_i$, it follows that
\begin{align*}
  \MM = \sum_{1 \le i \le r} (n_y-i) \cdot d_i.
\end{align*}

\noindent The computation of the decomposition in (\ref{eq:res-decomposition-exact}) uses \emph{subresultants.} The \emph{$i$-th subresultant} polynomial $\Sres_i(f,g;y)\allowbreak \in \ZZ[x,y]$ of two bivariate polynomials $f$ and $g$ with $y$-degrees $m_y$ and $n_y,$ respectively, is defined as the determinant of a Sylvester-like matrix.
\begin{align*}
  \Sres_i (f,f_y;y) := \left\lvert\;
    \begin{matrix}
      f_{m_y}^{(y)} & f_{m_y-1}^{(y)} & \cdots & \cdots & f_{2i-n_y+2}^{(y)} & y^{n_y-i-1} f\\
      & \ddots & \ddots & \ddots & \vdots & \vdots\\
      & & f_{m_y}^{(y)} & \cdots & f_{i+1}^{(y)} & f\\
      g_{n_y}^{(y)} & g_{n_y-1}^{(y)} & \cdots & \cdots & g_{2i-m_y+2}^{(y)} & y^{m_y-i-1} g\\
      & \ddots & \ddots & \ddots & \vdots & \vdots\\
      & & g_{n_y}^{(y)} & \cdots & g_{i+1}^{(y)} & g
    \end{matrix}
    \;\right\rvert\!\!
  \begin{array}{l}
    \left.
    \begin{matrix}
      \vphantom{fg_{m_y}^{(y)}}\\\vphantom{fg_{m_y}^{(y)}}\\\vphantom{fg_{m_y}^{(y)}}
    \end{matrix}
    \right\}\,\text{\small $n_y-i$ rows}\\
    \vspace*{-0.95em}\\
    \left.
    \begin{matrix}
      \vphantom{fg_{m_y}^{(y)}}\\\vphantom{fg_{m_y}^{(y)}}\\\vphantom{fg_{m_y}^{(y)}}
    \end{matrix}
    \right\}\,\text{\small $m_y-i$ rows}
  \end{array}
\end{align*}
The subresultants exhibit a direct relation to the number and multiplicities of common roots of $f$ and $g.$
More specifically, it holds that $\deg\gcd(f(\alpha,y),g(\alpha,y))=k$ if and only if the \emph{$i$-th principal subresultant coefficient (psc)} $\operatorname{sr}_i(x):=\sres_i(f,g;y) := \coeff_i(\Sres_i(f, g; y);y) \in \ZZ[x]$ vanishes at $\alpha$ for all $i=0,\ldots,k-1$, and $\operatorname{sr}_k(\alpha)\neq 0$ (e.g.~see \cite{KerberPhD2009,alg-geom-06} for a proof).

Thus, the decomposition in (\ref{eq:res-decomposition-exact}) can be derived as
\begin{alignat}{3}\label{def:Sis}
  S_0 &:= R^*,
  &\qquad
  S_i &:= \gcd (S_{i-1}, \operatorname{sr}_i) &&\quad\text{ for } i=1,\dots,s,\\ 
  R_1 &:= \frac{S_0}{\gcd(S_0, S_1)},
  &\qquad
  R_i &:= \frac{\gcd(S_0, \dots, S_{i-1})}{\gcd(S_0, \dots, S_{i-1}, S_i)} &&\quad\text{  for } i=1,\dots,s,\label{def:Ris}
\end{alignat}
where $s$ is the number of non-trivial entries in the subresultant sequence of $f$ and $f_y.$ The computation of $\MM$ as described here requires the exact computation of all psc's, a very costly operation which would affect the overall runtime considerably. Instead, we consider the following modular approach:

\paragraph{Computation of $\mathbf{\LB}$}
The main idea of our approach is to perform the above subresultant computation 
over $\ZZp$ for a \emph{single}, randomly chosen prime $p$. More precisely, we denote 
$$\operatorname{sr}_i^{(p)}(x):=\sres_i^{(p)}(f^{(p)},g^{(p)};y) := 
\coeff_i(\Sres_i^{(p)}(f^{(p)},g^{(p)}; y);y) \in \ZZp[x]$$ the $i$-th principle subresultant coefficient in 
the subresultant sequence of $f^{(p)}:=f\bmod p\in\ZZp[x,y]$ and $g^{(p)}:=g\bmod 
p\in\ZZp[x,y]$. The polynomials $S_i^{(p)}\in\ZZp[x]$ and $R_i^{(p)}\in\ZZp[x]$ are then defined in completely 
analogous manner as the polynomials $S_i\in\Z[x]$ and $R_i\in\Z[x]$ in (\ref{def:Sis}) and (\ref{def:Ris}), respectively. 
The following lemma shows that this yields a lower bound for $N$ if $p$ does not divide the leading coefficient of $f$ and $f_y$: 
\begin{lemma}
  Let $p$ be a prime that does not divide the leading coefficient of $f$ and $f_y$, and let $d_i^{(p)} := \deg R_i^{(p)}$ denote the degree of $R_i^{(p)}$. 
Then,
  \begin{align}
    \LB := \sum_{i \ge 1} (n_y-i) \cdot d_i^{(p)} \label{eq:lower-bound}
  \end{align}
  constitutes a lower bound for the total number $\MM$ of distinct points on $C$ in $x$-critical fibers.
\end{lemma}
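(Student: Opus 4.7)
The plan is to reinterpret $\LB$ as a sum of numbers of distinct fiber roots over the \emph{modular} $x$-critical values and then to compare it term by term with $\MM$ via a specialization argument for subresultants.

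\emph{Step 1 (rewriting $\LB$).} Since $S_i^{(p)}$ divides $S_{i-1}^{(p)}$ by construction and $S_0^{(p)} = R^{*(p)}$, one obtains $R_i^{(p)} = S_{i-1}^{(p)}/S_i^{(p)}$ and hence $\prod_{i\ge 1} R_i^{(p)} = R^{*(p)}$, with the $R_i^{(p)}$ pairwise coprime and square-free. The subresultant characterization of $\gcd$-degrees --- valid under the lemma's hypothesis, which keeps $\deg_y f^{(p)}(\tilde\alpha, y) = n_y$ for every $\tilde\alpha \in \overline{\ZZp}$ --- identifies each root $\tilde\alpha$ of $R^{*(p)}$ with a root of precisely one $R_i^{(p)}$, namely the one with $i = \tilde k_{\tilde\alpha} := \deg\gcd(f^{(p)}(\tilde\alpha, y), f_y^{(p)}(\tilde\alpha, y))$. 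Summing yields
\[
\LB = \sum_{\tilde\alpha}(n_y - \tilde k_{\tilde\alpha}),
\]
the total number of distinct roots of the ``modular fiber polynomials'' $f^{(p)}(\tilde\alpha, y)$ as $\tilde\alpha$ ranges over the roots of $R^{*(p)}$.

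\emph{Step 2 (subresultant specialization).} The specialization property applies because $p$ does not divide $\lcoeff_y f$ or $\lcoeff_y f_y$, yielding $\sres_i^{(p)}(x) = \sres_i(x) \bmod p$. For any root $\alpha$ of $R^*$ in $\overline{\mathbb{Q}}$ whose image under a fixed extension of the $p$-adic valuation is $\tilde\alpha \in \overline{\ZZp}$, this identity specializes to $\sres_i^{(p)}(\tilde\alpha) = \sres_i(\alpha) \bmod p$. Since $\sres_i(\alpha) = 0$ for every $i < k_\alpha := \deg\gcd(f(\alpha,y), f_y(\alpha,y))$, the vanishing persists modulo $p$, so $\tilde k_{\tilde\alpha} \ge k_\alpha$ and consequently $n_y - \tilde k_{\tilde\alpha} \le n_\alpha$.

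\emph{Step 3 (surjective reduction and conclusion).} Next I would show that the reduction map $\pi\colon \{\alpha : R^*(\alpha) = 0\} \to \{\tilde\alpha : R^{*(p)}(\tilde\alpha) = 0\}$ is surjective: any $\tilde\alpha$ is a root of $R^{(p)}$ and thus, via Hensel's lemma applied to the chosen $p$-adic extension, lifts to a $p$-integral root of $R$ in $\overline{\mathbb{Q}}$, which necessarily lies on $R^*$ because every root of $R$ does. Fixing a section $\sigma$ of $\pi$ and combining with Step 2 gives
\[
\LB \;=\; \sum_{\tilde\alpha}(n_y - \tilde k_{\tilde\alpha}) \;\le\; \sum_{\tilde\alpha} n_{\sigma(\tilde\alpha)} \;\le\; \sum_\alpha n_\alpha \;=\; \MM,
\]
the last inequality holding because the $\sigma(\tilde\alpha)$'s are pairwise distinct roots of $R^*$ and all $n_\alpha$ are nonnegative.

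\emph{Main obstacle.} The delicate step is Step 3, namely making ``reduction of an algebraic $x$-critical value'' fully rigorous and justifying surjectivity when several roots of $R^*$ merge modulo $p$. The leading-coefficient hypothesis covers the generic behaviour cleanly; in degenerate situations, any merging of distinct $\alpha$'s modulo $p$ can only shrink the outer sum defining $\LB$ (fewer $\tilde\alpha$'s contribute), which pushes the estimate in precisely the direction required.
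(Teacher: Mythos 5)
Your strategy is genuinely different from the paper's: you reinterpret $\LB$ as a count of distinct fiber roots over the \emph{modular} critical values, lift those values to characteristic zero, and compare $\gcd$-degrees fiber by fiber, whereas the paper never looks at individual roots at all --- it works purely with the degrees $n_i=\deg S_i$ and $n_i^{(p)}=\deg S_i^{(p)}$, deduces $n_i^{(p)}\ge n_i$ from the specialization property of subresultants, and concludes by Abel summation ($\sum_{i\ge1} i\,d_i=\sum_{i\ge0}n_i$). However, your version has a genuine gap, and it sits in Step~1 rather than in Step~3 where you locate the delicate point. The identity $\LB=\sum_{\tilde\alpha}(n_y-\tilde k_{\tilde\alpha})$, with the sum running over \emph{distinct} roots of $R^{*(p)}$, requires the $R_i^{(p)}$ to be square-free, and in general they are not: $S_0^{(p)}=R^*\bmod p$ acquires multiple roots precisely when distinct roots of $R^*$ collide modulo $p$, which is one of the ``unlucky'' situations the lemma is required to cover. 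When a root $\tilde\alpha$ of $R^{*(p)}$ has multiplicity $m>1$, its total multiplicity across the $R_i^{(p)}$ is $m$ (and may even be distributed over several indices $i$, all of which satisfy $i\le\tilde k_{\tilde\alpha}$ by the subresultant criterion), so it contributes at least $m\,(n_y-\tilde k_{\tilde\alpha})$ to $\sum_i(n_y-i)\,d_i^{(p)}$ but only $n_y-\tilde k_{\tilde\alpha}$ once to your sum. Hence $\LB\ge\sum_{\tilde\alpha}(n_y-\tilde k_{\tilde\alpha})$, possibly strictly, and bounding the right-hand side by $\MM$ does not bound $\LB$. Schematically: if $R^*=(x-1)(x-p-1)$ with $k_1=k_{p+1}=1$ and $\operatorname{sr}_1^{(p)}(1)\neq0$, then $R_1^{(p)}=(x-1)^2$ and $\LB=2(n_y-1)=\MM$, while your sum equals $n_y-1$.

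Your closing remark that ``merging can only shrink the outer sum, which pushes the estimate in the right direction'' is therefore exactly backwards for the way you have arranged the argument: the shrinkage affects the root-counting sum, not $\LB$ itself, and it opens a gap between the two in the unfavorable direction. The argument can be repaired by carrying multiplicities through --- replace $\sum_{\tilde\alpha}(n_y-\tilde k_{\tilde\alpha})$ by $\sum_{\tilde\alpha}\sum_i(n_y-i)\operatorname{mult}(R_i^{(p)},\tilde\alpha)$ and control $\operatorname{mult}(G_i^{(p)},\tilde\alpha)$ from below by the number of roots $\alpha$ of $R^*$ reducing to $\tilde\alpha$ with $k_\alpha>i$ --- but this localized bookkeeping is essentially the paper's degree argument in disguise, so the detour through individual roots buys nothing. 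Two further (shared) caveats: your Hensel/lifting step in Step~3 and the equality $\deg S_0^{(p)}=\deg S_0$ both implicitly need $p\nmid\lcoeff_x(R^*)$, which is not among the stated hypotheses; the paper's proof silently assumes the same in its diagram ($d=n_0^{(p)}$), so this is not a defect specific to your write-up, but it deserves a remark.
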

\begin{proof}
  It suffices to show that $\sum_{i\ge 1} i \cdot d_i\le \sum_{i\ge 1} i\cdot d_i^{(p)}.$
  Namely, using $d := \deg R^* = \sum_{i\ge 1} d_i = \sum_{i\ge 1} d_i^{(p)},$ we obtain
  \begin{align*}
    \LB = \sum_{i \ge 1} (n_y-i) \cdot d_i^{(p)}
    = n_y d - \sum_{i \ge 1} i \cdot d_i^{(p)}
    \le n_y d - \sum_{i \ge 1} i \cdot d_i
    = \sum_{i \ge 1} (n_y-i) \cdot d_i
    = N.
  \end{align*}
  %
  %
Since $p$ does not divide the leading coefficient of $f$ and $f_y$, we have $\operatorname{sr}_i^{(p)}=\operatorname{sr}_i\bmod p$ due to the specialization property of subresultants. Hence, $n_i^{(p)}:=\deg S_i^{(p)}\ge n_i:=\deg S_i$ which implies the following diagram (with some $t$ such that $s\le t\le n$)
  \begin{align*}
    \newcommand{\LE}{\parbox[c]{\widthof{\rotatebox{-90}{$\le$}}}{\rotatebox{-90}{$\le$}}}
    \newcommand{\GE}{\parbox[c]{\widthof{\rotatebox{-90}{$\ge$}}}{\rotatebox{-90}{$\ge$}}}
    \newcommand{\E}{\parbox[c]{\widthof{\rotatebox{-90}{$=$}}}{\rotatebox{-90}{$=$}}}
    \begin{array}{r@{\;}*{13}{c@{\;}}l}
      d = &n_0^{(p)} &\ge &n_1^{(p)} &\ge &\cdots &\ge &n_s^{(p)} &\ge &n_{s+1}^{(p)} &\ge &\cdots &\ge &n_t^{(p)} &= 0\\[0.5ex]
      &\E & &\GE & &  & &\GE & &\GE & &  & &\E \\[0.5ex]
      d = &n_0 &\ge &n_1 &\ge &\cdots &\ge &n_s &= &n_{s+1} &= &\cdots &= &n_t &= 0
    \end{array}
  \end{align*}
  Furthermore, we have $d_i = n_{i-1} - n_i$ and $d_i^{(p)} = n_{i-1}^{(p)} - n_i^{(p)}.$ Thus,
  \begin{align*}
    \sum_{i\ge 1} i \cdot d_i
    = \sum_{i\ge 1} \sum_{j\ge i}^s d_j
    = \sum_{i\ge 1} \sum_{j\ge i}^s (n_{j-1} - n_j)
    = \sum_{i\ge 1} n_{i-1}
    = \sum_{i\ge 0} n_i \quad(\text{since } n_i = 0\text{ for }i\ge s)
  \end{align*}
  and, analogously, $\sum_{i\ge 1} i \cdot d_i^{(p)} = \sum_{i\ge 0} n_i^{(p)}.$ This shows $\sum_{i\ge 1} i \cdot d_i \le \sum_{i\ge 1} i\cdot d_i^{(p)}$.
\end{proof}

We remark that, for all but finitely many (unlucky) choices of $p,$ all polynomials $R_i$ and $R^{(p)}_i$ have the same degree. 
Thus, with high probability, $\LB$ as defined in (\ref{eq:lower-bound}) matches $\MM$. In addition, also with very high probability, we have $N^+=\sum_{\alpha}n^{+}_{\alpha}$. Hence, if the curve $C$ is in generic position and our choice of $p$ is not unlucky, then $N^-=N^+=N$, and thus we can certify in advance that $n_\alpha = n_\alpha^+$ for all $x$-critical values $\alpha$. We would like to emphasize that the only exact computation (over $\Z$) that is needed for this test is that of the square-free part of the resultant $R$ (more precisely, only that of its degree). All other operations can be performed over $\ZZp$ for a single, randomly chosen prime $p$. Putting everything together now yields our method \fastlift to compute the fiber at an $x$-critical value:

\paragraph{\fastlift} We consider a hybrid method to isolate all complex roots and, thus, also the real roots
of $f_{\alpha}(y)=f(\alpha,y)\in\RR[y]$, where $\alpha$ is a real valued $x$-critical value of the curve $C$. It combines 
\begin{inparaenum}[(a)]
  \item a numerical solver to compute arbitrary good approximations (i.e.~complex discs in $\mathbb{C}$) of the roots of $f_{\alpha}$,
  \item an exact certification step to
certify the existence of roots within the computed discs, and 
  \item additional knowledge on the number $n_{\alpha}$ of distinct (complex) roots of $f_{\alpha}$.
\end{inparaenum}
\fastlift starts with computing the upper bound $n_{\alpha}^+$ for $n_{\alpha}$ and the values $N^-$ and $N^+$ as defined in (\ref{eq2}), (\ref{eq:lower-bound}), and (\ref{eq:up-bound}), respectively. We distinguish two cases:

\begin{itemize}
\item $N^-=N^+$: In this case, we know that $n_{\alpha}=n_{\alpha}^+$. We now use a numerical solver to determine disjoint discs $D_1,\ldots,D_{m}\subset\C$ and an exact certification step to certify the existence of a
certain number $m_i\ge 1$ of roots (counted with multiplicity) of
$f_{\alpha}$ within each $D_i$; see \ref{asec:numerical} for
details. Increasing the working precision and the number of iterations within
the numerical solver eventually leads to arbitrary well refined discs $D_i$
-- but without a guarantee that these discs are actually isolating! However, from a certain iteration on, the number of discs certified to contain at least one root matches $n_{\alpha}$. When this happens, we know for sure that the $D_i$'s are isolating. We can then further refine these discs until, for all
$i=1,\ldots,m$, 
\begin{align}
D_i\cap\mathbb{R}= \emptyset\text{ or }\bar{D}_i\cap D_j=\emptyset\text{
for all }j\neq i,\label{condition1} 
\end{align}
where $\bar{D}_i:=\{\bar{z}:z\in D_i\}$ denotes the complex conjugate of
$D_i$. The latter condition guarantees that each disc $D_i$ which
intersects the real axis actually isolates a real root of $f_{\alpha}$. In
addition, for each real root isolated by some $D_i$, we further
obtain its multiplicity $m_i$ as a root of $f_{\alpha}$.

\item $N^-<N^+$: In this case, we have either chosen an unlucky prime in some of the modular computations, or the curve $C$ is located in a special geometric situation; see (\ref{specialgeo}) and Theorem~\ref{thm:genericposition}. However, despite the fact that there might exist a few critical fibers where $n_{\alpha}<n_{\alpha}^+$, there is still a good chance that equality holds for most $\alpha$. Hence, we propose to use the numerical solver \emph{as a filter} in a similar manner as in the case, where $N^- = N^+$. More precisely, we run the numerical solver on $f_{\alpha}$ for a certain number of iterations.\footnote{The threshold for the number of iterations 
should be chosen based on the degree of $f$ and its coefficient's bitlengths. For the instances
considered in our experiments, we stop when reaching 2048 bits of precision.} Since $n_{\alpha}^+$ constitutes an upper bound on the number of distinct complex roots of $f_{\alpha}$, we must have $m\le n_{\alpha}\le n_{\alpha}^+$ at any time. Hence, if the number $m$ equals $n_{\alpha}^+$, we know for sure that all complex roots of $f_{\alpha}$ are isolated and can then proceed as above. If, after a number of iterations, it still holds that $m<n_{\alpha}^+$, \fastlift reports
a failure.
\end{itemize}

\fastlift is a certified method, that is, in case of success, 
it returns the mathematical correct result. However, in comparison to the 
complete method \slowlift, \fastlift may not apply to all critical fibers
if the curve $C$ is in a special geometric situation.
We would like to remark that, for computing the topology of the curve $C$ only, we can exclusively use \fastlift as the lifting method.
Namely, when considering, as indicated earlier,
an initial shearing $x\mapsto x+s\cdot y$, with $s$ a randomly 
chosen integer, the sheared curve $$C_s:=\{(x,y)\in\RR^2:f(x+s\cdot 
y,y)=0\}$$
is in generic situation (with high probability) due to Theorem~\ref{thm:genericposition}. Then, 
up to an unlucky choice 
of prime numbers in the modular computations, we obtain bounds $N^-$ and $N^+$ 
for $N$ which are equal. Hence, up to an unlucky choice of finitely many "bad" 
shearing parameters $s$ and primes $p$, the curve $C_s$ is in a generic situation, and, in addition, we can actually prove this.
It follows that $n_{\alpha}^+=n_{\alpha}$ for all 
$x$-critical values of the sheared curve $C_s$, and thus \fastlift is successful for all fibers. Since the sheared curve is 
isotopic to $C$, this shows that we can always compute the topology of $C$ by 
exclusively using \fastlift during the lifting phase.

\subsubsection{\calift --- Combining \slowlift and \fastlift}
\label{sssec:ca:lift:algo}

We have introduced two different methods to compute the fibers at the $x$-critical values of a curve $C$. \slowlift is certified and complete, but turns out to be less efficient than \fastlift which, in turn, may fail for a few fibers for curves in a special geometric situation. Hence, in the lifting step, we propose to combine the two methods. That is, we run \fastlift by default, and fall back to \slowlift only if \fastlift fails. In practice, as observed in our experiments presented in Section~\ref{ssec:implex:ca}, 
the failure conditions for \fastlift are almost negligible, that is, the method only fails for a few critical fibers for some curves in a special geometric situation.
In addition, in case of a failure, we profit from the fact that our backup method \slowlift applies very well to a specific fiber. That is, its computational cost is almost proportional to the number of fibers that are considered.

We also remark that, for the modular computations of $N^-$ and $N^+$, we never observed any failure when choosing a reasonable large prime. 
However, it should not be concealed that we only performed these computations off-line in Maple.
Our C++-implementation still employs a more naive approach, where we always use \fastlift as a filter as described in the case $N^-<N^+$ above.

In the last section, we mentioned that \fastlift can be turned into a complete method when considering an initial coordinate transformation. Hence, one might ask why we do not consider such a transformation to compute the topology of $C$. There are several reasons to not follow this approach. Namely, when considering 
a shearing, the algorithm computes the topology of $C$, but does not directly 
yield a geometric-topological analysis of the curve since the vertices of the so-obtained graph are not located on $C$.
In order to achieve the latter as well, we still have to "shear back" the information for the sheared curve, an operation which is non-trivial at all; see~\cite{eigenwilligkw07} for details. Even though the latter approach seems manageable for a single curve, it considerably complicates the arrangement computation (see Section~\ref{sec:arr}) because the majority of the input curves can be treated in the initial coordinates.
Furthermore, in particular for sparse input, a coordinate transformation induces considerably higher computational costs in all subsequent operations.\\

\subsection{\caconnect}
\label{ssec:ca:alg:connect}

Let us consider a fixed $x$-critical value $\alpha$, the corresponding
isolating interval $I(\alpha)=(a,b)$ computed in the projection phase and
the points $p_i:=(\alpha,y_{\alpha,i})\in C$, $i=1,\ldots,m_{\alpha}$,
located on $C$ above $\alpha$. Furthermore, let $I=(\alpha,\alpha')$ be the
interval connecting $\alpha$ with the nearest $x$-critical value to the
right of $\alpha$ (or $+\infty$ if none exists) and $A_j$, $j=1,\ldots,m_I$, the $j$-th arc of $C$ above
$I$ with respect to vertical ordering. $A_j$ is represented by a point
$a_j:=(q_I,y_{I,j})\in C$, where $y_{I,j}$ denotes the $j$-th real root of
$f(q_I,y)$ and $q_I$ an arbitrary but fixed rational value in $I$. 
To its left, $A_j$ is either
connected to $(\alpha,\pm \infty)$ (in case of a vertical asymptote) or to
one of the points $p_i$. In order to determine the point to which an arc
$A_j$ is connected, we consider the following two distinct cases:

\begin{figure*}[t]
  \centering
  \includegraphics[width=0.9\linewidth]{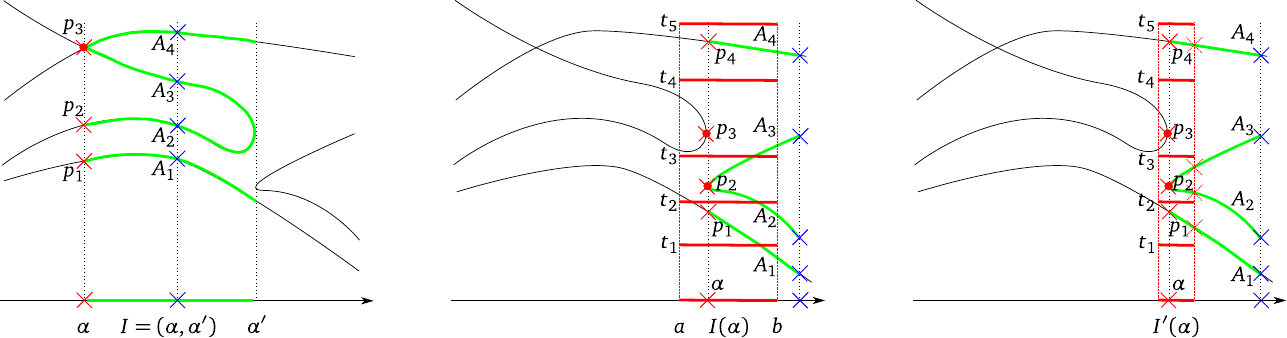}
\caption{\label{fig:connection}The left figure shows the generic case, where
exactly one $x$-critical point ($p_3$) above $\alpha$ exists.
  The bottom-up method connects $A_1$ to $p_1$ and $A_2$ to $p_2$; the
remaining arcs have to pass $p_3$. In the second figure, the fiber at
$\alpha$ contains two critical points $p_2$ and $p_3.$
  The red horizontal line segments pass through arbitrary chosen points
$(\alpha,t_i)$ separating $p_{i-1}$ and $p_i$.
  The initial isolating interval $I(\alpha) = (a,b)$ for $\alpha$ is not
sufficient to determine the connections for all arcs since $A_1, A_2, A_3$
intersect the segments $I \times \{ t_i \}.$ On the right, the refined
isolating interval $I'(\alpha)$ induces boxes $I'(\alpha) \times (t_i,
t_{i+1})$ small enough such that no arc crosses the horizontal boundaries.
By examination of the $y$-coordinates of the intersections between the arcs
and the fiber over the right-hand boundary of $I'(\alpha)$ (red crosses),
we can match arcs and critical points.\protect\\[-2em]\strut}
\end{figure*}

\begin{itemize}
\item The \emph{generic case}, that is, there exists exactly one real
$x$-critical point $p_{i_0}$ above $\alpha$ and $\deg f(\alpha,y)=\deg_y
f$. The latter condition implies that $C$ has no vertical asymptote at
$\alpha$. Then, the points $p_1,\ldots,p_{i_0-1}$ must be connected with
$A_1,\ldots,A_{i_0-1}$ in bottom-up fashion, respectively, since, for each of these points,
there exists a single arc of $C$ passing this point. The same argument
shows that $p_{i_0+1},\ldots,p_{m_{\alpha}}$ must be connected to
$A_{m_I-m_{\alpha}+i_0+1},\ldots,A_{m_I}$ in top-down fashion, respectively. Finally, the remaining arcs
in between must all be connected to the $x$-critical point $p_{i_0}$.

\item The \emph{non-generic case}: We choose
arbitrary rational values $t_1,\ldots,t_{m_{\alpha}+1}$ with
$t_1<y_{\alpha,1}<t_2<\ldots<y_{\alpha,m_{\alpha}}<t_{m_{\alpha}+1}$. Then,
the points $\tilde{p}_i:=(\alpha,t_i)$ separate the $p_i$'s from each
other. Computing such $\tilde{p}_i$ is easy since we have isolating
intervals with rational endpoints for each of the roots $y_{\alpha,i}$ of
$f(\alpha,y)$. In a second step, we use interval arithmetic to obtain
intervals $\mathfrak{B}f(I(\alpha)\times t_i)\subset\RR$ with $f(I(\alpha)\times
t_i)\subset \mathfrak{B}f(I(\alpha)\times t_i)$. As long as there exists an
$i$ with $0\in \mathfrak{B}f(I(\alpha)\times t_i)$, we refine $I(\alpha)$.
Since none of the $\tilde{p}_i$ is located on $C$, we eventually obtain a
sufficiently refined interval $I(\alpha)$ with $0\notin
\mathfrak{B}f(I(\alpha)\times t_i)$ for all $i$. It follows that none of
the arcs $A_j$ intersects any line segment $I(\alpha)\times t_i$. Hence,
above $I(\alpha)$, each $A_j$ stays within the rectangle bounded by the
two segments $I(\alpha)\times t_{i_0}$ and $I(\alpha)\times t_{i_0+1}$ and
is thus connected to $p_{i_0}$. In order to determine $i_0$, we compute the
$j$-th real root $\gamma_j$ of $f(b,y)\in\mathbb{Q}[y]$ and the largest
$i_0$ such that $\gamma_j>t_{i_0}$. In the special case where
$\gamma_j<t_i$ or $\gamma_j>t_i$ for all $i$, it follows that $A_j$ is
connected to $(\alpha,-\infty)$ or $(\alpha,+\infty)$, respectively.
\end{itemize} 

For the arcs located to the left of $\alpha$, we proceed in exactly the
same manner. This concludes the connection phase and, thus, the description
of our algorithm.

\begin{figure*}[t]
  \centering
  \includegraphics[width=\linewidth]{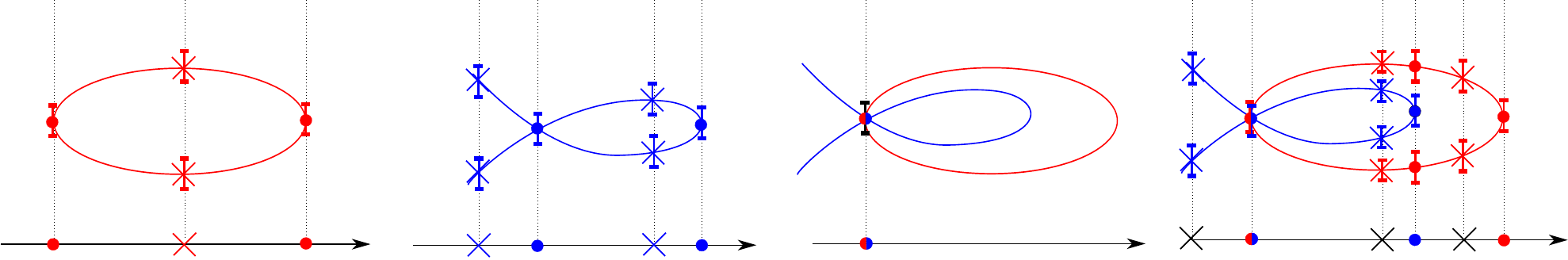}
\caption{The two figures on the left show the topology analyses for the curves $C=V(f)$ and $D=V(g)$. The second figure from the right shows the intersection of the two curves. For the curve pair analysis, critical event lines (at dots) are sorted and non-critical event lines (at crosses) in between are inserted. Finally, for each event line $x=\alpha$, the roots of $f(\alpha,y)$ and $g(\alpha,y)$ are sorted. The latter task is done by further refining corresponding isolating intervals (blue or red intervals) and using the combinatorial information from the curve analyses and the computation of the intersection points.\protect\\[-2em]\strut}
\label{fig:arrangement}
\end{figure*}

\section{Arrangement computation}
\label{sec:arr}

\cgal{}'s prevailing implementation for computing arrangements
of planar algebraic curves reduces all required geometric constructions
(as intersections) and predicates (as comparisons of points
and $x$-monotone curves) to the geo\-me\-tric-to\-po\-lo\-gi\-cal
analysis of a single curve~\cite{eigenwilligkw07} and pairs of 
curves~\cite{eigenwilligk08}; see also~\cite{bhk-ak2-2011} and 
\cgal{}'s documentation~\cite{cgal:wfzh-a2-11b}.

In Section~\ref{sec:ca}, we have already seen how to improve
the curve-analysis. In a similar way, we want to increase the performance
of the analyses of a pair of curves $C=V(f)$ and $D=V(g)$,
(see illustration in Figure~\ref{fig:arrangement}). In general, the algorithm from~\cite{eigenwilligk08} had to compute the entire subresultant sequence, an operation that we are aiming to avoid.
Using the new analyses of each single curve and combining the so-obtained information with the information on the intersection points of the two curves $C$ and $D$ as returned by \bs, it is straight-forward 
to achieve this goal. We mainly have to compute the common intersection 
points of the two curves:

Let $C=V(f)$ and $D=V(g)$ be two planar algebraic curves implicitly defined
by square-free polynomials $f$, $g\in\mathbb{Z}[x,y]$. The curve
analysis for $C$ provides a set of \emph{$x$-critical event lines} $x=\alpha$. Each $\alpha$ is represented as the
root of a square-free polynomial $r_i$, with $r_i$ a factor of
$R_C:=\res(f,f_y;y)$, together with an isolating interval
$I(\alpha)$. In addition, we have isolating intervals for the roots of
$f(\alpha,y)$. A corresponding result also holds for the curve $D$ with
$R_D:=\res(g,g_y;y)$. For the common intersection points of
$C$ and $D$, a similar representation is known. That is, we have critical
event lines $x=\alpha'$, where $\alpha'$ is a root of a square-free factor
of $R_{CD}:=\res(f,g;y)$ and, thus, $f(\alpha,y)$ and
$g(\alpha,y)$ share at least one common root (or the their leading
coefficients both vanish for $x=\alpha$). In addition, isolating intervals
for each of these roots have been computed. The curve-pair analysis now
essentially follows from merging this information. More precisely, we first
compute merged \emph{critical event lines} (via sorting the roots of $R_C$,
$R_D$ and $R_{CD}$) and, then, insert merged \emph{non-critical event
lines} at rational values $q_I$ in between. The intersections of $C$ and
$D$ with a non-critical event line at $x=q_I$ are easily computed via
isolating the roots of $f(q_I,y)$ and $g(q_I,y)$ and further refining the
isolating intervals until all isolating intervals are pairwise disjoint.
For a critical event line $x=\alpha$, we refine the already computed
isolating intervals for $f(\alpha,y)$ and $g(\alpha,y)$ until the number of
pairs of overlapping intervals matches the number $m$ of intersection
points of $C$ and $D$ above $\alpha$. This number is obtained from the
output of \bs applied to $f$ and $g$, restricted to $x=\alpha$. The
information on how to connect the lifted points is provided by the curve
analyses for~$C$ and~$D$. Note that efficiency is achieved by the fact that
\bs constitute (in its expensive parts) a local algorithm.

We remark that, in the previous approach by Eigenwillig and
Kerber~\cite{eigenwilligk08}, $m$ is also determined via efficient filter
methods, while, in general, a subresultant computation is needed if the
filters fail. This is, for instance, the case when two covertical
intersections of~$C$ and~$D$ occur. For our proposed lifting
algorithms, such situations are not more difficult, and thus do not particularly influence the runtime. 

\section{Speedups}
\label{sec:speedups}

\subsection{GPU-accelerated symbolic computations}
\label{ssec:speedups:gpu_res}

As mentioned in the introduction, one of the notable advantages of 
\emph{all} our new algorithms
over similar approaches is that it is not based on sophisticated symbolic
computations (such as, for example, evaluating signed remainder sequences)
restricting the latter ones to only computing
bivariate resultants and gcds of univariate polynomials.
In turn, these operations can be outsourced to the graphics hardware to 
dramatically reduce the overhead of symbolic arithmetic.
In this section, we overview the proposed GPU\footnote{Graphics Processing Unit} algorithms and refer to~\cite{emel-ica3pp-10,emel-pasco-10,emel-gcd-11} for further details. 

At the highest level, the resultant and gcd algorithms are 
based on a \emph{modular} or homomorphism approach, first exploited in the works
of Brown~\cite{brown-71} and Collins~\cite{collins-71}.
The modular approach is a traditional way to avoid computational problems, such as
\emph{expression swell}, shared by all symbolic algorithms.
In addition, it enables us to distribute the computation of one symbolic expression over a large number of processor cores of the graphics card.
Our choice of the target realization platform is not surprising because, 
with the released CUDA framework~\cite{CUDA-10}, the GPU has become a 
new standard for high-performance scientific computing.

To understand the main principles of GPU computing, we first need to have a look
at the GPU architecture. Observe that the parallelism on the graphics processor 
is supported on \emph{two} levels. 
At the upper level, there are numerous \emph{thread blocks}
executing concurrently without any synchronization between them.
There is a potentially unlimited number of thread blocks that 
can be scheduled for execution on the GPU. 
These blocks are then processed in a queued fashion by the hardware.
This realizes \emph{block-level} parallelism.
For its part, each thread block contains a limited number of 
parallel threads (up to $1024$ threads on the latest GPUs)
which can cooperate using on-chip shared memory and 
synchronize the execution with barriers. This is referred to as \emph{thread-level} parallelism.
An important point is that individual threads running on the GPU 
are ``lite-weight'' in a sense that they do not possess
large private memory spaces, neither they can execute disjoint code paths without penalties. 
The conclusion is that an algorithm to be realized on the graphics card
must exhibit a high homogeneity of computations such that individual
threads can perform the \emph{same} operations but on different data elements.
We start our overview with the resultant algorithm.

\paragraph{Computing resultants in $\Z[x,y]$}
Given two bivariate polynomials $f,g\in\Z[x,y]$, the modular resultant algorithm
of Collins can be summarized in the following steps:
\begin{compactenum}[(a)]
 \item apply a modular homomorphism to map the coefficients of $f$ and $g$ 
to a finite field for sufficiently many primes $p$: $\Z[x,y]\to \Z_p[x,y]$;
 \item for each modular image, choose a set of points $\alpha_p^{(i)}\in \Z_p$, $i\in I$, and evaluate the polynomials at $x=\alpha_p^{(i)}$ (evaluation homomorphism): 
$\Z_p[x,y]\to \Z_p[x,y]/(x-\alpha_p^{(i)})$;
\item compute a set of univariate resultants in $\Z_p[x]$
in parallel: $\res_y(f,g)|_{\alpha_p^{(i)}}: \Z_m[x,y]/(x-\alpha_p^{(i)})\to\Z_p[x]/(x-\alpha_p^{(i)})$;
\item interpolate the resultant polynomial for each prime $p$ 
in parallel: $\Z_p[x]/(x-\alpha_m{(i)})\to\Z_p[x]$;
\item lift the resultant coefficients by means of Chinese remaindering:
$\Z_p[x]\to\Z[x]$.
\end{compactenum}
Steps (a)--(d) and partly~(e) are outsourced to the graphics processor, thereby minimizing the amount of work on the host machine. In essence, what remains to be done on the CPU, is to convert the resultant coefficients in the mixed-radix representation
(computed by the GPU) to the standard form.

Suppose we have applied modular and evaluation homomorphisms to reduce the 
resultant of $f$ and $g$ to $N$ univariate resultants in $\Z_p[x]$ 
for each of $M$ moduli. Thus, provided that the modular images can be processed independently, we can launch a grid of $N\times M$ thread blocks with each block
computing the resultant of one modular image.
Next, to compute the univariate resultants, we employ a \emph{matrix-based}
approach instead of the classical PRS (polynomial remainder sequences)
used by Collins' algorithm. 
One of the advantages of this approach is that, when a problem is expressed in terms of linear algebra, all data dependencies are usually made explicit, thereby enabling thread-level parallelism which is a key factor in achieving high performance.

More precisely, the resultants of the modular images are computed by direct factorization of the Sylvester matrix using the so-called Schur algorithm which exploits the special structure of the matrix. In order to give an idea how this algorithm works, let 
$\tilde f,\tilde g\in\Z[x]$ be polynomials of degrees $m$ and $n$, respectively.
Then, for the associated Sylvester matrix $S\in\Z^{r\times r}$ ($r =
m+n$), one can write the following \emph{displacement} equation~\cite{displ-95}:
\begin{equation}\label{eq:displ-sylvester}
S - Z_r S (Z_m\oplus Z_n)^T = GB^T\mbox{,}
\end{equation}
\noindent  where $Z_s\in\Z^{s\times s}$ is a down-shift matrix zeroed everywhere
except for 1's on the first subdiagonal, $\oplus$ denotes the Kronecker sum, and 
$G, B\in\Z^{r\times 2}$ are the \emph{generator matrices} whose entries can be deduced from $S$ by inspection. For illustration, we can write~(\ref{eq:displ-sylvester}) in explicit form setting $m=4$ and $n=3$:
\[\footnotesize
\underbrace{\left[
\begin{array}{ccccccc}
f_4 & 0   & 0   & g_3 & 0 & 0 & 0 \\
f_3 & f_4 & 0   & g_2 & g_3 & 0 & 0\\
f_2 & f_3 & f_4 & g_1 & g_2 & g_3 & 0\\
f_1 & f_2 & f_3 & g_0 & g_1 & g_2 & g_3\\
f_0 & f_1 & f_2 & 0   & g_0 & g_1 & g_2\\
0   & f_0 & f_1 & 0   & 0 & g_0 & g_1\\
0   & 0   & f_0 & 0   & 0 & 0 & g_0\\
\end{array}\right]}_{S}
-\underbrace{\left[
\begin{array}{ccccccc}
0 & 0   & 0   & 0 & 0 & 0 & 0 \\
0 & f_4 & 0   & 0 & g_3 & 0 & 0\\
0 & f_3 & f_4 & 0 & g_2 & g_3 & 0\\
0 & f_2 & f_3 & 0 & g_1 & g_2 & g_3\\
0 & f_1 & f_2 & 0   & g_0 & g_1 & g_2\\
0   & f_0 & f_1 & 0   & 0 & g_0 & g_1\\
0   & 0   & f_0 & 0   & 0 & 0 & g_0\\
\end{array}\right]}_{Z_r S (Z_m\oplus Z_n)^T}=
\underbrace{\left[
\begin{array}{ccccccc}
f_4 & 0   & 0   & g_3 & 0 & 0 & 0 \\
f_3 & 0 & 0   & g_2 & 0 & 0 & 0\\
f_2 & 0 & 0 & g_1 & 0 & 0 & 0\\
f_1 & 0 & 0 & g_0 & 0 & 0 & 0\\
f_0 & 0 & 0 & 0   & 0 & 0 & 0\\
0   & 0 & 0 & 0   & 0 & 0 & 0\\
0   & 0   & 0 & 0   & 0 & 0 & 0\\
\end{array}\right]}_{GB^T}.
\]
The matrix on the right-hand side has rank $2$, and hence it can be
decomposed as a product of $r\times 2$ and $2\times r$ matrices $G$ and $B^T$.
The idea of the Schur algorithm is to rely on this low-rank displacement representation
of a matrix to compute its factorization in an asymptotically fast way.
Particularly, to factorize the matrix $S$, this algorithm only demands for $\mathcal{O}(r^2)$ 
operations in $\Z$; see~\cite[p.~323]{displ-95}.
In short, the Schur algorithm is an iterative procedure: In each step, it
transforms the matrix generators into a ``special form'' from which triangular
factors can easily be deduced based on the displacement equation~(\ref{eq:displ-sylvester}). Using
division-free modifications, this procedure 
can be performed efficiently in a prime field giving rise to the resultant
algorithm in $\Z_p[x]$; its pseudocode (serial version) can be found in~\cite[Section~4.2]{emel-pasco-10}. 
Now, to port this to the GPU, we assign one thread to one row of each of the generator matrices, that is, to four 
elements (because $G, B\in\Z^{r\times 2}$). 
In each iteration of the Schur algorithm, each thread updates its associated generator rows and multiplies them by a $2\times 2$ transformation matrix.
Altogether, a univariate resultant can be computed in $\mathcal{O}(r)$
finite field operations using $r$ processors (threads).
This explains the basic routine of the resultant algorithm.

The next step of the algorithm, namely polynomial interpolation in $\Z_p$,
can also be performed efficiently on the graphics card.
Here, we exploit the fact that interpolation is equivalent to solving a Vandermonde
system, where the Vandermonde matrix has a special structure.
Hence, we can again employ the Schur algorithm to solve the system in a small parallel time, see~\cite[Section~4.3]{emel-pasco-10}.
Finally, in order to obtain a solution in $\Z[x]$, we apply the Mixed-Radix Conversion (MRC) algorithm~\cite{yassine-91} which reconstructs the integer coefficients of the resultant in the form of mixed-radix (MR) digits. The key feature of this algorithm
is that it decouples operations in a finite field $\Z_p$ from those 
in the integer domain. In addition, the computation of MR digits 
can be arranged in a very structured way allowing for data-level parallelism
which can be readily exploited to compute the digits on the GPU.\\

\paragraph{Computing gcds in $\Z[x]$}
The modular gcd algorithm proposed by Brown follows a similar outline
as Collins' algorithm discussed above. For $f,g\in\Z[x]$, it consists of three steps:
\begin{compactenum}[(a)]
 \item apply modular homomorphism reducing the coefficients of $f$ and $g$ modulo sufficiently many primes: $\Z[x]\to \Z_p[x]$;
 \item compute a set of univariate gcds in $\Z_p[x]$:
$\gcd(f,g)\bmod p: \Z_p[x]\to \Z_p[x]$;
\item lift the coefficients of a gcd using Chinese remaindering: $\Z_m[x]\to\Z[x]$.
\end{compactenum}
Again, we augment the original Brown's algorithm by replacing the Euclidean scheme
(used to compute a gcd of each homomorphic image) with a matrix-based approach.
The univariate gcd computation is based on the following theorem.

\begin{theorem}~\cite{gcd-69}\label{thm:sylvester-gcd}
Let $S$ be the Sylvester matrix for polynomials $f,g\in\mathbb{F}[x]$
with coefficients over some field $\mathbb{F}$.
If $S$ is put in echelon form\footnote{A matrix is in echelon form if all nonzero rows are above any rows of all zeroes, and the leading coefficient of a nonzero row is always strictly to the right of the leading coefficient of the row above it.}, using row transformations only, then the last
non-zero row gives the coefficients of $\gcd(f,g)\in\mathbb{F}[x]$.
\end{theorem}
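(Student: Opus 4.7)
The plan is to identify the row space of $S$ with a canonical subspace of the ideal $(f,g)\subseteq\mathbb{F}[x]$ and then show that the minimum-degree nonzero polynomial in this subspace equals $\gcd(f,g)$ up to a scalar.

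First I would interpret each row of $S$ as the coefficient vector of a polynomial in $\mathbb{F}[x]$ of degree less than $m+n$, where $m:=\deg f$ and $n:=\deg g$. The top $n$ rows of $S$ then encode $x^{n-1}f,x^{n-2}f,\dots,f$, and the bottom $m$ rows encode $x^{m-1}g,\dots,g$. Hence the $\mathbb{F}$-linear row space $V$ of $S$ equals $\{uf+vg:u,v\in\mathbb{F}[x],\ \deg u<n,\ \deg v<m\}$. Since elementary row operations preserve the row space, the row space of the echelon form of $S$ is still $V$, and it is spanned by the nonzero rows of that echelon form.

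Next I would prove $V=(f,g)\cap\mathbb{F}[x]_{<m+n}$. The inclusion $V\subseteq(f,g)$ is immediate. For the reverse, take any $p\in(f,g)$ with $\deg p<m+n$ and write $p=uf+vg$. Set $d:=\gcd(f,g)$, $f=df_1$, $g=dg_1$; because $f_1g-g_1f=0$, the replacement $(u,v)\mapsto(u-g_1t,\,v+f_1t)$ leaves $uf+vg$ invariant for every $t\in\mathbb{F}[x]$. Polynomial division then lets me choose $t$ so that the new $u$ satisfies $\deg u<\deg g_1\leq n$. A degree count shows $\deg(vg)=\deg(p-uf)<m+n$, so $\deg v<m$, placing $p$ in $V$. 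In particular $d\in V$, since $d\in(f,g)$ by B\'ezout and $\deg d\leq\min(m,n)<m+n$, and every element of $V$ is a multiple of $d$ because $V\subseteq(f,g)=(d)$.

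Finally, in a row echelon form the column index of the leading (leftmost nonzero) entry strictly increases from row to row, so the last nonzero row has its leading entry in the rightmost column and therefore represents a polynomial of minimum degree in $V$. Any nonzero multiple of $d$ of degree at most $\deg d$ is a nonzero scalar multiple of $d$; since $d$ itself lies in $V$, the minimum degree in $V$ is exactly $\deg d$ and is attained only by $c\cdot d$ with $c\in\mathbb{F}^\times$. The last nonzero row therefore encodes $\gcd(f,g)$ up to a unit, as claimed. The main obstacle is the cofactor reduction step: one has to exploit the syzygy $f_1g=g_1f$ to bring $u$ into the correct degree range by polynomial division modulo $g_1$, and then observe that the matching bound on $v$ is forced by $\deg p<m+n$, so the resulting $(u,v)$ certifies membership in $V$.
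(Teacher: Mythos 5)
The paper does not prove this statement at all---it is quoted from the cited reference \cite{gcd-69}---so there is no in-paper argument to compare against. Your proof is correct and self-contained. The key points all check out: the rows of $S$ encode $x^{n-1}f,\dots,f,x^{m-1}g,\dots,g$, so the row space $V$ is $\{uf+vg:\deg u<n,\ \deg v<m\}$ and is preserved by row operations; the syzygy $f_1g-g_1f=0$ together with division of $u$ by $g_1$ shows $V=(d)\cap\mathbb{F}[x]_{<m+n}$ with $d=\gcd(f,g)$, and in particular $d\in V$ while every element of $V$ is a multiple of $d$. Two small points of presentation: the last nonzero row's pivot sits in the \emph{rightmost pivot column}, not necessarily the last column of the matrix (that happens only when $\gcd(f,g)$ is constant), though your subsequent reasoning does not rely on the literal reading; and the claim that the last nonzero row attains the minimum degree over all of $V$ deserves one more line---any nonzero $p\in V$ is a combination of the echelon rows, its leading coefficient comes from the first row appearing with nonzero coefficient (later pivots lie strictly to the right and cannot cancel it), so $\deg p$ equals the degree of some echelon row and hence is at least the degree of the last one. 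With that observation the last row is a nonzero multiple of $d$ of degree $\deg d$, i.e.\ a unit multiple of $\gcd(f,g)$, as you conclude.
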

\noindent 
Suppose $f$ and $g$ have degrees $m$ and $n$, respectively.
Theorem~\ref{thm:sylvester-gcd} asserts that if we triangulate the Sylvester matrix $S\in\Z^{r\times r}$ ($r=n+m$), for instance, by means of Gaussian elimination, we obtain $\gcd(f,g)$ in the last nonzero row of the triangular factor. 
In order to achieve the latter, we apply the Schur algorithm to the positive-definite matrix $W=S^TS$ to obtain the orthogonal (QR) factorization
of $S$.\footnote{The reason why we do not triangularize $S$ directly is elaborated upon in~\cite{emel-gcd-11}.} In terms of displacements, 
$W$ can be written as follows~\cite{displ-95}:
\begin{equation}\label{eq:W-gcd-displ}
 W - Z_r W Z_r^T = GJG^T\mbox{ with }G\in\Z^{r\times 4},\ J=I_2\oplus-I_2.
\end{equation}
Here, $I_s$ denotes an $s\times s$ identity matrix. Remark that it is not necessary
to compute the entries of $W$ explicitly because the generator matrix $G$ is easily 
expressible in terms of the coefficients of $f$ and $g$, see~\cite[Section~2.2]{emel-gcd-11}.
Similarly to the resultants, we can run the Schur algorithm for $W$ in $\mathcal{O}(r)$ time
on the GPU using $r$ processors (threads). 
That is, one thread is assigned to process one row of the generator matrix $G$ ($4$ elements). The source code of a sequential algorithm can be found in~\cite[Algorithm~1]{emel-gcd-11}. 

From the theoretical perspective, the rest of the GPU algorithm essentially 
follows the same outline as the one for resultants, with the exception that there is no need for an interpolation step anymore since the polynomials are univariate. 
Certainly, there is also a number of practical difficulties
that need to be clarified. One of them is computing tight upper bounds on the height of a polynomial divisor which is needed to estimate the number of moduli used by the algorithm.\footnote{The height of a polynomial is defined as the maximal magnitude of its coefficients.} The existing theoretical bounds are very pessimistic,
and the original algorithm by Brown relies on trial division
to reduce the number of homomorphic images. However, this solution is
incompatible with parallel processing because the algorithm must be applied \emph{incrementally}. That is why, in the implementation, 
we use a number of heuristics to shrink the theoretical worst-case bounds.

Another challenge relates to the fact that it is not always possible to 
compute the gcd of a modular image by a single thread block (recall that the number of threads per block is limited) while threads from different blocks cannot work cooperatively. Thus, we needed to introduce some ``data redundancy'' 
to be able to distribute the computation of a single modular gcd (factorization of the Sylvester matrix) across numerous thread blocks. The details can be found in the paper cited above.

\subsection{Filters for \bs}
\label{ssec:speedups:bsfilters}

Besides the parallel computation of resultants and gcds,
the algorithm \bs to solve bivariate polynomial systems from 
Section~\ref{sec:bs} can be elaborated with a number of filtering techniques to early validate a majority of the candidates:

As first step, we group candidates along the same vertical line 
(a \emph{fiber})
at an $x$-coordinate~$\alpha$ (a root of $R^{(y)}$) to process them
together. This allows us to use extra information on the real roots 
of $f(\alpha,y)\in\RR[y]$ and $g(\alpha,y) \in \RR[y]$ for the validation of candidates.
\begin{figure*}[tb]
\centering\footnotesize\sffamily
\resizebox{!}{!}{\input{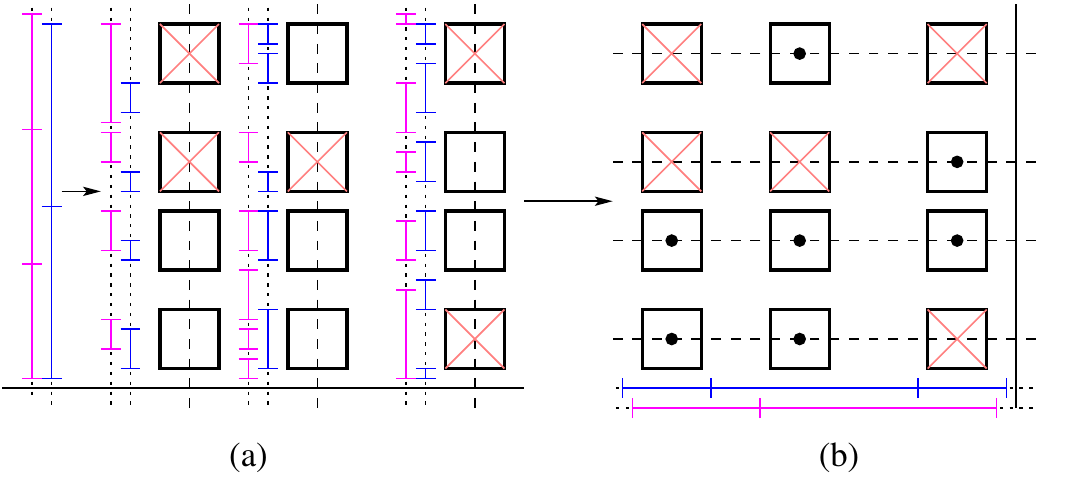_t}} 
\caption{\textbf{(a)} Intervals containing the roots of $f(\alpha,y)$ and 
$g(\alpha,y)$ are refined until they either do not overlap or are fully 
included in candidate boxes. In the former case, the boxes can be 
discarded. \textbf{(b)} Unvalidated candidates are passed to 
\emph{bidirectional} filter
which runs bitstream isolation in another direction}
\label{fig:BS}
\end{figure*}
We replace the tests based on interval evaluation 
(see page~\pageref{pref:interval_exclusion}) by 
a test based on the \emph{bitstream Descartes}  
isolator~\cite{ek+-descartes} (\bdesc) (which has already been used in \slowlift; 
see Section~\ref{sssec:ca:alg:lift:slowlift}). To do so, we apply \bdesc to both polynomials
$f(\alpha,y)$ and $g(\alpha,y)$ in parallel, which eventually reports intervals
that do not share common roots.
This property is essential for our filtered version of \bsvalidate:
a candidate box $B(\alpha,\beta)$ can be rejected as soon as
the associated $y$-interval $I(\beta)$ fully overlaps with intervals rejected
by \bdesc for $f(\alpha,y)$ or $g(\alpha,y)$; see~Figure~\ref{fig:BS}~(a).

As alternative we could also deploy the numerical solver that is utilized in \fastlift; see
\ref{asec:numerical} for details. Namely, it can be modified in way to report \emph{active intervals}, and thus
allows us to discard candidates in non-active intervals. Even more, as
the numerical solver reports all (complex) solutions, we can use it
as inclusion predicate, too: If we see \emph{exactly one} overlap of reported
discs $\Delta_f$ and $\Delta_g$ (one for $f(\alpha,y)$, the other for $g(\alpha, y)$, respectively), and this overlap is completely contained in the projection $\Delta(\beta)$ of a candidate polydisc $\Delta(\alpha)\times\Delta(\beta)$, then $(\alpha,\beta)$ must be a solution. Namely, $f(\alpha,y)$ and $g(\alpha,y)$ share at least one common root, and each of these roots must be contained in $\Delta_f\cap\Delta_g$. By construction, $\Delta(\beta)$ contains at most one root, and thus $\beta$ must be the unique common root of the two polynomials.

Grouping candidates along a fiber $x=\alpha$ also enables us to
use \emph{combinatorial} tests to discard or to certify them.
First, when the number of certified solutions reaches
$\mathrm{mult}(\alpha)$,
the remaining candidates are automatically discarded because 
each real solution contributes at least once to the multiplicity of $\alpha$
as a root of $R^{(y)}$ (see Theorem~\ref{thm:resultants}). Second, if
$\alpha$ is not a root of the greatest common divisor $h^{(y)}(x)$ of the 
leading coefficients of $f$ and $g$, $\mathrm{mult}(\alpha)$ is odd, and all except one candidate along the fiber
are discarded, then the remaining candidate must be a real solution.
This is because complex roots come in conjugate pairs and, thus, 
do not change the parity of $\mathrm{mult}(\alpha)$.
We remark that, in case where the system (\ref{system}) is in
\emph{generic position} and the multiplicities of all roots of $R$ are odd,
the combinatorial test already suffices to certify all solutions 
without the need to apply the inclusion predicate based on Theorem~\ref{thm:inclusion}.

Now, suppose that, after the combinatorial test, there are several 
candidates left within a fiber. For instance, the latter can indicate 
the presence of \emph{covertical} solutions. 
In this case, before using the new inclusion predicate, 
we can apply the aforementioned filters in \emph{horizontal} direction as well.
More precisely, we construct the lists of unvalidated
candidates sharing the same $y$-coordinate $\beta$ and process them 
along a horizontal fiber. 
For this step, we initialize the bitstream trees (or the numerical solvers) 
for $f(x,\beta)\in\RR[x]$ and
$g(x,\beta)\in\RR[y]$ and proceed in exactly the same way as done for vertical 
fibers; see~Figure~\ref{fig:BS}~(b).
We will refer to this procedure as the \emph{bidirectional} filter, 
especially in Section~\ref{ssec:implex:bs}, where we examine the 
efficiency of all filters. The (few) candidates that still remain undecided after 
all filters are applied will be processed by
considering the new inclusion predicate.

\section{Implementation and experiments}
\label{sec:implex}

\paragraph{Setup}
We have implemented our algorithms in a branch of the bivariate
algebraic kernel first released with 
\cgal\footnote{The Computational Geometry Algorithms Library,
\url{www.cgal.org}.} version~3.7 in 
October~2010~\cite{cgal:bht-ak-11b,cgal:wfzh-a2-11b}.
\bs is a completely new implementation, whereas, for \ca and the analyses of pairs, we only replaced the lifting 
algorithms in \cgal's original curve- and curve-pair analyses\footnote{Note that those
and our algorithms have \caproject and \caconnect in common.} with our new 
methods based on \fastlift, \slowlift\footnote{
We remark, that the implementation of \slowlift can be improved:
each iteration of \bs can benefit from common factors that occur in the 
intermediate resultants, that is, for later iterations polynomials with smaller degree can be
considered.} and \bs.
As throughout \cgal, we follow the
\emph{generic programming paradigm} which allows us to choose among
various number types for polynomials' coefficients or intervals' boundaries 
and to choose the method used to isolate the real roots of univariate polynomials. For our setup, we rely on the integer and rational number types provided
by \gmp~5.0.1\footnote{\gmp:~\url{http://gmplib.org}}
and the highly efficient univariate 
solver based on the Descartes method contained in \rs\footnote{\rs:~\url{http://www.loria.fr/equipes/vegas/rs}} 
(by Fabrice~Rouillier~\cite{RS}),
which is also the basis for \isolate in Maple~13 and later versions.

All experiments have been conducted on a 2.8~GHz $8$-Core Intel Xeon W3530 
with 8~MB of L2~cache on a Linux platform. For the GPU-part of the 
algorithm, we have used the GeForce GTX580 graphics card (Fermi Core).

\paragraph{Symbolic Speedups} Our algorithms exclusively rely, as indicated,
on two symbolic operations, that is, resultant and $\gcd$ computation. We
outsource \emph{both} computations to the graphics hardware to reduce the
overhead of symbolic arithmetic which typically constitutes the main
bottleneck in previous approaches. Details about this have been covered in
Section~\ref{ssec:speedups:gpu_res}. Beyond that, it is worth noting that our
implementation of univariate $\gcd$s on the graphics card is comparable in
speed with the one from \ntl\footnote{A Library for Doing Number Theory, 
\url{http://www.shoup.net/ntl/}} running on the host machine. Our explanation
for this observation is that, in contrast to bivariate resultants, computing a
$\gcd$ of moderate degree univariate polynomials does not provide a
sufficient amount of parallelism, and \ntl's implementation is nearly
optimal. Moreover, the time for the initial modular reduction of polynomials,
still performed on the CPU, can become noticeably large, thereby neglecting
the efficiency of the GPU algorithm. Yet, we find it very promising to
perform the modular reduction directly on the GPU which should further
speed-up our algorithm.

\paragraph{Contestants}
For solving bivariate systems (Section~\ref{ssec:implex:bs}), 
we compared \bs to the bivariate
version of \isolate (based on \rs) and \lgp by Xiao-Shan Gao~\etal\footnote{\lgp:
\url{http://www.mmrc.iss.ac.cn/~xgao/software.html}} Both are interfaced
using Maple~14. We remark that, for the important substep of isolating the
real roots of the elimination polynomial, all three contestants in the ring
(including our implementation) use the highly efficient implementation provided by~\rs.

When analyzing algebraic curves (Section~\ref{ssec:implex:ca})
and computing arrangements of algebraic curves 
(Section~\ref{ssec:implex:arr}), we compared our new implementation 
with \cgal's bivariate algebraic kernel
(see \cite{bhk-ak2-2011} and \cite{cgal:bht-ak-11b}) that has shown 
excellent performance in exhaustive experiments 
over existing approaches, namely 
\textsc{cad2d}\footnote{\url{http://www.usna.edu/Users/cs/qepcad/B/QEPCAD.html}} and 
\textsc{Isotop}~\cite{gn-efficient} which is based on \rs.
These two other contestants were, except for few example instances, less efficient than 
\cgal's implementation, so that we omit further tests with them. 
Two further reasons can be given: Firstly, we enhanced \cgal{}'s kernel with 
GPU-supported resultants and $\gcd$s which makes it 
more competitive to existing software, but also to \ca. Still, 
slowdowns are observable for singular curves or curves in non-generic
position due to its need of subresultants sequences performed on the CPU. For such hard instances, our new algorithms particularly profit from the algorithmic design which avoids costly symbolic operations that can only be performed on the CPU. At this point, we also remark that, even if no GPU is available and all symbolic operations would be carried out solely on the CPU, \ca is still much faster for hard instances. 
Secondly, the contestants based on \rs require as subtask \rs to solve
the bivariate polynomial system $f = f_y = 0$ in the curve-analysis.
However, our experiments on bivariate system solving that we report in Section~\ref{ssec:implex:bs} show that \bs is at least competitive 
to the current version of \rs and even show in most cases
an excellent speed gain over \rs. However, it should not be concealed that \rs is currently getting a 
very promising polish which uses the computations of a rational univariate representations and 
modular arithmetic~\cite{blpr-rs3-eurocg-11}. Yacine Bouzidi \etal 
are about to submit a bivariate kernel based on the updated \rs to 
\cgal{} in the spirit of the existing univariate kernel based on \rs; 
see~\cite{penaranda-phd}.
We are looking forward to compare our analysis and the arrangement
computation with this upcoming approach.

\medskip

All test data sets that we consider in our experiments are available for download.\footnote{
\url{http://www.mpi-inf.mpg.de/departments/d1/projects/Geometry/TCS-SNC.zip}}

\subsection{Bivariate system solving}
\label{ssec:implex:bs}

\begin{table*}[p]
\centering\footnotesize\sffamily
\begin{minipage}[t]{\linewidth}
\centering
\begin{tabularx}{\linewidth}{ |>{\hsize=1.6\hsize}L| 
|>{\hsize=0.70\hsize}R 
|>{\hsize=0.9\hsize}R
|>{\hsize=1.5\hsize}R|
|>{\hsize=0.9\hsize\bfseries}R
|>{\hsize=0.9\hsize}R|
|>{\hsize=0.75\hsize}R
|>{\hsize=0.75\hsize}R|}
\hline\multicolumn{8}{|l|}{\multirow{2}{*}{(X)~special~curves (see Table \ref{tbl:app:special-desc} in~\ref{asec:instances} for descriptions)}}\\
\multicolumn{8}{|c|}{} \\
\hline
 & \textsc{BS} & \textsc{BS{\scriptsize +bstr}} & \textsc{BS{\scriptsize +bstr+comb}} & \textsc{BS{\scriptsize +all}} & \textsc{BS{\scriptsize +all}} & \textsc{Isolate} & \textsc{LGP} \\
 curve & \multicolumn{3}{c||}{GPU} & GPU & CPU & Maple & Maple \\
\hline
13\_sings\_9 & 2.13 & 1.84 & 1.48 & 0.97 & 1.65 & 341.93 & 2.81\\
FTT\_5\_4\_4 & 48.03 & 9.20 & 9.00 & 20.51 & 52.21 & 256.37 & 195.65\\
L4\_circles & 0.92 & 1.31 & 1.62 & 0.74 & 1.72 & 1.31 & 7.58\\
L6\_circles & 3.91 & 4.23 & 3.68 & 2.60 & 16.16 & 21.37 & 51.60\\
SA\_2\_4\_eps & 0.97 & 0.38 & 0.32 & 0.44 & 4.45 & 3.31 & 4.69\\
SA\_4\_4\_eps & 4.77 & 2.07 & 1.84 & 2.01 & 91.90 & 158.63 & 54.51\\
challenge\_12 & 21.54 & 5.33 & 5.44 & 7.35 & 18.90 & 44.02 & 37.07\\
challenge\_12\_1 & 84.63 & 12.50 & 12.50 & 19.17 & 72.57 & 351.62 & 277.68\\
compact\_surf & 12.42 & 3.45 & 3.29 & 4.06 & 12.18 & 871.95 & 12.00\\
cov\_sol\_20 & 28.18 & 24.05 & 18.82 & 5.77 & 16.57 & 532.41 & 171.62\\
curve24 & 85.91 & 87.92 & 13.93 & 8.22 & 25.36 & 86.04 & 37.94\\
curve\_issac & 2.39 & 2.72 & 2.25 & 0.88 & 1.82 & 29.80 & 3.29\\
cusps\_and\_flexes & 1.17 & 1.09 & 0.86 & 0.63 & 1.27 & 381.51 & 2.43\\
degree\_7\_surf & 29.92 & 13.14 & 11.92 & 7.74 & 90.50 & timeout & 131.25\\
dfold\_10\_6 & 3.30 & 2.68 & 2.73 & 1.55 & 17.85 & 3.35 & 3.76\\
grid\_deg\_10 & 2.49 & 2.37 & 1.30 & 1.20 & 2.49 & 111.20 & 2.64\\
huge\_cusp & 9.64 & 9.81 & 6.96 & 6.44 & 13.67 & timeout & 116.67\\
mignotte\_xy & t>600 & 584.75 & 252.94 & 243.16 & 310.13 & 564.05 & timeout\\
spider & 167.30 & 77.86 & 50.61 & 46.47 & 216.86 & timeout & timeout\\
swinnerton\_dyer & 28.39 & 19.70 & 18.92 & 5.28 & 24.38 & 71.14 & 27.92\\
ten\_circles & 4.62 & 4.19 & 4.13 & 1.33 & 3.74 & 5.77 & 4.96\\
\hline
\multicolumn{8}{|l|}{\multirow{2}{*}{(X)~pairs of special~curves (see Table \ref{tbl:app:special-desc} in~\ref{asec:instances} for descriptions)}}\\
\multicolumn{8}{|c|}{} \\
\hline
 & \textsc{BS} & \textsc{BS{\scriptsize +bstr}} & \textsc{BS{\scriptsize +bstr+comb}} & \textsc{BS{\scriptsize +all}} & \textsc{BS{\scriptsize +all}} & \textsc{Isolate} & \textsc{LGP} \\
 pair & \multicolumn{3}{c||}{GPU} & GPU & CPU & Maple & Maple \\
\hline
deg18\_7\_curves & 2.19 & 2.33 & 1.74 & 0.97 & 2.01 & 3.50 & 4.37\\
hard\_one & 11.34 & 10.13 & 6.46 & 4.29 & 82.53 & 64.50 & 17.45\\
large\_curves & 286.32 & 260.35 & 72.50 & 43.12 & 35.37 & 311.61 & 98.07\\
spiral29\_24 & 207.47 & 206.62 & 30.35 & 18.57 & 35.53 & 215.35 & 76.50\\
tryme & 64.77 & 65.55 & 22.67 & 18.61 & 48.21 & 397.41 & 107.80\\
vert\_lines & 0.60 & 0.61 & 0.63 & 0.47 & 0.69 & 5.79 & 1.20\\
\hline
\end{tabularx}
\end{minipage}
\caption{Running times (in seconds, including resultant computations) 
for solving bivariate system defined by special curves. 
\bs-GPU: our 
approach with GPU-resultants; \bs-CPU: our approach with \cgal's 
CPU-resultants; \isolate and \lgp use Maple's implementation for the resultant computation.
Bold face indicates the default setup for \bs; \textbf{timeout:} algorithm timed out ($>$~600~sec)}
\label{tbl:test1}
\end{table*}

\begin{table*}[p]
\centering\footnotesize\sffamily
\begin{minipage}[t]{\linewidth}
\centering
\begin{tabularx}{\linewidth}{ |>{\hsize=1.6\hsize}R| 
|>{\hsize=0.70\hsize}R 
|>{\hsize=0.9\hsize}R
|>{\hsize=1.5\hsize}R|
|>{\hsize=0.9\hsize\bfseries}R
|>{\hsize=0.9\hsize}R|
|>{\hsize=0.75\hsize}R
|>{\hsize=0.75\hsize}R|}
\hline
\multicolumn{8}{|l|}{\multirow{2}{*}{(R)~sets of five random dense curves}}\\
\multicolumn{8}{|c|}{} \\
\hline
 & \textsc{BS} & \textsc{BS{\scriptsize +bstr}} & \textsc{BS{\scriptsize +bstr+comb}} & \textsc{BS{\scriptsize +all}} & \textsc{BS{\scriptsize +all}} & \textsc{Isolate} & \textsc{LGP} \\
 degree, bits & \multicolumn{3}{c||}{GPU} & GPU & CPU & Maple & Maple \\
\hline
6, \hphantom{00}10 & 0.29 & 0.33 & 0.31 & 0.20 & 0.38 & 0.54 & 0.41\\
6, \hphantom{0}128 & 0.47 & 0.29 & 0.34 & 0.26 & 0.31 & 0.64 & 0.66\\
6, \hphantom{0}512 & 0.99 & 0.69 & 0.56 & 0.43 & 0.54 & 1.76 & 1.91\\
6, 2048 & 5.92 & 3.18 & 1.99 & 1.50 & 1.85 & 9.31 & 9.92\\
\hline
9, \hphantom{00}10 & 2.06 & 0.88 & 0.74 & 0.36 & 0.78 & 1.24 & 0.88\\
9, \hphantom{0}128  & 3.31 & 1.85 & 1.04 & 0.45 & 0.54 & 1.50 & 1.66\\
9, \hphantom{0}512 & 7.98 & 4.81 & 2.39 & 0.88 & 1.07 & 3.62 & 4.58\\
9, 2048 & 34.12 & 19.87 & 11.15 & 3.75 & 4.19 & 19.24 & 24.66\\
\hline
12, \hphantom{00}10 & 14.85 & 4.82 & 2.46 & 1.07 & 2.11 & 3.96 & 3.32\\
12, \hphantom{0}128  & 20.08 & 7.90 & 3.78 & 1.32 & 1.59 & 5.77 & 6.39\\
12, \hphantom{0}512 & 42.73 & 18.22 & 10.10 & 2.45 & 2.80 & 19.12 & 23.17\\
12, 2048 & 162.11 & 68.28 & 53.03 & 11.14 & 11.97 & 109.67 & 138.06\\
\hline
15, \hphantom{00}10 & 56.40 & 10.64 & 5.69 & 1.55 & 2.66 & 6.09 & 5.65\\
15, \hphantom{0}128  & 95.35 & 17.00 & 10.61 & 2.01 & 2.30 & 8.96 & 10.46\\
15, \hphantom{0}512 & 195.01 & 41.42 & 31.16 & 3.95 & 4.22 & 26.06 & 33.87\\
15, 2048 & timeout & 161.00 & 169.77 & 19.89 & 20.45 & 140.68 & 190.86\\
\hline
\multicolumn{8}{|l|}{\multirow{2}{*}{(R)~sets of five random sparse curves}}\\
\multicolumn{8}{|c|}{} \\
\hline
 & \textsc{BS} & \textsc{BS{\scriptsize +bstr}} & \textsc{BS{\scriptsize +bstr+comb}} & \textsc{BS{\scriptsize +all}} & \textsc{BS{\scriptsize +all}} & \textsc{Isolate} & \textsc{LGP} \\
 degree, bits & \multicolumn{3}{c||}{GPU} & GPU & CPU & Maple & Maple \\
\hline
6, \hphantom{00}10 & 0.11 & 0.10 & 0.16 & 0.10 & 0.13 & 0.19 & 0.14\\
6, \hphantom{0}128  & 0.28 & 0.12 & 0.14 & 0.11 & 0.15 & 0.23 & 0.21\\
6, \hphantom{0}512 & 0.50 & 0.32 & 0.24 & 0.20 & 0.21 & 0.48 & 0.47\\
6, 2048 & 3.32 & 1.28 & 0.65 & 0.58 & 0.68 & 2.12 & 2.15\\
\hline
9, \hphantom{00}10 & 0.20 & 0.52 & 0.27 & 0.18 & 0.24 & 0.39 & 0.31\\
9, \hphantom{0}128  & 0.45 & 0.92 & 0.33 & 0.22 & 0.25 & 0.51 & 0.52\\
9, \hphantom{0}512 & 1.21 & 1.82 & 0.54 & 0.37 & 0.40 & 1.44 & 1.49\\
9, 2048 & 7.52 & 11.02 & 1.96 & 1.21 & 1.38 & 7.44 & 8.42\\
\hline
12, \hphantom{00}10 & 0.51 & 0.72 & 0.55 & 0.28 & 0.38 & 0.65 & 0.53\\
12, \hphantom{0}128  & 1.49 & 1.61 & 0.75 & 0.36 & 0.36 & 1.08 & 1.11\\
12, \hphantom{0}512 & 5.17 & 5.75 & 1.67 & 0.66 & 0.69 & 3.61 & 3.83\\
12, 2048 & 47.19 & 42.35 & 7.98 & 2.70 & 2.75 & 21.25 & 23.89\\
\hline
15, \hphantom{00}10 & 3.66 & 3.33 & 2.11 & 1.00 & 1.39 & 2.48 & 2.25\\
15, \hphantom{0}128  & 12.14 & 6.37 & 3.35 & 1.25 & 1.35 & 4.17 & 4.27\\
15, \hphantom{0}512 & 43.36 & 19.93 & 8.52 & 2.40 & 2.54 & 13.95 & 15.48\\
15, 2048 & 408.90 & 150.49 & 44.34 & 10.97 & 10.98 & 78.65 & 89.35\\
\hline
\end{tabularx}
\end{minipage}
\caption{Total running times for solving five systems defined by random curves of increasing degree and with increasing bit-lengths. For description of configurations, see~Table~\ref{tbl:test1}.}
\label{tbl:test2}
\end{table*}

Our experiments for this task consist of two parts: 
In the first part, we consider ``special''
curves $C=V(f)$, and compute the $x$-critical points of $C$ (i.e.~the solutions of $f=f_y=0$). The curves are selected in order to challenge different parts of our algorithm (and also other algorithms), and in order to show the efficiency of the considered filtering techniques as given in
Section~\ref{ssec:speedups:bsfilters}. For instance, we considered curves with many singularities or high-curvature points which requires many candidates to 
be tested along each vertical line, or prohibit the use of special filters.
Table~\ref{tbl:test1} lists timings for various curves (described in Table~\ref{tbl:app:special-desc}).
In the second part of our experiments, we study the performance of
\bs on random polynomials with increasing total degrees and coefficient bit-lengths. We refer the reader to Table~\ref{tbl:test2} for the corresponding timings.

In columns $2$--$6$ of Table~\ref{tbl:test1} we see the performance of
\bs with all filters set off 
(\textsc{BS}), with bitstream filter enabled only (\textsc{BS+bstr}),
with bitstream and combinatorial filter (\textsc{BS+bstr+comb}) and
with all filters enabled (\textsc{BS+all}); the latter configuration
comes with and without the computation of symbolic operations on the GPU.
For the remaining configurations, we only show the timings using the GPU. 
The corresponding CPU-based timings can easily be obtained
by adding the (absolute) difference of the \textsc{BS+all}-columns.

One can observe that our algorithm is, in general, 
superior to \isolate and \lgp, even if the filters
are not used.
By comparing columns $2$--$6$ of Table~\ref{tbl:test1}, one can see that
filtering sometimes results in a
significant performance improvement. The \emph{combinatorial} test is
particularly useful when the defining polynomials of the system
(\ref{system})
have large degrees and/or large coefficient bit-length while, at the same
time, the number of
covertical or singular solutions is small compared to the total
number of candidates being checked.
The \emph{bidirectional} filter is advantageous when the system has
covertical solutions in one direction
(say along $y$-axis) which are \emph{not} cohorizontal.
This is essentially the case for \textsf{cov\_sol\_20},
\textsf{swinnerton\_dyer}, \textsf{ten\_circles} and \textsf{curve\_issac}.

Another strength of our approach relates to the fact that the amount of
symbolic operations is crucially reduced.
Hence, when the time for computing resultants is dominating, 
the GPU-based algorithm offers a speed-up by a typical factor of 
\textbf{$2$-$5$} (sometimes even more; see, in particular, \textsf{SA\_4\_4\_eps}, \textsf{degree\_7\_surf}, \textsf{hard\_one}) 
over the version with default resultant implementation. 
It is also worth mentioning that both \isolate and \lgp benefit from 
the \emph{fast resultant computation} available in Maple while 
\cgal's default resultant computation\footnote{Authors are 
indebted to \cgal developers working on resultants.} is generally 
\emph{much slower} than that of Maple. 

Table~\ref{tbl:test2} lists timings for experiments with random curves.
Each instance consists of five curves of the same degree 
(dense or sparse) and we report the total time to 
compute the solutions of five systems of the form $f = f_y = 0$.
In order to analyze the influence of the coefficients' bit-lengths, 
we multiplied each curve by $2^k$ with $k \in \{128, 512, 2048\}$ and 
increased the constant coefficient by one.
Since the latter operation constitutes only a small 
perturbation of the vanishing set of the input system, the number of solutions 
remains constant while the content of the polynomials' 
coefficients also stays trivial. 
We see that the bidirectional filtering is not of any advantage because the
system defined by random polynomials is unlikely to have covertical
solutions. However, in this case, most candidates are rejected by the
combinatorial check, thereby omitting a (more expensive) test 
based on Theorem~\ref{thm:inclusion}. This results in a clear speed-up 
over a ``non-filtered'' version.
Also, observe that, compared to its contestants, GPU-\bs is less vulnerable to increasing the bit-length of 
coefficients. We have also observed that, for our filtered versions, the time 
for the validation step is almost independent 
of the bit-lengths.

Further experiments on solving bivariate systems of interpolated, 
parameterized, translated or projected curves are listed
in~\ref{asec:bs}. In all these tests \bs outperforms \lgp and \isolate;
the CPU-only version of \bs is at least as efficient as the 
contestants, and often even faster.
We omit experiments to refine the solution boxes to certain precision
as this matches the efficiency of QIR due to the fact that we have 
algebraic descriptions for the solutions' $x$- and $y$-coordinates.

\subsection{Analysing curves}
\label{ssec:implex:ca}

We next present the experiments comparing the analyses of single algebraic
curves for different families of curves: 
\begin{inparaenum}
\item[(R)] random curves of various degree and bit-lengths of their coefficients,
\item[(I)] curves interpolated through points on a grid,
\item[(S)] curves in the two-dimensional parameter space of a sphere,
\item[(T)] curves that were constructed by multiplying a curve $f(x,y)$ with $f(x,y+1)$, 
such that each fiber has more than one critical point,
\item[(P)] projections of intersections of algebraic surfaces in 3D and, finally,
\item[(G)] sets of three generated curves of same degree: \begin{inparaenum}[(G.1)] \item bivariate polynomials
with random uniform coefficients (\textsf{non-singular}), \item projected
intersection curves of a random surface and its $z$-derivative
(\textsf{singular-$f$-$f_z$}), and \item projected intersection curves of two
independently chosen surfaces (\textsf{singular-$f$-$g$}) \end{inparaenum} 
\item[(X)] ``special'' curves of degrees up to 42 with many singularities or high-curvature points.
\end{inparaenum}
The random and special curves were already under consideration in Section~\ref{ssec:implex:bs} where we only computed their $x$-critical points.
All other curves are taken from~\cite[4.3]{kerber-phd}.
For the curve topology analysis, we consider five different
setups: 
\begin{compactenum}[(a)]
\item \textsc{BS+all} (i.e.~\bs with all filters enabled) which is, strictly speaking, not comparable with 
the curve-analysis as it only computes the solutions
of the system $f = f_y = 0$. 
Still, it is interesting to see that, for most instances, \ca outperforms
\bs though \bs one only solves a subproblem of the curve-analysis.
\item \textsc{Ak\_2} is the bivariate algebraic kernel shipped with \cgal~3.7
but with GPU-supported resultants and $\gcd$s.
\item \slowca that exclusively uses \slowlift for the fiber liftings.
\item \fastca that first applies a random shearing (with a low-bit shearing factor), and, then, exclusively uses \fastlift for lifting step.
\item \ca combines \fastlift and \slowlift in the fiber computations as discussed
in Section~\ref{sssec:ca:lift:algo}: It uses \fastlift first, and if it fails for a certain fiber after a certain number of iterations, \slowlift is considered for this fiber instead.\\
We remark that the global modular filter that checks whether \fastlift is successful for all fibers, is not yet in action. So far, this test has only been implemented within Maple. As expected, it performs very well, that is, the run-times are considerably less than that for the majority of steps in the curve analysis. 
\end{compactenum}
\ca is our default setting, and its running time also includes the timing for the fiber computations where \fastlift fails and \slowlift is applied instead.

\begin{table}[tp]
\sffamily\small
\centering
\begin{tabularx}{\linewidth}{ |>{\hsize=1.6\hsize}R| |>{\hsize=0.7\hsize}R| |>{\hsize=0.7\hsize}R||>{\hsize=1\hsize}R|>{\hsize=1\hsize}R| |>{\hsize=1\hsize\bfseries}R| } 

\hline
\multicolumn{6}{|l|}{\multirow{2}{*}{(R)~sets of five random curves}}\\
\multicolumn{6}{|c|}{} \\
\hline
type,\hfill degree, bits & \normalfont\textsc{BS{\scriptsize +all}} & \normalfont\textsc{Ak\_2} & \normalfont\slowca & \normalfont\fastca & \normalfont\ca \\
\hline
dense,\hfill 09, \hphantom{00}10 & 0.36 & 0.66 & 1.50 & 0.29 & 0.23\\
dense,\hfill 09, 2048 & 3.75 & 3.48 & 10.61 & 2.03 & 2.16\\
\hline
dense,\hfill 15, \hphantom{00}10 & 1.55 & 2.15 & 5.81 & 0.96 & 0.92\\
dense,\hfill 15, 2048 & 19.89 & 16.86 & 54.58 & 7.74 & 13.24\\
\hline
sparse,\hfill 09, \hphantom{00}10 & 0.18 & 1.05 & 0.54 & 0.20 & 0.11\\
sparse,\hfill 09, 2048 & 1.21 & 4.46 & 2.79 & 1.38 & 0.68\\
\hline
sparse,\hfill 15, \hphantom{00}10 & 1.00 & 3.37 & 3.03 & 0.71 & 0.59\\
sparse,\hfill 15, 2048 & 10.97 & 22.78 & 24.85 & 5.47 & 5.46\\

\hline
\multicolumn{6}{|l|}{\multirow{2}{*}{(I)~sets of five interpolated curves through points on a grid}}\\
\multicolumn{6}{|c|}{} \\
\hline
degree & \normalfont\textsc{BS{\scriptsize +all}} & \normalfont\textsc{Ak\_2} & \normalfont\slowca & \normalfont\fastca & \normalfont\ca \\
\hline

 9 & 3.70 & 4.98 & 9.49 & 1.59 & 2.37\\
12 & 23.09 & 27.56 & 57.91 & 12.37 & 13.61\\
15 & 214.54 & 160.36 & 451.29 & 69.20 & 114.63\\

\hline
\multicolumn{6}{|l|}{\multirow{2}{*}{(S)~sets of five parameterized curves on a sphere with 16bit-coefficients}}\\
\multicolumn{6}{|c|}{} \\
\hline
degree & \normalfont\textsc{BS{\scriptsize +all}} & \normalfont\textsc{Ak\_2} & \normalfont\slowca & \normalfont\fastca & \normalfont\ca \\
\hline
 6 & 3.00 & 12.62 & 16.12 & 1.97 & 1.98\\
 9 & 30.87 & 39.74 & 119.61 & 27.49 & 21.37\\

\hline
\multicolumn{6}{|l|}{\multirow{2}{*}{(T)~sets of five curves with a vertically translated copy}}\\
\multicolumn{6}{|c|}{} \\
\hline
degree & \normalfont\textsc{BS{\scriptsize +all}} & \normalfont\textsc{Ak\_2} & \normalfont\slowca & \normalfont\fastca & \normalfont\ca \\
\hline
 6 & 1.32 & 12.69 & 8.59 & 0.77 & 0.67\\
 9 & 5.05 & 134.75 & 27.93 & 5.39 & 2.23\\

\hline
\multicolumn{6}{|l|}{\multirow{2}{*}{(P)~projected~intersection~curve of
surfaces with 8bit-coefficients}}\\
\multicolumn{6}{|c|}{} \\
\hline
degree(s) & \normalfont\textsc{BS{\scriptsize +all}} & \normalfont\textsc{Ak\_2} & \normalfont\slowca & \normalfont\fastca & \normalfont\ca \\
\hline

$6\cdot6$ & 1.40 & 220.02 & 383.45 & 2.57 & 0.68\\
$8\cdot8$ & 21.86 & timeout & 117.57 & 19.56 & 6.17\\

\hline
\multicolumn{6}{|l|}{\multirow{2}{*}{(G)~random~singular~and~non-singular~curves}}\\
\multicolumn{6}{|c|}{} \\
\hline
type \hfill degree, bits & \normalfont\textsc{BS{\scriptsize +all}} & \normalfont\textsc{Ak\_2} & \normalfont\slowca & \normalfont\fastca & \normalfont\ca \\
\hline
non-singular            \hfill 42,  237 &56.57  &40.66  &133.12 &23.27  &35.80\\        
singular-$f$-$f_z$      \hfill 42,  238 &64.24  &timeout    &372.99 &52.27  &25.50\\      
singular-$f$-$g$        \hfill 42,  237 &122.20 &timeout    &419.16 &39.55  &18.77\\     
\hline

\hline
\multicolumn{6}{|l|}{\multirow{2}{*}{(X)~special~curves (see Table \ref{tbl:app:special-desc} in~\ref{asec:instances} for descriptions)}}\\
\multicolumn{6}{|c|}{} \\
\hline
curve & \normalfont\textsc{BS{\scriptsize +all}} & \normalfont\textsc{Ak\_2} & \normalfont\slowca & \normalfont\fastca & \normalfont\ca \\
\hline

L6\_circles & 2.60 & 171.86 & 108.46 & 1.61 & 1.62\\
SA\_4\_4\_eps & 2.01 & 122.30 & 11.96 & 3.92 & 2.00\\
challenge\_12 & 7.35 & timeout & 16.11 & 64.75 & 12.50\\
compact\_surf & 4.06 & 81.56 & 19.66 & 7.43 & 5.31\\
cov\_sol\_20 & 5.77 & 43.40 & 14.06 & 4.22 & 2.41\\
degree\_7\_surf & 7.74 & timeout & 57.41 & 6.23 & 4.19\\
dfold\_10\_6 & 1.55 & 35.40 & 10.74 & 8.97 & 0.90\\
mignotte\_xy & 243.16 & timeout & 276.89 & 199.59 & 128.05\\
spider & 46.47 & timeout & 200.61 & 22.34 & 21.03\\
swinnerton\_dyer & 5.28 & 347.28 & 43.78 & 13.04 & 6.97\\
ten\_circles & 1.33 & 22.77 & 11.84 & 4.26 & 0.86\\

\hline
\end{tabularx}
\caption{Running times (in sec) for analyses of algebraic curves of various families; \textbf{timeout:} algorithm timed out ($>$~600~sec)}
\label{tbl:cana}
\end{table}

Table~\ref{tbl:cana} lists the running times for single-curve analyses.
We only give the results for representative examples; full tables
are listed in \ref{asec:ca}.
From our experiments, we conclude that \ca is, in general,
superior to the existing kernel, even though \cgal's original implementation 
now profits from GPU-accelerated resultants and gcds.
Moreover, while the speed-up for curves in generic position is already considerable 
(about half of the time), it becomes even more impressive for projected intersection curves
of surfaces and ``special'' curves with many singularities.
The reason for this tremendous speed-up is that, for singular 
curves, \textsc{Ak\_2}'s performance drops significantly with the degree
of the curve when the time to compute subresultants on the CPU becomes dominating. In addition, for curves in 
non-generic position, the efficiency of \textsc{Ak\_2} is affected because 
a coordinate transformation has to be considered in these cases.

Recall that \fastlift in \ca fails for very few instances, where \slowlift is locally used to treat some of the $x$-critical fibers instead. 
The switch to the backup method is observable in timings; see
for instance, \textsf{challenge\_12}.
Namely, the difference of the running times between \ca and \slowca are considerably less than the difference which can usually be observed for instances where the filter method succeeds for all fibers. 
In these cases, the numerical solver cannot isolate the roots
within a given number of iterations, 
or we indeed have $n_{\alpha} < n^+_{\alpha}$ for some fibers $x=\alpha$; see~Section~\ref{sssec:ca:lift:fastlift}.
Nevertheless, the running times are still very promising
and yet perform much better than \textsc{Ak\_2} for non-generic input,
even though \slowlift's implementation is not yet optimized,
and we anticipate a further performance improvement.

Similar as \textsc{Ak\_2} has improved on previous approaches when it was presented in 2008, our
new methods improve on \textsc{Ak\_2} now. That is, for random, interpolated 
and parameterized curves, the speed gain is noticeable, while
for translated curves and projected intersections, we improve the more the higher the degrees. On some curves of large degree(!), we improve by a 
factor up to 250 and more.

We also recommend \ca over \fastca since it gives full geometric information
at basically no additional cost; that is, for random instances, both are similarly efficient whereas, for non-random input, the winner is often determined by
the geometry of the curve. For instance, the projection step in \fastca is faster than that of \ca for random and interpolated curves. 
We cannot fully explain this observation, but we guess that the initial shearing results in a better separation of the resultant's
roots which makes the real root isolation cheaper. On the other hand, for
curves with many covertical critical points (e.g.~\textsf{challenge\_12}), shearing
yields a resultant which decomposes into less but more complex factors, which implies much higher
cost to isolate the roots of the resultant polynomial. In addition, 
we have to consider more $x$-critical fibers, and \fastlift also has to deal with larger bitlengths. In summary, we propose to not consider a shearing because, from our experiments, we can say that the increased cost are higher than the cost for the few needed runs of \slowlift, when \ca analyses the curve in the original coordinate system.
s
Unlike existing algorithms, \ca exhibits a very robust behavior on singular
inputs. In contrast, it often performs even better on singular instances
than on non-singular curves which have the same input size. This behavior
can be read off in detail from Table \ref{tbl:app:special} in \ref{asec:ca}.
where we compare curves of same degree
without and with singularities. For large instances, 
\ca noticeably outperforms the other contestants and
actually even benefits from singularities. We suspect that this behavior is
due to the fact that the resultant splits into many simple factors. Namely,
in this case, root isolation of the resultant becomes less costly than in
the non-singular case, where the resultant does not yield such a strong
factorization.

The drastically improved analyses of algebraic curves has also some impact
on the performance for analyzing algebraic surfaces. The approach 
in~\cite{bks-efficient} is crucially based on the analysis of the projected silhouette curve of the surface $f(x,y,z)=0$ (i.e.~$\operatorname{res}(f,f_z;z)=0$). The latter analysis turns out to be the main bottleneck using \cgal's algebraic kernel (\textsc{AK\_2}; see column~3 in 
Table~\ref{tbl:cana}). In particular, for projected intersection curves of two surfaces, \ca behaves drastically (typically by a factor 100 and more) better than \textsc{AK\_2}. Hence, we claim that the maximal reasonable degree
of surfaces that can be analyzed using the approach from~\cite{bks-efficient} grows from approximately $5-6$ to $8-10$.

\subsection{Computing arrangements}
\label{ssec:implex:arr}

For arrangements of algebraic curves, we compare two implementations:
\begin{compactenum}[(A)]
 \item \textsc{Ak\_2} is \cgal{}'s bivariate algebraic kernel shipped with
\cgal~3.7 but with GPU-supported resultants and gcds. 
 \item \kernelnt is the same but uses \ca to analyze single algebraic
curves. For the curve pair analyses, \kernelnt exploits \textsc{Ak\_2}'s
functionality whenever subresultant computations are not needed (i.e.~a
unique transversal intersection of two curves along a critical event line).
For more difficult situations (i.e.~two covertical intersections or a
tangential intersection), the curve pair analysis uses \bs as explained in
Section~\ref{sec:arr}.
\end{compactenum}
Our testbed consists of sets of curves from different families:
\begin{inparaenum}
\item[(F)] random rational functions of various degree
\item[(C)] random circles
\item[(E)] random ellipses
\item[(R)] random curves of various degree and coefficient bit-length
\item[(P)] sets of projected intersection curves of algebraic surfaces, and, finally,
\item[(X)] combinations of ``special'' curves.
\end{inparaenum}

\begin{table}[hbtp]
\sffamily\small
\centering
\begin{tabularx}{0.6\linewidth}{ |>{\hsize=1.0\hsize}R| |>{}R| |>{\hsize=1.0\hsize\bfseries}R| } 

\hline
\multicolumn{3}{|l|}{\multirow{2}{*}{(P)~increasing number of projected surface intersections}}\\
\multicolumn{3}{|c|}{} \\
\hline
\#resultants & \normalfont\textsc{Ak\_2} & \normalfont\kernelnt\\
\hline
 2   &0.49 & 0.21 \\
 3   &0.93 & 0.48 \\
 4   &1.64 & 1.03 \\
 5   &3.92 & 2.44 \\
 6   &7.84 & 5.14 \\
 7   &21.70&13.65 \\
 8   &35.77&22.69 \\
 9   &67.00&41.53 \\
10   &91.84&58.37 \\

\hline
\multicolumn{3}{|l|}{\multirow{2}{*}{(X)~combinations of special curves}}\\
\multicolumn{3}{|c|}{} \\
\hline
\#curves & \normalfont\textsc{Ak\_2} & \normalfont\kernelnt\\
\hline
 2   &81.93  &9.2    \\
 3   &148.46 &25.18  \\
 4   &730.57 &248.87 \\
 5   &836.43 &323.42 \\
 6   &3030.27&689.39 \\
 7   &3313.27&757.94 \\
 8   &timeout&1129.98\\
 9   &timeout&1166.17\\
10   &timeout&1201.34\\
11   &timeout&2696.15\\

\hline
\end{tabularx}
\caption{Running times (in sec) for computing arrangements of algebraic curves; \textbf{timeout:} algorithm timed out ($>$~4000~sec)}
\label{tbl:arr}
\end{table}

We skip the tables for rational functions, circles, ellipses and
random curves because the performance of both
contestants are more or less equal: The \emph{linearly} many 
curve-analyses are simple and, for the \emph{quadratic} number of curve-pair
analyses, there are typically no multiple intersections along a fiber,
that is, \bs is not triggered. Thus, the execution paths of
both implementations are almost identical, but only as we enhanced 
\textsc{Ak\_2} with GPU-enabled resultants and $\gcd$s. In addition, we also do not expect the need of a shear for such curves, thus, the behavior is
anticipated. The picture changes for projected intersection curves of 
surfaces and combinations of special curves whose running times are 
reported in~Table~\ref{tbl:arr}. The \textsc{Ak\_2} requires for both sets 
expensive subresultants to analyze single curves and to compute covertical 
intersections, while \kernelnt's performance is crucially less affected in such situations.




\section{Summary and Outlook}
\label{sec:conclusion}

We presented new algorithms to exactly compute with algebraic curves. By combining methods from different fields, we have been able to considerably reduce the amount of purely symbolic operations, and to outsource the remaining ones to graphics hardware. The majority of all computation steps is exclusively based on certified approximate arithmetic. As a result, our new algorithms are not only faster than existing methods but also capable to handle geometric difficult instances at least as fast as seemingly easy ones. We believe that, with respect to efficiency, there is a good chance that exact and complete methods can compete with purely numerical approaches which do not come with any additional guarantee. The presented experiments seem to affirm this claim.

We are confident that our new approach will also have some positive impacts in the following respect: There exist several non-certified (or non-complete) approaches either based on subdivision~\cite{mourrain-sub,bcgy-complete,mp-subdivision, PV04, Snyder92, SnyderK92} or homotopy methods~\cite{homotopy07}. They show very good 
behavior for most inputs. However, in order to guarantee exactness for all possible inputs (e.g.~singular curves), additional certification steps (e.g.~worst case separation bounds for subdivision methods) have to be considered, an approach which has not shown to be effective in practice so far.
An advantage of the latter methods, compared to elimination approaches, 
is that they are local and do not need (global) algebraic operations. It seems reasonable that combining our algorithm with a subdivision or homotopy 
approach eventually leads to a certified and \emph{complete} method which shows excellent ``local'' behavior as well. 

We further see numerous applications of our methods, in particular, when computing arrangements of surfaces. The actual implementation~\cite{bks-efficient} for surface triangulation is crucially based on planar arrangement computations of singular curves. Thus, we are confident that its efficiency can be considerably improved by using the new algorithm for planar arrangement computation.
In addition, it would be interesting to extend 
our algorithm \bs to the task of solving a polynomial system 
of higher dimensions.

The bit complexity analysis of \bs as presented in~\cite{es-bisolvecomplexity-11} hints to the fact that the total cost of \bs 
is dominated by the root isolation step for the elimination polynomial, and, for many instances, our experiments also confirm the latter claim. We aim to provide a proof for this behavior by means of a bit complexity analysis for \ca as well.

Finally, we remark that \textsc{Ak\_2} has been integrated into a webdemo~\cite{webdemo-08} which has already been used by numerous parties of interest. Certainly, we aim to update this webdemo by integrating the new algorithms from \kernelnt sinstead  

\section*{Acknowledgments}
\label{sec:acks}
Without Michael Kerber's careful implementation of the bivariate 
kernel in \cgal~\cite{cgal:wfzh-a2-11b}, this work would not have been 
realizable in a reasonable time. We would like to use the opportunity to 
thank Michael for his excellent work. Additionally, his comments on 
prior versions of the work were highly appreciated.
A special thank goes to all anonymous reviewers of the underlying 
conference submissions for their constructive and detailed criticism 
that have helped to improve the quality and exposition 
of this contribution.\newpage

{
\bibliographystyle{elsarticle-num}
\bibliography{bib,bib2,num,cgal_3.9}
}

\newpage

\begin{appendix}

\section{Numerical Solver with Certificate}
\label{asec:numerical}

\newcommand{\abs}[1]{\ensuremath{\left|#1\right|}}%
\newcommand{\PN}{\ensuremath{\operatorname{\text{\textit{PN}}}}}%
\newcommand{\RN}{\ensuremath{\operatorname{\text{\textit{RN}}}}}%
In \fastlift (see Section~\ref{sssec:ca:lift:fastlift}), we deploy a certified numerical solver for a fiber polynomial to find regions certified to contain its complex roots.
Bini and Fiorentino presented a highly efficient solution to this problem in their \textsc{MPSolve} package \cite{BF00-design}.
However, the interface of \textsc{MPSolve} only allows root isolation for polynomials with arbitrary, but fixed, precision coefficients.
Our solver adapts their approach in a way suited to also handle the case where the coefficients are not known a priori, but rather in an intermediate representation which can be evaluated to any arbitrary finite precision.
In particular, this applies in the setting of \fastlift, where the input features algebraic coefficients, represented as refineable isolating intervals of integer polynomials.

The description given in this section is rather high-level, and chosen to cover the specific application \fastlift.
For the details of an efficient implementation, we refer the reader to \cite{Kobel11}.
Let $g(z) := f(\alpha, z) = \sum_{i=0}^n g_i z^i \in \RR[z]$ be a fiber polynomial at an $x$-critical value $\alpha$ and $V(g) = \{ \zeta_i \}$, $i = 1, \dots, n,$ its complex roots. Thus, $g(z) = g_n \allowbreak \prod_{i=1}^n \allowbreak (z - \zeta_i)$.

Our numerical solver is based on the Aberth-Ehrlich iteration for simultaneous root finding.
Starting from arbitrary distinct root guesses $(z_i)_{i=1,\dots,n}$, it is given by the component-wise iteration rule $z'_i = z_i$ if $g(z_i) = 0$, and
\begin{equation*}
  z'_i = z_i - \frac{g(z_i) / g'(z_i)}{1 - g(z_i) / g'(z_i) \cdot \sum_{j \ne i} \frac{1}{z_i - z_j}}
\end{equation*}
otherwise.
As soon as the approximation vector $(z_i)_i$ lies in a sufficiently small neighborhood of some permutation of the actual roots $(\zeta_i)_i$ of $g$, this iteration converges with cubic order \cite{Tilli98} to simple roots.
For roots of higher multiplicity or clustered roots, we use a variant of Newton's method to achieve quadratic convergence as an intermediate step between the Aberth-Ehrlich iterations.
In practice, this combination shows excellent performance even if started with an arbitrary configuration of initial root guesses far away from the solutions.

A straight-forward implementation of the Aberth-Ehrlich method in arbitrary-precision arithmetic requires the coefficients $g_i$ of $g$ to be known up to some relative precision $p$, that is, the input is a polynomial $\tilde{g} = \sum \tilde{g_i} x^i$ whose floating point coefficients satisfy $\abs{\tilde{g_i}-g_i} \le 2^{-p} \abs{g_i}$.
In particular, this requirement implies that we have to decide in advance whether a coefficient vanishes.
However, in our application, a critical $x$-coordinate $\alpha$ of a fiber polynomial is not necessarily rational, and so are the coefficients of $g$.
Thus, the restriction on the coefficients translates to expensive symbolic $\gcd$ computations of the resultant and the coefficients of the defining polynomial $f$ of the curve, considered as a univariate polynomial in $\Z[y][x]$.

Instead, we work on a \emph{Bitstream interval representation} $[g]^\mu$ of $g$ (see \cite{ek+-descartes,Kobel11}).
Its coefficients are interval approximations of the coefficients of $g$, where we require the width $|g_i^+ - g_i^-|$ of each coefficient $[g]^\mu_i = [g_i^-, g_i^+]$ to be $\le \mu$ for a certain \emph{absolute} precision $\mu.$
Thus, in contrast to earlier implementations, we have to decide whether $g_i = 0$ for the leading coefficient only.
$[g]^\mu$ represents the set $\{ \tilde{g} : \tilde{g_i} \in [g]^\mu_i \}$ of polynomials in a \emph{$\mu$-polynomial neighborhood} of $g$; in particular, $g$ itself is contained in $[g]^\mu$.
Naturally, for the interval boundaries, we consider dyadic floating point numbers (\emph{bigfloats}).
Note that we can easily compute arbitrarily good Bitstream representations of $f(\alpha, z)$ by approximating $\alpha$ to an arbitrary small error, for example using the quadratic interval refinement technique \cite{abbott-qir-06}.

Starting with some precision (say, $\mu = 2^{-53}$) and a vector of initial approximations, we perform Aberth's iteration on some representative $\tilde{g} \in [g]^\mu$.
The natural choice is the \emph{median polynomial} with $\tilde{g_i} = (g_i^- + g_i^+)/2$, but we take the liberty to select other candidates in case of numerical singularities in Aberth's rule (most notably, if $\tilde{g}'(z_i) = 0$
in some iteration).

After a finite number of iterations (depending on the degree of $g$), we interrupt the iteration and check whether the current approximation state already captures the structure of $V(g)$.
We use the following result by Neumaier and Rump \cite{Rump03}, founded in the conceptually similar Wei\-er\-stra\ss-Du\-rand-Ker\-ner simultaneous root iteration:
\begin{lemma}[Neumaier]
  Let $g(z) = g_n \prod_{i=1}^n (z - \zeta_i) \in \C[z]$, $g_n \ne 0$.
  Let $z_i \in \C$ for $i = 1, \dots, n$ be pairwise distinct root approximations.
  Then, all roots of $g$ belong to the union $\mathcal{D}$ of the discs
  \begin{gather*}
    D_i := D(z_i - r_i, |r_i|),\\
    \text{where } r_i := \frac{n}{2} \cdot \frac{\omega_i}{g_n} \text{ and } \omega_i := \frac{g(z_i)}{\prod\nolimits_{j \ne i} (z_i - z_j)}.
  \end{gather*}
  Moreover, every connected component $C$ of $\mathcal{D}$ consisting of $m$ discs contains \emph{exactly} $m$ zeros of $g$, counted with multiplicity.
\end{lemma}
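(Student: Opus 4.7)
The plan is to reduce the inclusion statement to a Lagrange-interpolation identity and then to an elementary averaging argument. First, I would observe that the polynomial $h(z) := g(z) - g_n \prod_{i=1}^n (z - z_i)$ has degree strictly less than $n$ and satisfies $h(z_j) = g(z_j)$ for all $j$. Since the $z_j$ are pairwise distinct, Lagrange interpolation through the nodes $z_1,\dots,z_n$ yields
\begin{equation*}
  g(z) \;=\; g_n \prod_{i=1}^n (z - z_i) \;+\; \sum_{j=1}^n \omega_j \prod_{k\ne j} (z - z_k),
\end{equation*}
with $\omega_j$ exactly as in the statement. Evaluating this identity at a root $\zeta$ of $g$ (and assuming $\zeta\ne z_i$ for all $i$, since otherwise $\zeta = z_i \in D_i$ trivially) and dividing by $\prod_i (\zeta - z_i)$, I obtain the key identity $\sum_{j=1}^n \frac{\omega_j/g_n}{z_j-\zeta} \;=\; 1$.

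Next, I would rephrase disc membership in terms of a real-part inequality. A short computation with the Möbius map $w\mapsto -1/w$ shows that $|u+1|\le 1$ is equivalent to $\operatorname{Re}(-1/u)\ge \tfrac12$; applied with $u = (\zeta - z_j)/r_j$ this yields
\begin{equation*}
  \zeta \in D_j \;\iff\; \operatorname{Re}\!\left(\frac{r_j}{z_j-\zeta}\right) \ge \tfrac12
  \;\iff\; \operatorname{Re}\!\left(\frac{\omega_j/g_n}{z_j-\zeta}\right) \ge \tfrac1n.
\end{equation*}
Taking real parts of the Lagrange identity, the $n$ summands average to $1/n$, so by pigeonhole at least one of them is $\ge 1/n$, proving $\zeta\in D_j$ for that index. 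This settles the inclusion part $V(g)\subseteq\mathcal{D}$.

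For the counting statement, I would use a homotopy from the easy polynomial $g_0(z) := g_n\prod_i(z-z_i)$, whose roots are exactly the $z_i$, to $g_1 := g$, defined by $g_t(z) := g_n\prod_i(z-z_i) + t\sum_j \omega_j \prod_{k\ne j}(z-z_k)$. The Weierstrass corrections of $g_t$ at the $z_i$ scale to $t\omega_i$, hence the associated disc is $D_i(t) = D(z_i - tr_i,\,|tr_i|)$. A direct triangle-inequality check shows $D_i(t_0)\subseteq D_i(t_1)$ whenever $t_0\le t_1$, so $\mathcal{D}(t)\subseteq \mathcal{D}(1)=\mathcal{D}$ for all $t\in[0,1]$. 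Applying the already-proved inclusion to $g_t$ confines the roots of $g_t$ to $\mathcal{D}$ throughout the deformation. Now, given a connected component $C$ of $\mathcal{D}$ consisting of $m$ discs, I would pick a simple closed curve $\Gamma\subset\mathbb{C}\setminus\mathcal{D}$ enclosing exactly $C$; since $\mathcal{D}$ is compact and $C$ is a component, such a $\Gamma$ exists at positive distance from $\mathcal{D}$. No root of any $g_t$ ever crosses $\Gamma$, so by continuity of roots (equivalently, the argument principle applied to $g_t$ on $\Gamma$), the number of roots inside $\Gamma$ is independent of $t$. At $t=0$ this number equals the number of centers $z_i$ lying in $C$, which is exactly $m$, giving the claim.

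The main obstacle I anticipate is the bookkeeping in the last step: one must ensure that discs belonging to different components of $\mathcal{D}(1)$ do not ``collide'' or merge with $C$ during the homotopy, which is why the monotonicity $\mathcal{D}(t)\subseteq\mathcal{D}(1)$ together with the choice of $\Gamma$ outside $\mathcal{D}$ is crucial. The other small technicality is justifying the Möbius reformulation of $D_j$, which is direct but easy to garble if one forgets that $z_j$ itself lies on $\partial D_j$ (so the degenerate case $\zeta = z_i$ must be handled separately).
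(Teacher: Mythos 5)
Your proposal is correct. Note that the paper itself does not prove this lemma: it is quoted as a known result of Neumaier (and Rump) with a citation, so there is no in-paper argument to compare against. Your plan reconstructs the classical proof faithfully. The Lagrange identity $g(z)=g_n\prod_i(z-z_i)+\sum_j\omega_j\prod_{k\ne j}(z-z_k)$ is exactly right (the difference has degree $<n$ and interpolates $g(z_j)$ at the distinct nodes), the normalization at a root $\zeta\notin\{z_i\}$ gives $\sum_j(\omega_j/g_n)/(z_j-\zeta)=1$, and the M\"obius reformulation $\zeta\in D_j\iff\operatorname{Re}\bigl((\omega_j/g_n)/(z_j-\zeta)\bigr)\ge 1/n$ checks out, so the pigeonhole on real parts yields the inclusion. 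The homotopy $g_t$ with Weierstrass corrections $t\omega_i$, the nestedness $D_i(t_0)\subseteq D_i(t_1)$ for $t_0\le t_1$ (all these discs pass through $z_i$, and the triangle inequality gives the inclusion), and the argument principle on a contour separating $C$ from the other components give the counting statement, since at $t=0$ the $m$ simple roots $z_i$ inside the contour are precisely the centers of the $m$ discs of $C$. Two small points worth making explicit in a full write-up: when $\omega_j=0$ the disc $D_j$ degenerates to $\{z_j\}$ and the M\"obius equivalence needs $r_j\ne 0$, but the pigeonhole index necessarily has a strictly positive term and hence $r_j\ne0$, so this is harmless; and a component $C$ may enclose holes, but since no root of any $g_t$ and no node $z_i$ can lie in a hole (both are confined to $\mathcal{D}$), counting roots inside the outer contour still counts exactly the roots in $C$.
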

The above lemma applied to $[g]^\mu$ using conservative interval arithmetic yields a superset $\mathcal{C} = \{ C_1, \dots, C_m \}$ of regions and corresponding multiplicities $\lambda_1, \dots, \lambda_m$ such that, for each $C_k \in \mathcal{C}$, all polynomials $\tilde{g} \in [g]^\mu$ (and, in particular, $g$) have exactly $\lambda_k$ roots in $C_k$ counted with multiplicities.
Furthermore, once the quality of the approximations $(z_i)_i$ and $[g]^\mu$ is sufficiently high, $\mathcal{C}$ converges to $V(g)$.

In \fastlift, where we aim to isolate the roots of $g:=f(\alpha,y)$, we check whether $m=m_\alpha = n_{\alpha}^{+}$.
If the latter equality holds, we are guaranteed that the regions $C_k \in \mathcal{C}$ are isolating for the roots of $g$, and we stop.
Otherwise, we repeat Aberth's iteration after checking whether $0 \in [g]^\mu(z_i)$.
Informally, if this holds the quality of the root guess is not distinguishable from any (possibly better) guess within the current interval approximation of $g$, and we double the precision ($\mu' = \mu^2$) for the next stage.

Aberth's iteration lacks a proof for convergence in the general case and, thus, cannot be considered complete.
However, we feel this is a purely theoretical issue: to the best of our knowledge, only artificially constructed, highly degenerate configurations of initial approximations render the algorithm to fail.
In our extensive experiments, this situation never occurred.
From a theoretical point of view, it is possible to enhance the Aberth-Ehrlich method by a complete complex solver as a fallback method to ensure convergence of the root isolation.
E.g., the \textsc{CEval} subdivision solver \cite{SY09,Kamath10} can be extended to handle bitstream coefficients by employing perturbation bound techniques \cite{s-bitstream-mcs11}.

We note that regardless of this restriction, the regions $C_k \in \mathcal{C}$ are certified to comprise the roots of $g$ at any stage of the algorithm by Neumaier's lemma and the rigorous use of interval arithmetic.
In particular, the correctness of \fastlift and, thus, the completeness of the filtered curve analysis \ca is not affected.

\newpage
\section{Description of Special Curves}
\label{asec:instances}

\begin{table}[H]
  \centering\sffamily\small
  \begin{minipage}[t]{\linewidth}
    \centering
    \begin{tabularx}{\linewidth}{|l|r|L|}
      \hline
      Single curve & $\deg_y$ & Description\\
      \hline
      13\_sings\_9      &     9 & large coefficients; high-curvature points\\
      FTT\_5\_4\_4*     &    40 & many non-rational singularities\\
      L4\_circles       &    16 & 4 circles w.r.t.\ L4-norm; clustered solutions\\
      L6\_circles       &    32 & 4 circles w.r.t.\ L6-norm; clustered solutions\\
      SA\_2\_4\_eps*    &    16 & singular points with high tangencies, displaced\\
      SA\_4\_4\_eps*    &    33 & singular points with high tangencies, displaced\\ 
      challenge\_12*    &    30 & many candidate solutions to check\\
      challenge\_12\_1* &    40 & many candidates to be check\\
      compact\_surf     &    18 & silhouette of an algebraic surface; many singularities and isolated solutions\\
      cov\_sol\_20      &    20 & covertical solutions\\
      curve24           &    24 & curvature of degree 8 curve; many singularities\\
      curve\_issac      &    15 & isolated points, high-curvature points \cite{LGP-09}\\
      cusps\_and\_flexes&     9 & high-curvature points\\
      degree\_7\_surf   &    42 & silhouette of an algebraic surface; covertical solutions in $x$ and $y$\\ 
      dfold\_10\_6*     &    30 & many half-branches\\
      grid\_deg\_10     &    10 & large coefficients; curve in generic position\\
      huge\_cusp        &     8 & large coefficients; high-curvature points\\
      mignotte\_xy      &    42 & a product of $x$/$y$-Mignotte polynomials, displaced; many clustered solutions\\
      spider            &    12 & degenerate curve; many clustered solutions\\
      swinnerton\_dyer       &    25 & covertical solutions in $x$ and $y$\\ 
      ten\_circles      &    20 & set of 10 random circles multiplied together; rational solutions\\
\hline
\hline
      Pairs of curves & $\deg_y$ & Description\\
\hline
      deg18\_7\_curves  & 18, 7 & higher-order singularities on both curves\\
      hard\_one         &  27, 6 & vertical lines as components of one curve; many candidates to check\\
      large\_curves     &   24, 19    & large number of solutions\\
      spiral29\_24      & 29, 24 & Taylor expansion of a spiral intersecting a curve with many branches;\newline many candidates to check\\
      tryme             &   24, 34    & covertical solutions; many candidates to check\\
      vert\_lines       & 16, 6 & high-order singularity on one curve, many intersections\\
%
      \hline
    \end{tabularx}
  \end{minipage}
\caption{Description of the curves used in the first part of experiments. 
In case only a single curve given, the second curve is taken to be the 
first derivative w.r.t.\ $y$-variable. Curves marked with a star (*) are given in~\cite{labs_10}.}\
\label{tbl:app:special-desc}
\end{table}

\newpage
\section{Further experiments for bivariate system solving}
\label{asec:bs}

\begin{table}[H]
\centering\footnotesize\sffamily
\begin{minipage}[t]{\linewidth}
\centering
\begin{tabularx}{\linewidth}{ |>{\hsize=1.6\hsize}R| 
|>{\hsize=0.70\hsize}R 
|>{\hsize=0.9\hsize}R
|>{\hsize=1.5\hsize}R|
|>{\hsize=0.9\hsize\bfseries}R
|>{\hsize=0.9\hsize}R|
|>{\hsize=0.75\hsize}R
|>{\hsize=0.75\hsize}R|}
\hline
\multicolumn{8}{|l|}{\multirow{2}{*}{(I)~sets of five interpolated curves through points on a grid}}\\
\multicolumn{8}{|c|}{} \\
\hline
 & \textsc{BS} & \textsc{BS{\scriptsize +bstr}} & \textsc{BS{\scriptsize +bstr+comb}} & \textsc{BS{\scriptsize +all}} & \textsc{BS{\scriptsize +all}} & \textsc{Isolate} & \textsc{LGP} \\
 degree & \multicolumn{3}{c||}{GPU} & GPU & CPU & Maple & Maple \\
\hline
5 & 0.29 & 0.17 & 0.32 & 0.27 & 0.38 & 0.59 & 0.51\\
6 & 1.20 & 0.50 & 0.67 & 0.59 & 0.71 & 1.07 & 1.12\\
7 & 4.52 & 1.79 & 1.35 & 1.16 & 1.37 & 2.08 & 2.32\\
8 & 14.86 & 3.63 & 2.55 & 1.98 & 2.51 & 3.82 & 4.20\\
9 & 63.46 & 7.33 & 5.19 & 3.70 & 4.50 & 7.17 & 7.99\\
10 & 194.04 & 13.14 & 8.96 & 5.46 & 6.71 & 12.44 & 13.76\\
11 & timeout & 25.11 & 19.59 & 10.94 & 12.31 & 24.82 & 28.61\\
12 & timeout & 44.84 & 41.88 & 23.09 & 25.23 & 50.54 & 55.56\\
13 & timeout & 80.44 & 84.29 & 45.54 & 49.92 & 98.92 & 110.02\\
14 & timeout & 138.13 & 191.25 & 101.96 & 103.91 & 182.72 & 205.26\\
15 & timeout & 225.39 & 376.17 & 214.54 & 219.39 & 371.25 & 399.64\\
16 & timeout & 367.85 & timeout & 410.46 & 427.50 & timeout & timeout\\
\hline
\multicolumn{8}{|l|}{\multirow{2}{*}{(S)~sets of five parameterized curves on a sphere with 16bit-coefficients}}\\
\multicolumn{8}{|c|}{} \\
\hline
 & \textsc{BS} & \textsc{BS{\scriptsize +bstr}} & \textsc{BS{\scriptsize +bstr+comb}} & \textsc{BS{\scriptsize +all}} & \textsc{BS{\scriptsize +all}} & \textsc{Isolate} & \textsc{LGP} \\
 degree & \multicolumn{3}{c||}{GPU} & GPU & CPU & Maple & Maple \\
\hline
1 & 0.06 & 0.05 & 0.1 & 0.09 & 0.12 & 0.14 & 0.13\\
2 & 0.23 & 0.48 & 0.24 & 0.21 & 0.36 & 0.47 & 0.40\\
3 & 3.28 & 1.94 & 0.53 & 0.39 & 0.66 & 0.92 & 0.87\\
4 & 26.62 & 9.21 & 1.38 & 1.03 & 2.07 & 2.81 & 2.65\\
5 & 241.74 & 23.74 & 3.22 & 1.93 & 4.24 & 6.92 & 6.05\\
6 & timeout & 65.23 & 6.26 & 3.00 & 6.21 & 10.81 & 10.01\\
7 & timeout & 136.56 & 19.81 & 11.52 & 21.33 & 52.11 & 50.37\\
8 & timeout & 221.74 & 38.8 & 22.52 & 35.77 & 107.27 & 107.84\\
9 & timeout & 569.67 & 66.19 & 30.87 & 50.00 & 170.10 & 169.87\\
10 & timeout & timeout & 117.21 & 46.32 & 69.99 & 280.90 & 277.94\\
\hline
\multicolumn{8}{|l|}{\multirow{2}{*}{(T)~sets of five curves with a vertically translated copy}}\\
\multicolumn{8}{|c|}{} \\
\hline
 & \textsc{BS} & \textsc{BS{\scriptsize +bstr}} & \textsc{BS{\scriptsize +bstr+comb}} & \textsc{BS{\scriptsize +all}} & \textsc{BS{\scriptsize +all}} & \textsc{Isolate} & \textsc{LGP} \\
 degree & \multicolumn{3}{c||}{GPU} & GPU & CPU & Maple & Maple \\
\hline
5 & 23.29 & 1.38 & 1.8 & 0.93 & 2.07 & 2.02 & 1.68\\
6 & 123.54 & 3.31 & 3.5 & 1.32 & 2.89 & 3.17 & 2.64\\
7 & 506.96 & 7.73 & 6.62 & 2.15 & 4.22 & 4.43 & 4.18\\
8 & timeout & 13.32 & 12.66 & 2.84 & 5.68 & 6.42 & 6.47\\
9 & timeout & 25.95 & 22.4 & 5.05 & 10.28 & 11.09 & 12.15\\
10 & timeout & 41.67 & 38.12 & 5.19 & 10.77 & 12.28 & 13.40\\
\hline
\multicolumn{8}{|l|}{\multirow{2}{*}{(P)~projected~intersection~curve of
surfaces with 8bit-coefficients}}\\
\multicolumn{8}{|c|}{} \\
\hline
 & \textsc{BS} & \textsc{BS{\scriptsize +bstr}} & \textsc{BS{\scriptsize +bstr+comb}} & \textsc{BS{\scriptsize +all}} & \textsc{BS{\scriptsize +all}} & \textsc{Isolate} & \textsc{LGP} \\
 degrees & \multicolumn{3}{c||}{GPU} & GPU & CPU & Maple & Maple \\
\hline
$3 \cdot 3$ & 0.10 & 0.11 & 0.11 & 0.08 & 0.14 & 0.18 & 0.14\\
$4 \cdot 4$ & 0.72 & 0.46 & 0.21 & 0.07 & 0.15 & 0.18 & 0.16\\
$5 \cdot 5$ & 98.16 & 27.09 & 1.92 & 1.00 & 2.36 & 3.25 & 3.19\\
$6 \cdot 6$ & timeout & 48.52 & 9.98 & 1.40 & 2.50 & 3.17 & 3.60\\
$7 \cdot 7$ & timeout & timeout & 94.75 & 19.90 & 27.73 & 29.38 & 29.53\\
$8 \cdot 8$ & timeout & timeout & 377.85 & 21.86 & 32.75 & 46.02 & 74.17\\
\hline
\end{tabularx}
\end{minipage}
\caption{Running times (in sec) for solving families of bivariate systems $f = f_y = 0$; \textbf{timeout:} algorithm timed out ($>$~600~sec)}
\label{tbl:app:bs}
\end{table}

\newpage
\section{Further experiments for analysing curves}
\label{asec:ca}

\begin{table}[H]
\sffamily\small
\centering
\begin{tabularx}{\linewidth}{ |>{\hsize=1.4\hsize}R| |>{\hsize=0.8\hsize}R| |>{\hsize=0.8\hsize}R||>{\hsize=1\hsize}R|>{\hsize=1\hsize}R| |>{\hsize=1\hsize\bfseries}R| } 

\hline
\multicolumn{6}{|l|}{\multirow{2}{*}{(R)~sets of five random dense curves}}\\
\multicolumn{6}{|c|}{} \\
\hline
degree, bits & \normalfont\textsc{BS{\scriptsize +all}} & \normalfont\textsc{Ak\_2} & \normalfont\slowca & \normalfont\fastca & \normalfont\ca \\
\hline
06, \hphantom{00}10 & 0.20 & 0.37 & 0.71 & 0.07 & 0.14\\
06, \hphantom{0}128 & 0.26 & 0.35 & 0.62 & 0.10 & 0.15\\
06, \hphantom{0}512 & 0.43 & 0.56 & 1.15 & 0.17 & 0.29\\
06, 2048 & 1.50 & 1.74 & 4.25 & 0.47 & 0.98\\
\hline
09, \hphantom{00}10 & 0.36 & 0.66 & 1.50 & 0.29 & 0.23\\
09, \hphantom{0}128 & 0.45 & 0.58 & 1.21 & 0.23 & 0.29\\
09, \hphantom{0}512 & 0.88 & 1.00 & 2.38 & 0.60 & 0.57\\
09, 2048 & 3.75 & 3.48 & 10.61 & 2.03 & 2.16\\
\hline
12, \hphantom{00}10 & 1.07 & 1.74 & 4.54 & 0.62 & 0.65\\
12, \hphantom{0}128 & 1.32 & 1.45 & 3.51 & 0.66 & 0.82\\
12, \hphantom{0}512 & 2.45 & 2.52 & 7.37 & 1.13 & 1.49\\
12, 2048 & 11.14 & 10.01 & 33.72 & 3.83 & 6.95\\
\hline
15, \hphantom{00}10 & 1.55 & 2.15 & 5.81 & 0.96 & 0.92\\
15, \hphantom{0}128 & 2.01 & 1.94 & 4.92 & 1.27 & 1.20\\
15, \hphantom{0}512 & 3.95 & 3.53 & 11.16 & 1.91 & 2.46\\
15, 2048 & 19.89 & 16.86 & 54.58 & 7.74 & 13.24\\
\hline
\hline
\multicolumn{6}{|l|}{\multirow{2}{*}{(R)~sets of five random sparse curves}}\\
\multicolumn{6}{|c|}{} \\
\hline
degree, bits & \normalfont\textsc{BS{\scriptsize +all}} & \normalfont\textsc{Ak\_2} & \normalfont\slowca & \normalfont\fastca & \normalfont\ca \\
\hline
06, \hphantom{00}10 & 0.10 & 0.22 & 0.25 & 0.06 & 0.07\\
06, \hphantom{0}128 & 0.11 & 0.23 & 0.26 & 0.08 & 0.08\\
06, \hphantom{0}512 & 0.20 & 0.34 & 0.42 & 0.12 & 0.13\\
06, 2048 & 0.58 & 1.07 & 1.39 & 0.42 & 0.36\\
\hline
09, \hphantom{00}10 & 0.18 & 1.05 & 0.54 & 0.20 & 0.11\\
09, \hphantom{0}128 & 0.22 & 1.00 & 0.48 & 0.27 & 0.13\\
09, \hphantom{0}512 & 0.37 & 1.30 & 0.78 & 0.39 & 0.20\\
09, 2048 & 1.21 & 4.46 & 2.79 & 1.38 & 0.68\\
\hline
12, \hphantom{00}10 & 0.28 & 1.62 & 0.88 & 0.21 & 0.17\\
12, \hphantom{0}128 & 0.36 & 1.62 & 0.93 & 0.25 & 0.22\\
12, \hphantom{0}512 & 0.66 & 2.45 & 1.73 & 0.47 & 0.42\\
12, 2048 & 2.70 & 8.49 & 7.23 & 1.89 & 1.94\\
\hline
15, \hphantom{00}10 & 1.00 & 3.37 & 3.03 & 0.71 & 0.59\\
15, \hphantom{0}128 & 1.25 & 3.87 & 3.10 & 0.99 & 0.63\\
15, \hphantom{0}512 & 2.40 & 5.65 & 5.88 & 1.59 & 1.22\\
15, 2048 & 10.97 & 22.78 & 24.85 & 5.47 & 5.46\\

\hline
\end{tabularx}
\caption{Running times (in sec) for analyses of random algebraic curves}
\label{tbl:app:random}
\end{table}

\newpage

\begin{table}[H]
\sffamily\small
\centering
\begin{tabularx}{\linewidth}{ |>{\hsize=1.4\hsize}R| |>{\hsize=0.8\hsize}R| |>{\hsize=0.8\hsize}R||>{\hsize=1\hsize}R|>{\hsize=1\hsize}R| |>{\hsize=1\hsize\bfseries}R| } 

\hline
\multicolumn{6}{|l|}{\multirow{2}{*}{(I)~sets of five interpolated curves through points on a grid}}\\
\multicolumn{6}{|c|}{} \\
\hline
degree & \normalfont\textsc{BS{\scriptsize +all}} & \normalfont\textsc{Ak\_2} & \normalfont\slowca & \normalfont\fastca & \normalfont\ca \\
\hline

 5 & 0.27 & 0.51 & 0.79 & 0.18 & 0.20\\
 6 & 0.59 & 0.87 & 1.53 & 0.31 & 0.37\\
 7 & 1.16 & 1.69 & 2.98 & 0.49 & 0.73\\
 8 & 1.98 & 2.88 & 5.39 & 1.09 & 1.19\\
 9 & 3.70 & 4.98 & 9.49 & 1.59 & 2.37\\
10 & 5.46 & 7.62 & 15.89 & 3.36 & 3.22\\
11 & 10.94 & 13.52 & 28.99 & 5.51 & 6.57\\
12 & 23.09 & 27.56 & 57.91 & 12.37 & 13.61\\
13 & 45.54 & 46.90 & 113.87 & 18.20 & 26.26\\
14 & 101.96 & 88.76 & 219.89 & 43.99 & 56.47\\
15 & 214.54 & 160.36 & 451.29 & 69.20 & 114.63\\
16 & 410.46 & 312.27 & timeout & 69.65 & 236.39\\

\hline
\multicolumn{6}{|l|}{\multirow{2}{*}{(S)~sets of five parameterized curves on a sphere with 16bit-coefficients}}\\
\multicolumn{6}{|c|}{} \\
\hline
degree & \normalfont\textsc{BS{\scriptsize +all}} & \normalfont\textsc{Ak\_2} & \normalfont\slowca & \normalfont\fastca & \normalfont\ca \\
\hline
 1 & 0.09 & 0.11 & 0.18 & 0.03 & 0.07\\
 2 & 0.21 & 0.34 & 0.68 & 0.08 & 0.17\\
 3 & 0.39 & 0.70 & 1.51 & 0.29 & 0.26\\
 4 & 1.03 & 2.43 & 4.73 & 0.59 & 0.71\\
 5 & 1.93 & 5.99 & 10.17 & 0.98 & 1.33\\
 6 & 3.00 & 12.62 & 16.12 & 1.97 & 1.98\\
 7 & 11.52 & 16.35 & 49.50 & 12.95 & 7.42\\
 8 & 22.52 & 28.28 & 84.85 & 22.87 & 14.04\\
 9 & 30.87 & 39.74 & 119.61 & 27.49 & 21.37\\
10 & 46.32 & 53.28 & 154.56 & 27.91 & 28.16\\

\hline
\multicolumn{6}{|l|}{\multirow{2}{*}{(T)~sets of five curves with a vertically translated copy}}\\
\multicolumn{6}{|c|}{} \\
\hline
degree & \normalfont\textsc{BS{\scriptsize +all}} & \normalfont\textsc{Ak\_2} & \normalfont\slowca & \normalfont\fastca & \normalfont\ca \\
\hline
 5 & 0.93 & 5.72 & 5.85 & 0.55 & 0.53\\
 6 & 1.32 & 12.69 & 8.59 & 0.77 & 0.67\\
 7 & 2.15 & 29.40 & 13.27 & 1.22 & 1.07\\
 8 & 2.84 & 66.13 & 16.74 & 2.03 & 1.27\\
 9 & 5.05 & 134.75 & 27.93 & 5.39 & 2.23\\
10 & 5.19 & 286.69 & 29.27 & 5.71 & 2.30\\

\hline
\multicolumn{6}{|l|}{\multirow{2}{*}{(P)~projected~intersection~curve of
surfaces with 8bit-coefficients}}\\
\multicolumn{6}{|c|}{} \\
\hline
degree(s) & \normalfont\textsc{BS{\scriptsize +all}} & \normalfont\textsc{Ak\_2} & \normalfont\slowca & \normalfont\fastca & \normalfont\ca \\
\hline

$3\cdot3$ & 0.08 & 0.15 & 0.36 & 0.05 & 0.06\\
$4\cdot4$ & 0.21 & 0.67 & 1.81 & 0.35 & 0.12\\
$5\cdot5$ & 1.00 & 3.94 & 6.87 & 1.33 & 0.55\\
$6\cdot6$ & 1.40 & 220.02 & 383.45 & 2.57 & 0.68\\
$7\cdot7$ & 19.90 & timeout & 84.74 & 7.11 & 3.70\\
$8\cdot8$ & 21.86 & timeout & 117.57 & 19.56 & 6.17\\

\hline
\end{tabularx}
\caption{Running times (in sec) for analyses of algebraic curves of various families; \textbf{timeout:} algorithm timed out ($>$~600~sec)}
\label{tbl:app:misc}
\end{table}

\begin{table}[H]
\sffamily\small
\centering
\begin{tabularx}{\linewidth}{ |>{\hsize=1.6\hsize}R| |>{\hsize=0.7\hsize}R| |>{\hsize=0.7\hsize}R||>{\hsize=1\hsize}R|>{\hsize=1\hsize}R| |>{\hsize=1\hsize\bfseries}R| } 

\hline
\multicolumn{6}{|l|}{\multirow{2}{*}{(G)~random~singular~and~non-singular~curves}}\\
\multicolumn{6}{|c|}{} \\
\hline
type \hfill degree, bits & \normalfont\textsc{BS{\scriptsize +all}} & \normalfont\textsc{Ak\_2} & \normalfont\slowca & \normalfont\fastca & \normalfont\ca \\
\hline
non-singular            \hfill 20,	160	&2.76	&2.15	&6.47	&0.84	&1.27\\              
singular-$f$-$f_z$      \hfill 20,	161	&4.82	&109.31	&16.59	&1.34	&1.43\\          
singular-$f$-$g$        \hfill 20,	160	&4.56	&115.96	&16.17	&2.36	&1.11\\          
\hline                                                                                
non-singular            \hfill 30,	199	&19.26	&12.51	&45.09	&5.08	&9.30\\           
singular-$f$-$f_z$      \hfill 30,	201	&20.34	&timeout	&60.45	&9.39	&5.32\\        
singular-$f$-$g$        \hfill 30,	198	&29.89	&timeout	&90.79	&12.22	&5.38\\      
\hline                                                        
non-singular            \hfill 42,	237	&56.57	&40.66	&133.12	&23.27	&35.80\\        
singular-$f$-$f_z$      \hfill 42,	238	&64.24	&timeout	&372.99	&52.27	&25.50\\      
singular-$f$-$g$        \hfill 42,	237	&122.20	&timeout	&419.16	&39.55	&18.77\\     
\hline                                                        
non-singular            \hfill 56,	284	&367.99	&161.68	&timeout	&timeout	&129.88\\
singular-$f$-$f_z$      \hfill 56,	290	&214.05	&timeout	&timeout	&187.64	&121.79\\
singular-$f$-$g$        \hfill 56,	280	&timeout	&timeout	&timeout	&136.64	&77.53\\ 

\hline
\multicolumn{6}{|l|}{\multirow{2}{*}{(X)~special~curves (see Table \ref{tbl:app:special-desc} in~\ref{asec:instances} for descriptions)}}\\
\multicolumn{6}{|c|}{} \\
\hline
curve & \normalfont\textsc{BS{\scriptsize +all}} & \normalfont\textsc{Ak\_2} & \normalfont\slowca & \normalfont\fastca & \normalfont\ca \\
\hline

13\_sings\_9 & 0.97 & 2.66 & 3.74 & 0.22 & 0.61\\
FTT\_5\_4\_4 & 20.51 & timeout & 32.07 & 95.03 & 27.81\\
L4\_circles & 0.74 & 6.63 & 12.41 & 0.64 & 0.45\\
L6\_circles & 2.60 & 171.86 & 108.46 & 1.61 & 1.62\\
SA\_2\_4\_eps & 0.44 & 53.96 & 2.35 & 1.17 & 0.29\\
SA\_4\_4\_eps & 2.01 & 122.30 & 11.96 & 3.92 & 2.00\\
challenge\_12 & 7.35 & timeout & 16.11 & 64.75 & 12.50\\
challenge\_12\_1 & 19.17 & timeout & 48.95 & 185.55 & 35.65\\
compact\_surf & 4.06 & 81.56 & 19.66 & 7.43 & 5.31\\
cov\_sol\_20 & 5.77 & 43.40 & 14.06 & 4.22 & 2.41\\
curve24 & 8.22 & 38.22 & 27.58 & 8.36 & 3.54\\
curve\_issac & 0.88 & 2.63 & 5.46 & 0.33 & 0.37\\
cusps\_and\_flexes & 0.63 & 2.09 & 2.97 & 0.57 & 0.44\\
degree\_7\_surf & 7.74 & timeout & 57.41 & 6.23 & 4.19\\
dfold\_10\_6 & 1.55 & 35.40 & 10.74 & 8.97 & 0.90\\
grid\_deg\_10 & 1.20 & 1.55 & 3.19 & 1.18 & 0.73\\
huge\_cusp & 6.44 & 17.88 & 19.09 & 3.34 & 4.82\\
mignotte\_xy & 243.16 & timeout & 276.89 & 199.59 & 128.05\\
spider & 46.47 & timeout & 200.61 & 22.34 & 21.03\\
swinnerton\_dyer & 5.28 & 347.28 & 43.78 & 13.04 & 6.97\\
ten\_circles & 1.33 & 22.77 & 11.84 & 4.26 & 0.86\\

\hline
\end{tabularx}
\caption{Running times (in sec) for analyses of generated and special algebraic curves; \textbf{timeout:} algorithm timed out ($>$~600~sec)}
\label{tbl:app:special}
\end{table}

\end{appendix}

\end{document}